\newtheorem{theorem}{Theorem}[section]
\newtheorem{lemma}[theorem]{Lemma}
\newtheorem{proposition}[theorem]{Proposition}
\newcommand{\comment}[1]{}
\newcommand{\tildeO}{\widetilde{O}}
\newcommand{\cI}{\mathcal{I}}
\newcommand{\cJ}{\mathcal{J}}
\newcommand{\cS}{\mathcal{S}}
\newcommand{\cT}{\mathcal{T}}
\newcommand{\cB}{\mathcal{B}}
\newcommand{\cX}{\mathcal{X}}
\newcommand{\cY}{\mathcal{Y}}
\newcommand{\cK}{\mathcal{K}}
\newcommand{\cR}{\mathcal{R}}
\newcommand{\cQ}{\mathcal{Q}}
\newcommand{\cG}{\mathcal{G}}
\newcommand{\ED}{\textrm{SLOW-GED}}
\newcommand{\ed}{\Delta_\textrm{edit}}
\newcommand{\cost}{\textrm{cost}}
\newcommand{\displace}{\textrm{disp}}
\newcommand{\ACCEPT}{{\bf\textsc{accept}}}
\newcommand{\REJECT}{{\bf\textsc{reject}}}
\newcommand{\CLOSE}{{\bf\textsc{close}}}
\newcommand{\FAR}{{\bf\textsc{far}}}
\newcommand{\ZoomIn}{\textrm{ZoomIn}}
\newcommand{\Intervals}{\textrm{Intervals}}
\newcommand{\SparseSample}{\textrm{SparseSample}}
\newcommand{\Sparse}{\textrm{Sparse}}
\newcommand{\Bbelow}{\cB^\textrm{below}}
\newcommand{\Bdense}{\cB^\textrm{dense}}
\newcommand{\APM}{\textrm{APM}}
\newcommand{\Preprocess}{\textrm{Preprocess}}
\newcommand{\ProcessDense}{\textrm{ProcessDense}}
\newcommand{\Enumerate}{\textrm{EnumerateClose}}
\newcommand{\candidate}[2]{\langle #1; #2\rangle}
\newcommand{\Round}{\textrm{Round}}
\newcommand{\FAED}{\textrm{\bf{FAST-ED-UB}}}
\newcommand{\GAPED}{\textrm{\bf {GAP-ED}}}
\newcommand{\MAIN}{\textrm{FAST-GED}}
\newcommand{\ncost}{\textrm{ncost}}
\newcommand{\editd}{d_\textrm{edit}}
\newcommand{\Ht}{{\widetilde{H}}}
\newcommand{\startv}{{\textrm{source}}}
\newcommand{\finalv}{{\textrm{sink}}}
\newcommand{\qual}{q}
\newcommand{\epssub}[1]{{\varepsilon(#1)}}
\newcommand{\INDUCED}{\textrm{InducedBoxes}}
\newcommand{\winners}{\mathcal{M}}
\newcommand{\Gtt}{{\overline{G}}}
\newcommand{\SG}{\mathbf{SG}}
\newcommand{\SSE}{\mathbf{SS}}
\newcommand{\PD}{\mathbf{PD}}
\newcommand{\SR}{\mathbf{SR}}
\newcommand{\gapcondition}{\textrm{gap-condition}}
\newcommand{\pwrround}[1]{\lfloor #1 \rfloor_2}
\newcommand{\cmt}[1]{ $//$ {\em #1} $//$ }
\newcommand{\BSG}{B_\mathrm{SG}}
\newcommand{\BSS}{B_\mathrm{SS}}
\newcommand{\BPD}{B_\mathrm{PD}}
\begin{document}


\setcounter{page}{0}

\title{Constant factor approximations to  edit distance on far input pairs in nearly linear time}

\author[1]{Michal Kouck{\'{y}}\thanks{Email: koucky@iuuk.mff.cuni.cz. The research leading to these results has received funding from the European Research Council under the European Union's Seventh Framework Programme (FP/2007-2013)/ERC Grant Agreement no. 616787.
Partially supported by the Grant Agency of the Czech Republic under the grant agreement no. 19-27871X.}}
\author[2]{Michael Saks\thanks{Email: msaks30@gmail.com. Supported in part by Simons Foundation under award 332622.}}
\affil[1]{Computer Science Institute of Charles University,
Malostransk{\'e}  n{\'a}m\v{e}st\'{\i} 25,
118 00 Praha 1, Czech Republic}
\affil[2]{Department of Mathematics, Rutgers University, Piscataway, NJ, USA}


\date{}

\maketitle

\begin{abstract}
For any $T \geq 1$, there are constants $R=R(T) \geq 1$ and  $\zeta=\zeta(T)>0$ and a randomized algorithm
that takes as input an integer $n$ and  two strings $x,y$ of length at most $n$, and  runs in time $O(n^{1+\frac{1}{T}})$ and outputs an upper bound $U$
on the edit distance of $\editd(x,y)$ that with high probability,  satisfies $U \leq  R(\editd(x,y)+n^{1-\zeta})$.  In particular, on any input with
$\editd(x,y) \geq n^{1-\zeta}$ the algorithm outputs a constant factor approximation with high probability.  A similar result has been proven
independently by Brakensiek and Rubinstein~\cite{BR19}.
\end{abstract}



\thispagestyle{empty}
\newpage

\section{Introduction}
\label{sec:intro}
\noindent

The \emph{edit distance} (or \emph{Levenshtein distance})~\cite{Lev65} between strings $x,y$,
denoted by $\editd(x,y)$, is the minimum number of character insertions, deletions, and substitutions needed to convert $x$ into $y$. 
It was recently shown independently that edit distance can be approximated within a constant factor in truly subquadratic time in the quantum computation model~\cite{BEGHS18,BEGHS18A}.
and in the classical model \cite{CDGKS18,CDGKS-arxiv}.  The running time for a classical algorithm obtained
in \cite{CDGKS18,CDGKS-arxiv}  is $\widetilde{O}(n^{12/7})$, which was improved by Andoni \cite{And18} to   $O(n^{3/2+\epsilon})$.

This raises the natural question: what is the best possible running time of a constant factor approximation classical algorithm.
We make progress on this problem by developing a nearly linear time algorithm that gives a constant factor approximation when restricted to inputs whose edit distance is {\em not too small}:

\begin{theorem}\label{thm:main}
For every $T \geq 1$ there are constants $\zeta=\zeta(T)$ and $R=R(T)$ and a randomized algorithm $\FAED^{T}$ that takes as input an integer $n$ and
 two strings $x$ and $y$, with $|x|,|y| \leq n$,  over an (arbitrary) alphabet $\Sigma$, and runs in time 
$\widetilde{O}(n^{1+\frac{1}{T}})$ and outputs an upper bound  $U$ on $\editd(x,y)$, such that 
with probability at least $1-1/n$, $U \leq R\cdot (\editd(x,y)+n^{1-\zeta})$.
\end{theorem}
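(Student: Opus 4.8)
The plan is to construct the family $\{\FAED^T\}_{T \geq 1}$ by induction on $T$, using $\FAED^{T-1}$ as a subroutine inside $\FAED^T$. Throughout, the underlying object is the \emph{alignment graph} of a pair of strings: the grid DAG whose vertices are the positions $(i,j)$, with unit-cost horizontal and vertical edges (insertions and deletions) and diagonal edges of cost $0$ or $1$ (matches or substitutions), so that $\editd(x,y)$ is the cost of a cheapest $\source$-to-$\sink$ path. The base case $T=1$ is the textbook $O(n^2)$ dynamic program, which is exact (so $R(1)=1$ and the additive term is irrelevant) and runs within the budget $\tildeO(n^{1+1/1})$. The whole content is the inductive step, and its output must always be a genuine \emph{upper} bound on $\editd(x,y)$; since every quantity the algorithm ever manipulates will be the cost of an actual partial alignment, or an overestimate of one, this holds deterministically, and only the \emph{quality} of the bound is probabilistic.

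For the inductive step I would use a one-dimensional block decomposition. Fix $m = m(T)$ (think $m \approx n^{1/T}$) and cut $x$ into consecutive blocks $x_1,\dots,x_m$ of length $\approx n/m$. Any alignment of $x$ to $y$ induces a decomposition of an interval of $y$ into consecutive windows $w_1,\dots,w_m$ with $x_i$ aligned to $w_i$, so
\[
  \editd(x,y) \;=\; \min \sum_{i=1}^m \editd(x_i, w_i),
\]
the minimum over all such window decompositions. This minimum is a shortest-path computation in an auxiliary DAG with $O(mn)$ vertices (a vertex $(i,j)$ meaning ``$x_1\cdots x_i$ has been aligned to $y[1..j]$''), in which the edge from $(i-1,j')$ to $(i,j)$ has weight $\editd(x_i, y[j'{+}1..j])$. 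Two facts make this tractable. First, as a function of $(j',j)$ the quantity $\editd(x_i, y[j'{+}1..j])$ is Monge (moving a window endpoint by one changes the distance by at most one), so although a layer of the auxiliary DAG has $\Theta(n^2)$ edges, the layer-by-layer shortest-path update can be done in $\tildeO(n)$ time by the standard near-linear-time machinery for online Monge matrix-vector products; over $m$ layers this is $\tildeO(mn) = \tildeO(n^{1+1/T})$, within budget (using only rounded, approximate distances perturbs the Monge property in a controlled way, a routine technicality). Second, we never need the true edge weights: a constant-factor overestimate with additive slack on the scale of the per-block error budget suffices.

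The remaining task is to produce, for each block $x_i$, certified upper bounds on $\editd(x_i, w)$ for the windows $w$ that matter. A window can be ignored (assigned weight $+\infty$, or the trivial bound $|x_i|$ via ``delete the block'') unless its edit distance to $x_i$ is below a threshold $\theta = \theta(T)$ chosen so that $m\theta \lesssim n^{1-\zeta}$, since a block-window that is ``far'' inside every decomposition it participates in cannot be profitably used by the global optimum --- this is where the ``far input pairs'' hypothesis and the additive $n^{1-\zeta}$ term do their work, and where one must check that rerouting an optimal alignment around its ``far'' blocks costs only $O(\editd(x,y)) + n^{1-\zeta}$ while keeping the output a valid upper bound. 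For the surviving ``close'' candidates one prunes cheaply first: a random sample of $\approx n^{\zeta}\log n$ positions/diagonals of $(x_i,w)$ certifies with high probability that $\editd(x_i,w)$ is large whenever it is, because for a far pair the ``signal'' is spread out; and a combinatorial argument (periodicity/clustering of close matches: if $x_i$ matches two well-separated windows, those windows are themselves close, hence related by a cheap local shift) reduces the number of genuinely distinct close candidates per block to $n^{o(1)}$. For each of those one recursively calls $\FAED^{T-1}$ on the instance $(x_i, w)$, of size $\approx n/m$; one such call costs $\tildeO((n/m)^{1+1/(T-1)})$, which for $m \approx n^{1/T}$ is $\tildeO(n)$, so the total recursion cost is $n^{o(1)} \cdot m \cdot \tildeO(n) = \tildeO(n^{1+1/T})$. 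Balancing the sample size, $\theta$, and $m$ against these budgets is what pins down $\zeta(T)$ (decreasing in $T$, roughly like $1/T$); since each recursion level multiplies the approximation factor by a constant, $R(T)$ is a constant to the power $\Theta(T)$, and a union bound over the $\mathrm{poly}(n)$ random samples taken across all levels --- each made to fail with probability $n^{-\Omega(1)}$ by the extra $\log n$ factor --- gives overall success probability $\geq 1 - 1/n$.

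The step I expect to be the real obstacle is the reduction in the previous paragraph: simultaneously (i) proving that it is safe to discard every block-window whose edit distance exceeds $\theta$, i.e.\ that an optimal alignment can be modified to avoid all ``far'' blocks at a cost absorbed by $O(\editd(x,y)) + n^{1-\zeta}$ while the resulting bound stays a valid \emph{upper} bound, and (ii) bounding the number of surviving close candidates per block by $n^{o(1)}$, which requires understanding the periodic structure forced on $x_i$ and on $y$ when $x_i$ has many near-matches. Everything else --- the Monge shortest-path combination, the base case, and the union bound --- is by comparison routine, though laying out the bookkeeping so that all the exponents close is delicate.
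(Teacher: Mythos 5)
There is a genuine gap, and it sits exactly where you predicted: the claim that, after pruning, each block $x_i$ has only $n^{o(1)}$ ``genuinely distinct'' close candidate windows. This is false in general: for periodic inputs (e.g.\ $x$ and $y$ both close to a repetition of a short pattern) a block can have $\Theta(n)$ windows $w$ with $\editd(x_i,w)\le\theta|x_i|$, and your periodicity observation (``two well-separated close windows are themselves close'') does not shrink their number --- it only says they cluster. The known way to exploit that clustering, and the heart of the paper, is the dense/sparse dichotomy of CDGKS which this paper iterates through a hierarchy of widths: for a \emph{dense} block one does \emph{not} recurse on each candidate but picks a single pivot and certifies an entire family of block-window boxes at once via the triangle inequality, while recursion (the analogue of your call to $\FAED^{T-1}$) happens only on a small random sample of \emph{sparse} sub-blocks, whose matches serve as ``markers'' to localize candidate windows; making this sound requires the covering/completeness argument for the certified-box shortcut graph (the paper's Completeness of $\Bbelow$), none of which is present or replaced in your proposal. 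Relatedly, your cheap pre-pruning step --- certifying that a pair is far from a random sample of $\approx n^{\zeta}\log n$ positions/diagonals because ``the signal is spread out'' --- has no justification: edit distance admits no such low-sample one-sided tester in general (alignments can hide all sampled positions), and the paper's sampling is of a different nature entirely (sampling candidate windows $J$ to estimate how many are close to a fixed $I$, i.e.\ density estimation, not testing a single pair).

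Two further points in your accounting do not close as stated. First, the single-threshold pruning: if you replace every block-window pair with $\editd > \theta$ by the trivial bound $|x_i|=n/m$, then a block where the optimum pays just above $\theta$ is inflated by a factor $\approx (n/m)/\theta$, and the total inflation is not $O(\editd(x,y))+n^{1-\zeta}$ unless $\theta\approx n/m$ (i.e.\ no pruning); this is why the paper works at all scales simultaneously, classifying candidates at $\log(1/\theta)$ levels $\epssub{i}$ rather than with one cutoff. Second, the Monge/SMAWK combination step is both shaky (total monotonicity is not preserved under constant-factor approximate weights, so it is not a ``routine technicality'') and unnecessary --- the paper combines certified boxes by an ordinary shortest-path computation in a shortcut graph, which needs no Monge structure. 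Your outer scaffolding (iterating a speed-up, base case $T=1$ exact DP, geometric thresholds plus a union bound to go from a gap algorithm to an upper-bound algorithm, and the validity-of-upper-bound invariant) does match the paper's reduction, but the inductive step as proposed does not constitute a proof.
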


In particular, on any input $x,y$ with $\editd(x,y) \geq n^{1-\zeta}$ the algorithm gives a constant factor approximation.
The additive $n^{1-\zeta}$ term arises from some technical limitations in our algorithm and analysis, but since known algorithms for exact edit distance problem run faster on instances $x,y$  with small edit distance (\cite{LMS98,UKK85}) 
we  expect that it should be possible
to extend our result to give a nearly linear constant approximation algorithm for all ranges of edit distance. 

Brakensiak and Rubinstein~\cite{BR19} independently obtained essentially the same theorem.  While both our work and theirs builds on the techniques of~\cite{CDGKS18,CDGKS-arxiv,BEGHS18,BEGHS18A}, the algorithms have quite different structure.

\noindent
{\bf Other prior work}   (quoted from~\cite{CDGKS18}.)
Edit distance can be evaluated exactly in quadratic time  via
dynamic programming (Wagner and Fischer~\cite{WF74}). 
Masek and Paterson\cite{MP80} obtained the first (slightly) sub-quadratic  $O(n^2 /\log n)$ time algorithm, and the current asymptotically fastest algorithm (Grabowski~\cite{G16}) runs in time
 $O(n^2\log \log n /\log^2 n)$.    
Backurs and Indyk \cite{BI15} showed that a {\em truly sub-quadratic algorithm} ($O(n^{2-\delta})$ for some $\delta>0$) would imply a $2^{(1-\gamma)n}$ time algorithm
for CNF-satisfiabilty, contradicting the Strong Exponential Time Hypothesis (SETH). Abboud et al.~\cite{AHWW16} showed that even shaving an arbitrarily large polylog factor from $n^2$ would have the plausible, but apparently 
hard-to-prove, consequence that NEXP does not have non-uniform ${NC}^1$ circuits. For further ``barrier'' results, see~\cite{ABW15,BK15}. 

There is a long line of work on {\em approximating} edit distance.
The exact $O(n+k^2)$ time  algorithm (where $k$ is the edit distance of the input) of Landau {\em et al.}~\cite{LMS98} yields a  linear time $\sqrt{n}$-factor approximation.
This approximation factor was improved, first to $n^{3/7}$~\cite{BJKK04}, then to $n^{1/3+o(1)}$~\cite{BES06} and later to $2^{\widetilde{O}(\sqrt{\log n})}$~\cite{AO09}, all with slightly superlinear runtime. 
Batu {\em et al.}~\cite{BEKMRRS03} provided an $O(n^{1-\alpha})$-approximation algorithm with runtime $O(n^{\max\{\frac{\alpha}{2}, 2\alpha-1\}})$. The strongest result of this type is the $(\log n)^{O(1/\epsilon)}$ factor
approximation (for every $\epsilon>0$) with running time  $n^{1+\epsilon}$ of  Andoni {\em et al.}~\cite{AKO10}.
Abboud and Backurs~\cite{AB17} showed that a truly sub-quadratic deterministic time $1+o(1)$-factor approximation algorithm for  edit distance  would imply new circuit lower bounds. 

Andoni and Nguyen~\cite{AN10} found a randomized algorithm that approximates Ulam distance of two permutations of $\{1,\ldots,n\}$ (edit distance with only insertions and deletions) within a (large) constant factor in time $\tildeO(\sqrt{n}+n/k)$, where $k$ is the Ulam distance of the input; this was improved by Naumovitz {\em et al.}~\cite{NSS17} to a $(1+\varepsilon)$-factor approximation (for any  $\varepsilon>0$) with similar runtime. 


\noindent
{\bf Reduction to a Gap-Algorithm.}
For simplicity we will assume that the bound $n$ on the length $\max(|x|,|y|)$ is a power of  2 and $|x|=|y|=n$.  It is easy  to reduce the general
case to this case: on input $x',y'$, let $n$ be the least power of 2 that is at least $\max(|x'|,|y'|)$ and pad
both $x'$ and $y'$ using a single new symbol to obtain strings $x$, $y$ of length $n$.
It is easy to verify that $\editd(x',y') \leq \editd(x,y) \leq 2\editd(x',y')$, and so it suffices to approximate
$\editd(x,y)$. 

Following a common paradigm for approximation algorithms, our approximation algorithm is built by reducing to a {\em gap algorithm}.
In this paper, we consider randomized gap algorithms for edit distance. 
 These algorithms take as input
$(n,\theta,\delta;x,y)$   
where $n$ is an integral power of 2, $x$ and $y$ are strings of length $n$,
 $\theta \in (0 ,1]$ is a nonnegative power of 1/2 
and $\delta \in (0,1)$.
The triple $(n,\theta,\delta)$ are referred to as the {\em input parameters}
and $x,y$ as the {\em input strings}.   
We say that the algorithm has {\em quality $Q$ with respect to  $(n,\theta,\delta)$} provided that for all
strings $x,y$ of length $n$:
\begin{description}
\item[Gap Algorithm Soundness.]If $\editd(x,y) > Q \theta n$ then the algorithm returns \REJECT.
\item[Gap Algorithm Completeness.] If $\editd(x,y) \leq \theta n$ then the algorithm returns \ACCEPT{} with probability at least $1-\delta$.
\end{description}

We say that the algorithm {\em satisfies $\gapcondition(T,\zeta,Q)$}, where $T,Q \geq 1$ and $\zeta \geq 0$ provided that
for $n$ a power of 2,  and for all $\theta \geq n^{-\zeta}$
\begin{itemize}
\item The algorithm has quality $Q$ with respect to $(n,\theta,\delta)$,
\item The running time of the algorithm on any input $(n,\theta,\delta;x,y)$ is $\widetilde{O}(n^{1+1/T} \log(1/\delta))$ with probability 1.  Here $\widetilde{O}$
hides powers of $\log(n)$ whose exponent may depend on $T$.
\end{itemize}

We will prove:
\begin{theorem}
\label{thm:GUB}
For every $T \geq 1$ there are constants $\zeta=\zeta(T)>0$ and $Q=Q(T) \geq 1$, and a gap-algorithm
$\GAPED^{T}$ that satisfies $\gapcondition(T,\zeta,Q)$.
\end{theorem}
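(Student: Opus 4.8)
The plan is to present $\GAPED^{T}$ as a recursive divide-and-conquer procedure on the \emph{alignment grid} $G_{x,y}$: the DAG with vertices $(i,j)$ for $0\le i,j\le n$, unit-cost rightward and upward edges (deletions and insertions) and diagonal edges $(i-1,j-1)\to(i,j)$ of cost $[x_i\ne y_j]$, whose shortest $\startv$-to-$\sink$ path has length $\editd(x,y)$. To solve the gap problem at scale $n$ we split the $x$-axis into $m=n^{1/T}$ consecutive intervals of length $h=n^{1-1/T}$. The completeness promise only concerns inputs with $\editd(x,y)\le\theta n$, for which every optimal alignment stays within a band of width $O(\theta n)$ of the main diagonal, so on the $y$-axis it suffices to consider, for each $x$-interval $I$, the $y$-windows of length $h$ whose position lies within $O(\theta n)$ of the position of $I$. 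For a pair $(I,J)$, the cost of crossing the box $I\times J$ from a given entry point on its lower/left boundary to a given exit point on its upper/right boundary is itself an instance of the same problem at scale $h$ with a suitably rescaled $\theta$; we obtain (approximate) box costs by recursion with parameter $T$, bottoming out at $\mathrm{polylog}(n)$-size boxes that are solved exactly by dynamic programming. From the box costs we compute the answer by a single shortest-path (min-plus) computation over the coarse DAG whose vertices are the $O\!\big(m\cdot(\theta n/h)\big)$ relevant boundary segments, exploiting the Monge structure of the box cost matrices to keep this step near-linear in the number of segments, and we $\ACCEPT$ iff the resulting value is at most $Q\theta n$.

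Soundness is essentially free: every number the algorithm produces is the length of an actual walk in $G_{x,y}$ (boxes return valid upper bounds, and concatenating box-crossing alignments yields a genuine alignment of $x$ into $y$), so the computed value is always $\ge\editd(x,y)$; hence on inputs with $\editd(x,y)>Q\theta n$ we never $\ACCEPT$. The content is therefore in simultaneously achieving completeness and the running-time bound, and this is where sampling is unavoidable: computing \emph{all} of the $\Theta\!\big(m\cdot(\theta n/h)\big)$ boxes recursively is too expensive, so instead, for each $x$-interval we only recurse on a small, randomly chosen family of candidate $y$-windows, drawn around \emph{anchors} --- short factors of $x$ that occur nearly verbatim in $y$ near the expected offset --- and we may additionally subsample the $x$-intervals themselves. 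The correctness of this shortcut rests on an amortization: if $\editd(x,y)\le\theta n$ then, along an optimal alignment, both the total vertical displacement and the total local cost, each at most $\theta n$, are spread over $m$ intervals, so all but an $\varepsilon$-fraction of $x$-intervals are ``cheap'' and are aligned consistently with a single global displacement; with high probability the sampled candidate windows include, for enough cheap intervals, a window realizing (up to a constant factor and a small additive slack) its contribution, and stitching these together produces an alignment of cost $O(\theta n)+(\text{error})$. Conversely, because sampling only ever \emph{restricts} the set of alignments considered, it cannot create spuriously cheap ones. This brings the number of recursive box calls per level down to $\widetilde O(n^{1/T})$ times a polylogarithmic factor, so the work per level is $\widetilde O(n^{1+1/T})$; as the recursion has $T=O(1)$ levels, the total running time is $\widetilde O(n^{1+1/T})$, with the $\log(1/\delta)$ factor supplied by $O(\log(1/\delta))$ independent repetitions to push the failure probability below $\delta$.

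The argument is then assembled by a finite induction over the $T$ recursion levels, carrying the invariant that a call at level $k$ returns an estimate $\hat c$ with $\editd\le\hat c\le Q_k\editd+a_k$, where the quality $Q_k$ and additive slack $a_k$ each grow by only a constant factor per level (the slack collects the contributions of dropped boxes, rounded displacements, and base-case granularity); after $T$ levels this yields constants $Q=Q(T)$ and $a=a(T)$ and an overall estimate within $Q\cdot\editd(x,y)+a\cdot n^{1-\zeta}$, exactly the shape the reduction to Theorem~\ref{thm:main} consumes, and it also fixes the depth-adjusted value of $\zeta=\zeta(T)$. I expect the main obstacle to be making the sampling step fully rigorous: one must define ``cheap interval'' and the anchor-derived candidate set so that (a) a union bound over the break points of the optimal alignment shows the sampled cheap intervals can be stitched into a genuine low-cost alignment with high probability --- which in turn requires a structural lemma guaranteeing that anchors near the optimal alignment are plentiful --- while (b) the candidate set stays small enough for the time bound. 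Balancing (a) against (b) is precisely what forces the polylogarithmic overheads hidden in $\widetilde O$ and the restriction $\theta\ge n^{-\zeta}$ (equivalently the additive $n^{1-\zeta}$ term): the sampling-plus-rounding error accumulated over the $T$ levels is a constant multiple of $n^{1-\zeta}$ and must stay dominated by $Q\theta n$ for the gap guarantee to hold.
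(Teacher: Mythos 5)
There is a genuine gap, and it sits exactly where you predict ``the main obstacle'' will be: the completeness-plus-time argument for the pruned/sampled box computation. Your plan rests on two claims: (a) when $\editd(x,y)\le\theta n$, all but a small fraction of $x$-intervals are ``cheap'' and aligned ``consistently with a single global displacement,'' and (b) anchor-derived candidate windows (short factors of $x$ occurring nearly verbatim in $y$ near the expected offset) let you recover, for enough cheap intervals, a window realizing their contribution, with only $\widetilde O(n^{1/T})$ recursive box calls per level. Claim (a) is not true in the form you need it (the vertical displacement can drift by up to $\theta n$ across the alignment, so there is no single global offset to sample around), and claim (b) breaks down precisely on repetitive inputs: if the short factors of an $x$-interval occur nearly verbatim in \emph{many} positions of $y$ (periodic-like strings), then either the anchor-generated candidate family is huge (destroying the $\widetilde O(n^{1+1/T})$ per-level bound) or you prune it and lose completeness, and such ``dense'' intervals can constitute essentially all intervals. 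The paper's entire technical core is the mechanism you are missing: a dense/sparse dichotomy in which dense intervals are handled by a pivot-and-triangle-inequality amortization (one computation certifies boxes for all intervals close to the pivot, and the pivots' neighborhoods are disjoint, bounding their number), while sparse intervals are randomly sampled and used as \emph{markers} to discover the right windows at the next larger scale; soundness is kept through certified boxes exactly as you say, but completeness requires the covering argument of Section 4.7, not a union bound over break points of one optimal alignment. Two further unsupported points: the Monge property you invoke for the stitching step is neither established nor likely to hold for recursively \emph{approximated} box costs (the paper instead uses a bespoke shortcut-graph shortest-path routine, APM), and your recursion bookkeeping (boxes of side $n^{1-1/T}$ at every level, bottoming out at polylog size after $T$ levels, with only $n^{1/T}$ box calls per level) does not add up once the candidate windows must be placed on a grid fine enough (spacing $O(\epsilon\cdot\text{width})$) to keep the additive error under control.

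It is also worth noting that the paper's actual route to the theorem is structurally different from a single top-down $T$-level recursion: it proves a speed-up theorem (Theorem 3.1) converting any gap algorithm with time exponent $1+1/T'$ into one with exponent $1+1/(T'+1/6)$, built bottom-up as $k$ iterations over a hierarchy of widths $w_1<\cdots<w_k$ with a recursive neighborhood data structure ($\Enumerate$), and then bootstraps from the exact quadratic algorithm by repeated application. Your plan, even if the sampling step were repaired, would still need an argument of comparable depth for how an interval whose close windows were never individually computed nevertheless ends up covered by certified boxes in the final shortest-path graph; as written, that inference is asserted rather than proved.
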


In Section~\ref{sec:main proof} we present the (routine) construction of the algorithm $\FAED^T$ from $\GAPED^T$, which proves Theorem~\ref{thm:main}.
The focus of the paper is on proving Theorem~\ref{thm:GUB}. 

\subsection{Speed-up routines}
\label{subsec:intro speed-up}  

Our algorithm, like that of~\cite{CDGKS18} is built from a {\em core speed-up algorithm} having access to  an existing
  "slow" gap algorithm. The speed-up algorithm produces a faster gap algorithm, with worse (but still constant) approximation quality,
while making queries to the slow algorithm on pairs of "short" substrings.
Given such a speed-up algorithm, one can build up a sequence of increasingly faster gap
algorithms $A_0,A_1,\ldots,$ where $A_0$ is just the quadratic exact edit distance algorithm, 
and $A_j$ is obtained by using the core  speed-up algorithm with $A_{j-1}$ playing the role of the "slow" algorithm.
If the  core speed-up algorithm involves some free parameters that may be optimized for best performance, this optimization
can be done separately for each $A_j$

The core speed-up algorithm designed in~\cite{CDGKS18}, gives an algorithm $A_1$ that has running time $\widetilde{O}(n^{12/7})$.  The algorithms
$A_j$ are successively faster, but do not get below $n^{\phi}$ where $\phi=1.61...$. 
The core speed-up algorithm we design in this paper gives a sequence of gap-algorithms where the
exponent of $n$ in the run-time converges to 1.

\section{Preliminaries}
\label{sec:preliminaries}

Many definitions and routine claims are adapted (with some modifications) from~\cite{CDGKS-arxiv}. 
The edit distance of  strings $u,v$ is denoted $\editd(u,v)$  and the {\em normalized edit distance} of $u,v$, denoted
$\ed(u,v)$ is defined to be $\editd(u,v)/|u|$. 

Throughout the paper $x$,$y$ denote two input strings of length $n$, where $n$ is a power of 2,  and $z$ denotes the concatenation $xy$. 
 
{\bf Intervals, Decompositions, aligned intervals, and $\delta$-aligned intervals.}
We consider intervals in $\{0,\ldots,2n\}$ which are as usual, subsets consisting of consecutive integers.  The {\em width} of interval $I$, $\mu(I)$ is equal to $\max(I)-\min(I)=|I|-1$.   Most intervals we consider  have width a power of 2. An interval of width $w$ is a $w$-interval.
Intervals index substrings of $z$, where 
$z_I$ denotes the substring indexed by the set $I\setminus\{\min(I)\}$, (Note that  $z_{\min(I)}$ is not part of $z_I$.  In particular, $z=z_{\{0,\ldots,2n\}}$, and $x=z_{\{0,\ldots,n\}}$ and $y=z_{\{n,\ldots,2n\}}$.

A  {\em decomposition} of an interval $I$ is a sequence $I_1, \ldots,I_k$ of intervals with $\min(I_1)=\min(I)$, $\max(I_k)=\max(I)$
and $\min(I_{j+1})=\max(I_j)$ for $j \in \{1,\ldots, k-1\}$.  Note that $z_{I_1},\ldots,z_{I_k}$ partitions the string $z_I$.

Let $w$ be a power of 2 that is at most $n$, and let $\delta$ be a power of 2 that is at most 1.
An interval of width $w$ is {\em aligned} if $\min(I)$ is a multiple of $w$ (and consequently $\max(I)$ is also a multiple of $w$).  
The interval is {\em $\delta$-aligned} if $\min(I)$ is a multiple of $\max(\delta w,1)$ (and consequently so is $\max(I)$).  In particular a 1-aligned interval
is  aligned.  
We define:
\begin{itemize}
\item $\Intervals(w)$ is the set of aligned intervals of width $w$, subsets of $\{0,\dots,n\}$.  
\item $\Intervals(w,\delta)$ to be the set of $\delta$-aligned intervals of width $w$, subsets of $\{0,\dots,2n\}$. 
\item For an interval $I$,  $\Intervals(w;I)=\{I' \in \Intervals(w):I' \subseteq I\}$,
and $\Intervals(w,\delta;I)=\{I' \in \Intervals(w,\delta):I' \subseteq I\}$.
\end{itemize}

Since $n$ and $w$ are powers of 2,
$\Intervals(w)$ is a decomposition of $\{0,\ldots,n\}$.  When we use the notation $\Intervals(w;I)$, $I$ will be an aligned interval
of width a power of 2, so that $\Intervals(w;I)$ is a decomposition of $I$.

{\bf The grid $\{0,\ldots,n\} \times \{0,\ldots,2n\}$, boxes and stacks.}
Consider the grid $\{0,\ldots, n\} \times \{0,\ldots,2n\}$ lying in the coordinate  plane.
For $S \subseteq \{0,\ldots,n\} \times \{0,\ldots,n\}$, the horizontal projection $\pi_H(S)$ is the set of first coordinates of elements of $S$,
and the vertical projection of $S$, $\pi_V(S)$ is the set of second coordinates. 
 
A {\em box} is a set $I\times J \subseteq \{0,\ldots,n\} \times \{0,\ldots,2n\}$ for intervals $I,J$, and it represents 
the pair $x_I,z_J$ of substrings.   Since $I \subseteq \{0,\ldots,n\}$, $z_I=x_I$.  Note that if $J \subseteq \{0,\ldots,n\}$
then $z_I,z_J$ is a pair of substrings of $x$ and if $J \subseteq \{n,\ldots,2n\}$, it is a pair (substring of $x$, substring of $y$).
$I \times J$ is a {\em $w$-box} if $\mu(I)=\mu(J)=w$. 
The lower left hand corner is $(\min(I),\min(J))$ and  the upper right hand corner is $(\max(I),\max(J))$.
Note that $\pi_H(I \times J)=I$ and $\pi_V(I \times J)=J$.
Box $I \times J$ is {\em horizontally aligned} if  $I$ is aligned, and it is  {\em vertically $\delta$-aligned} or simply
{\em $\delta$-aligned} if $J$ is
$\delta$-aligned; we have no need to refer to {\em horizontally $\delta$-aligned} boxes. Box $I \times J$ is {\em square} if $\mu(I)=\mu(J)$.

A {\em stack} is a set of boxes all having the same horizontal projection.  For interval $I$ and set of intervals $\mathcal{J}$,
$I \times \mathcal{J}$ is the stack $\{I \times J:J \in \cJ\}$.


{\bf Grid graphs}.
The {\em grid graph of $z$}, $G_z$, is a directed graph
 with edge costs, having
vertex set $\{0,\ldots,n\} \times \{0,\ldots,2n\}$   and 
 all edges of the form  $(i-1,j)\to (i,j)$ ($H$-edges), $(i,j-1) \to (i,j)$ ($V$-edges)
and $(i-1,j-1) \to (i,j)$ ($D$-edges).  Every H-edge  and V-edge costs 1, and  a D-edge
has cost 1 if $z_i \neq z_j$ and 0 otherwise.  $G_z$ is
acyclic, with edges moving "up and to the right".  A directed path $\tau$
joins a pair of vertices $\startv(\tau)$ and $\finalv(\tau)$ with $\startv(\tau) \leq \finalv(\tau)$.
The {\em box spanned by $\tau$} is the unique minimal box $I \times J$ that contains $\tau$;
this is equal to $\pi_H(\tau) \times \pi_V(\tau)$.  
We say  $\tau$ {\em traverses} $I \times J$ if
$I \times J$ is the box spanned by $\tau$, which is equivalent  to $\startv(\tau)=(\min(I),\min(J))$ and $\finalv(\tau)=(\max(I),\max(J))$.
A {\em traversal} of $I \times J$ is any path that traverses $I \times J$.

For $I \subseteq \pi_H(\tau)$, let $\tau_I$ denote the minimal subpath of $\tau$ whose horizontal projection is $I$.

\noindent
{\bf Cost and normalized cost.}
The {\em cost} of a directed path $\tau$, $\cost(\tau)$ is the sum of the edge costs, and the {\em normalized cost} is 
$\ncost(\tau)=\frac{\cost(\tau)}{\mu(\pi_H(\tau))}$.   The cost of box $I\times J$,  
$\cost(I\times J)$, is the min-cost of a traversal of $I \times J$
and  $\ncost(I\times J)=\frac{1}{\mu(I)}\cost(I \times J)$.  

It is well known (and easy to see) that for any box $I \times J$, a traversal of $I \times J$ corresponds to an {\em alignment} from $a=z_I$ to $b=z_J$, i.e.
a set of character deletions, insertions and substitutions that changes $a$ to $b$, where
an H-edge  $(i-1,j)\to (i,j)$ corresponds to "delete $a_i$", a V-edge  $(i,j-1)\to (i,j)$
corresponds to "insert  $b_j$ between $a_i$ and $a_{i+1}$" and a D-edge $(i-1,j-1)\to (i,j)$  corresponds to
replace $a_i$ by $b_j$, unless they are already equal.  Thus:

\begin{proposition}
\label{prop:def}
The cost of an alignment corresponding to path $\tau$ is $\cost(\tau)$.  Thus for any $I, J \subseteq \{0,\ldots,2n\}$,
$\editd(z_I,z_J)=\cost(I \times J)$.  In particular $\editd(x,y)=\cost(\{0,\ldots,n\} \times \{n,\ldots,2n\})$.
\end{proposition}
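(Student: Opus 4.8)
The plan is to establish the identity $\editd(z_I,z_J)=\cost(I\times J)$ by proving the two inequalities separately; the first of the three displayed sentences is exactly the combinatorial input needed for the direction $\editd\le\cost$. Throughout write $a=z_I$, $b=z_J$, $p=\mu(I)$, $q=\mu(J)$. It is convenient to name a normal form for alignments: call an alignment of $a$ and $b$ \emph{monotone} if it is encoded by an order-preserving partial matching $M\subseteq\{1,\dots,p\}\times\{1,\dots,q\}$, where a position matched by $M$ is kept if the two characters agree and substituted otherwise, each unmatched position of $a$ is deleted, and each unmatched position of $b$ is inserted; its cost is then $(p-|M|)+(q-|M|)+|\{(i,j)\in M:a_i\ne b_j\}|$. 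Monotone alignments of $a,b$ correspond cost for cost to traversals of $I\times J$: from a traversal, its D-edges (shifted back to $\{1,\dots,p\}\times\{1,\dots,q\}$) form such an $M$, each H-edge being a deletion and each V-edge an insertion, so that $p-|M|$, $q-|M|$ and $|\{(i,j)\in M:a_i\ne b_j\}|$ count the H-edges, the V-edges and the cost-$1$ D-edges respectively; conversely a matching $M$ determines a unique monotone lattice path from $(\min(I),\min(J))$ to $(\max(I),\max(J))$. Hence $\cost(I\times J)$ is the minimum cost of a monotone alignment of $a$ and $b$, and it suffices to show this minimum equals $\editd(a,b)$.

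For $\editd(a,b)\le\cost(I\times J)$ I would take a min-cost traversal $\tau$ of $I\times J$ and read its edges in order from $\startv(\tau)$ to $\finalv(\tau)$, emitting for each edge the edit operation attached to it in the discussion preceding the statement (delete on an H-edge, insert on a V-edge, substitute-or-keep on a D-edge). A routine induction on the length of the scanned prefix shows that once the prefix ending at the vertex $(\min(I)+i,\min(J)+j)$ has been scanned, the emitted operations have turned $a$ into $b_1\cdots b_j\,a_{i+1}\cdots a_p$, and that their total cost equals the cost of that prefix of $\tau$. Applied to all of $\tau$ (which ends at $(\max(I),\max(J))$) this produces an alignment of $a$ to $b$ of cost exactly $\cost(\tau)$ --- which is precisely the first displayed sentence --- and therefore $\editd(a,b)\le\cost(\tau)=\cost(I\times J)$, using that $\tau$ was chosen of minimum cost.

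For the reverse inequality I would start from an optimal edit script $\sigma$ taking $a$ to $b$ and rewrite it into monotone form without increasing its cost, by the classical exchange argument: reorder the operations so that they act on positions from left to right, and whenever a character that was inserted (or written by a substitution) is subsequently deleted, cancel the pair or merge the two substitutions, which never creates new operations. What remains is a monotone alignment of $a,b$ of cost at most $\cost(\sigma)=\editd(a,b)$, so $\cost(I\times J)\le\editd(a,b)$. (Equivalently, one may observe that the map sending $(i,j)$ to the cost of the sub-box with corners $(\min(I),\min(J))$ and $(\min(I)+i,\min(J)+j)$ satisfies the Wagner--Fischer recurrence for $\editd(a_{1\dots i},b_{1\dots j})$, its three cases being realized by the H-, V- and D-edges entering a vertex, since $G_z$ is acyclic and hence these sub-box costs are single-source shortest-path distances.) Combining the two inequalities gives $\editd(z_I,z_J)=\cost(I\times J)$, and the final assertion follows at once from the identities $z_{\{0,\dots,n\}}=x$ and $z_{\{n,\dots,2n\}}=y$ recorded in the preliminaries. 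The only place where there is genuine content is this reduction to a monotone alignment (equivalently, correctness of the Wagner--Fischer recurrence); everything else is bookkeeping.
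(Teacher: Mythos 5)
Your proof is correct, and it follows essentially the route the paper has in mind: the paper treats this proposition as well known and gives no proof beyond the preceding paragraph's edge--operation correspondence (H-edge $=$ deletion, V-edge $=$ insertion, D-edge $=$ substitution/keep), which is exactly what you flesh out. Your two-inequality argument --- reading a min-cost traversal as an edit script for $\editd\le\cost$, and normalizing an optimal script to a monotone alignment (equivalently, invoking the Wagner--Fischer recurrence) for $\cost\le\editd$ --- is the standard completion of that sketch, so nothing further is needed.
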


\medskip
{\bf Displacement of a box relative to a path or box.}
The following easy fact (noted in~\cite{CDGKS18}) relates the cost of two boxes having the same horizontal projection:

\begin{proposition} 
\label{prop:sym diff}
For intervals $I,J,J' \subseteq \{0,\ldots,n\}$,
$|\cost(I \times J)-\cost(I \times J')| \leq |J \Delta J'|$, where $\Delta$ denotes symmetric difference.
\end{proposition}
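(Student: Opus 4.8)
The plan is to reduce the claim to the triangle inequality for edit distance. By Proposition~\ref{prop:def}, $\cost(I \times J)=\editd(z_I,z_J)$ and $\cost(I \times J')=\editd(z_I,z_{J'})$, so it suffices to bound $|\editd(z_I,z_J)-\editd(z_I,z_{J'})|$. Edit distance is a metric on strings, in particular symmetric and subadditive along $z_J \to z_I \to z_{J'}$, which gives $|\editd(z_I,z_J)-\editd(z_I,z_{J'})| \le \editd(z_J,z_{J'})$. Hence the proposition follows once we show $\editd(z_J,z_{J'}) \le |J \Delta J'|$.

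To prove $\editd(z_J,z_{J'}) \le |J \Delta J'|$ I would exhibit an explicit alignment. Write $J=\{a,\ldots,b\}$ and $J'=\{a',\ldots,b'\}$, so $z_J=z_{a+1}\cdots z_b$ and $z_{J'}=z_{a'+1}\cdots z_{b'}$. If $J$ and $J'$ meet, set $K=J\cap J'=\{\max(a,a'),\ldots,\min(b,b')\}$; then $z_K=z_{\max(a,a')+1}\cdots z_{\min(b,b')}$ is literally a contiguous substring of both $z_J$ and $z_{J'}$. Transform $z_J$ into $z_{J'}$ by first deleting the prefix $z_{a+1}\cdots z_{\max(a,a')}$ and the suffix $z_{\min(b,b')+1}\cdots z_b$ of $z_J$, arriving at $z_K$, and then inserting the matching prefix $z_{a'+1}\cdots z_{\max(a,a')}$ and suffix $z_{\min(b,b')+1}\cdots z_{b'}$ of $z_{J'}$. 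The number of operations used is $(\max(a,a')-a)+(\max(a,a')-a')+(b-\min(b,b'))+(b'-\min(b,b')) = |a-a'|+|b-b'|$, which equals $|J\Delta J'|$ since overlapping intervals satisfy $|J\Delta J'| = (\max(b,b')-\min(b,b'))+(\max(a,a')-\min(a,a'))$. If instead $J$ and $J'$ are disjoint as sets, just delete all $\mu(J)$ characters of $z_J$ and insert all $\mu(J')$ characters of $z_{J'}$, using $\mu(J)+\mu(J') \le |J|+|J'| = |J\Delta J'|$ operations.

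The only genuine work is the bookkeeping in the second paragraph: one must respect the off-by-one coming from the convention that $z_J$ is indexed by $J\setminus\{\min(J)\}$, and check the handful of sign cases ($a\le a'$ versus $a'\le a$, and similarly for $b,b'$) so that the operation count collapses to $|a-a'|+|b-b'|$ and this is recognized as $|J\Delta J'|$. Everything else — the appeal to Proposition~\ref{prop:def} and to the metric axioms — is immediate.
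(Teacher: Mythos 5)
Your proof is correct, and it is essentially the argument the authors intend: the paper states this proposition without proof (as an easy fact noted in~\cite{CDGKS18}), and your route—identifying $\cost(I\times J)=\editd(z_I,z_J)$ via Proposition~\ref{prop:def}, applying the triangle inequality for $\editd$, and bounding $\editd(z_J,z_{J'})\le|J\,\Delta\,J'|$ by the explicit delete-prefix/suffix, insert-prefix/suffix alignment through $z_{J\cap J'}$ (with the off-by-one convention handled correctly)—is the standard one. The only equally natural alternative is the equivalent path-surgery phrasing: take a min-cost traversal of $I\times J'$ and convert it into a traversal of $I\times J$ by adding or removing at most $|J\,\Delta\,J'|$ unit-cost vertical edges; this is the same bound in grid-graph language rather than string language.
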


 Let $\tau$ be a path whose horizontal projection includes $I$.
The {\em displacement} of the square  box $I \times J$ with respect to $\tau$, $\displace(I \times J,\tau)$  is the smallest $K$  such
that $(\min(I),\min(J))$ is within $K$ vertical units of $\startv(\tau_I)$ and $(\max(I),\max(J))$ is within $K$ vertical units of $\finalv(\tau_I)$.

We make a few easy observations.

\begin{proposition}
\label{prop: path displacement}
Let $\tau$ be a path whose horizontal projection includes $I$ and let $I \times J$ be a box.
Then $\cost(I \times J) \leq \cost(\tau_I)+2\displace(I \times J,\tau)$.
\end{proposition}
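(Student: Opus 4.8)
The plan is to construct an explicit traversal of $I \times J$ by "connecting" the endpoints of $I \times J$ to the subpath $\tau_I$ using cheap vertical segments, and then bound the total cost. Write $\tau_I$ for the minimal subpath of $\tau$ with horizontal projection $I$, so $\startv(\tau_I) = (\min(I), b_0)$ and $\finalv(\tau_I) = (\max(I), b_1)$ for some $b_0 \leq b_1$ in $\{0,\dots,2n\}$. By definition of $\displace(I \times J, \tau) =: K$, we have $|\min(J) - b_0| \leq K$ and $|\max(J) - b_1| \leq K$.

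First I would build a traversal $\sigma$ of $I \times J$ as the concatenation of three pieces: (i) a purely vertical path from $(\min(I),\min(J))$ to $(\min(I), b_0)$, which uses $|\min(J) - b_0| \leq K$ vertical edges, each of cost $1$; (ii) a shifted copy of $\tau_I$ — note that $\tau_I$ goes from column $\min(I)$ to column $\max(I)$, and here I want a path with the same horizontal projection $I$ but whose vertical projection is moved to land at $(\min(I), b_0)$ at the start; since the start point already matches $\startv(\tau_I)$ after piece (i), I simply take $\tau_I$ itself, at cost $\cost(\tau_I)$, ending at $(\max(I), b_1)$; (iii) a purely vertical path from $(\max(I), b_1)$ to $(\max(I), \max(J))$, using $|\max(J) - b_1| \leq K$ vertical edges of cost $1$. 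The concatenation is a valid directed path in $G_z$ (all pieces move up and to the right or straight up), it starts at $(\min(I),\min(J))$ and ends at $(\max(I),\max(J))$, and its horizontal projection is exactly $I$ while its vertical projection is contained in $J$ — so it is a traversal of $I \times J$, provided $J$ is wide enough to contain the detours, which it is since the detours stay between the relevant coordinates by the displacement bound. Hence $\cost(I \times J) \leq \cost(\sigma) \leq \cost(\tau_I) + 2K = \cost(\tau_I) + 2\displace(I \times J, \tau)$.

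The one point requiring a little care — and the main obstacle — is verifying that the vertical detours in pieces (i) and (iii) stay within the box $I \times J$, i.e. that all intermediate vertices have second coordinate in $J$. For piece (i) the coordinates range over the integers between $\min(J)$ and $b_0$; if $b_0 \geq \min(J)$ this interval is $[\min(J), b_0]$, which we need to be a subset of $J$ — but $b_0$ need not lie in $J$ if $b_0 > \max(J)$. In that degenerate case one instead routes piece (i) only up to $\max(J)$ and absorbs the slack, or more cleanly observes that since $I \times J$ is square of the same width as $\tau_I$, one can first truncate/extend appropriately; alternatively, one sidesteps this entirely by invoking Proposition~\ref{prop:sym diff}-style reasoning on a nearby box. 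I would handle it by noting that if the naive vertical segment would exit $J$, the displacement $K$ is at least the width of $J$, and then the trivial bound $\cost(I \times J) \leq \mu(I) + \mu(J) = 2\mu(I) \leq 2K$ already suffices (ignoring $\cost(\tau_I) \geq 0$). So the argument splits into the generic case, where the three-piece construction works verbatim, and the degenerate case, where the crude bound is enough; in both the inequality $\cost(I \times J) \leq \cost(\tau_I) + 2\displace(I \times J,\tau)$ holds.
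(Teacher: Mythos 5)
There is a genuine gap, and it sits exactly where you flag the ``point requiring a little care'' --- but the case you single out is not the problematic one. Your three-piece construction needs both connectors to be \emph{upward} vertical paths: piece (i) requires $\min(J)\leq b_0$ and piece (iii) requires $b_1\leq\max(J)$. Nothing rules out $b_0<\min(J)$ (i.e.\ $\tau_I$ starts below the lower-left corner of $I\times J$) or $b_1>\max(J)$ ($\tau_I$ ends above the upper-right corner). Since every edge of $G_z$ increases both coordinates, in those cases the connector would have to move \emph{downward}, and no such directed path exists (and even disregarding direction, the concatenation would leave $I\times J$, so it would not be a traversal). Your fallback does not cover these cases: you argue that if the vertical segment would exit $J$ then $K=\displace(I\times J,\tau)\geq\mu(J)$, but that is true only for the overshoot case $b_0>\max(J)$ that you considered. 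If, say, $b_0=\min(J)-1$ and $b_1=\max(J)-1$, then $K=1$, your construction cannot be carried out, and the crude bound $\cost(I\times J)\leq\mu(I)+\mu(J)\leq 2K$ is simply false --- yet the proposition still claims $\cost(I\times J)\leq\cost(\tau_I)+2$.

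The repair is the route you mention only in passing and then set aside. Let $J'=\pi_V(\tau_I)$, so $\tau_I$ traverses $I\times J'$ and hence $\cost(I\times J')\leq\cost(\tau_I)$. By the definition of displacement, $|\min(J)-\min(J')|\leq K$ and $|\max(J)-\max(J')|\leq K$, so $|J\,\Delta\,J'|\leq 2K$, and Proposition~\ref{prop:sym diff} gives
\[
\cost(I\times J)\ \leq\ \cost(I\times J')+|J\,\Delta\,J'|\ \leq\ \cost(\tau_I)+2\,\displace(I\times J,\tau).
\]
This is the paper's one-line proof. The point is that Proposition~\ref{prop:sym diff} absorbs the ``truncation'' direction (the path overshooting a corner of the box), which an explicit concatenation of monotone pieces cannot handle, whereas your direct construction covers only the ``extension'' direction where vertical edges are appended.
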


\begin{proof}
Let $J'$ be the vertical projection of $\tau_I$.  Then:
$\cost(I \times J) \leq \cost(I \times J')+|J \Delta J'| \leq \cost(\tau_I) +|J \Delta J'| \leq \cost(\tau_I) + 2\displace(I \times J,\tau)$.
\end{proof}

The following fact (which is essentially the same as Proposition 3.4 of~\cite{CDGKS-arxiv}) says that every path $\tau$ with projection $I'$ can be approximately covered by a $\delta$-aligned box whose cost is close
to $\cost(\tau)$ and whose displacement from $\tau$ is small:

\begin{proposition}
\label{prop:align-box}
Let $I'$ and $J$ be intervals and suppose $\delta \in (0,1]$.
Let $\tau$ be a path lying inside of $I' \times J$
whose horizontal projection  is $I'$.
There is a $\delta$-aligned interval $J'$ of width $\mu(I')$ so that 
$\displace(I' \times J',\tau_{I'}) \leq \delta \mu(I') + \cost(\tau_{I'})$ and
$\ncost(I' \times J') \leq 2\ncost(\tau_{I'})+\delta$.
\end{proposition}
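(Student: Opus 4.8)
The plan is to read off the vertical extent of $\tau_{I'}$ and snap it to the grid. Write $w:=\mu(I')$, $g:=\max(\delta w,1)$, and let $[a,b]:=\pi_V(\tau_{I'})$, so that $\startv(\tau_{I'})=(\min(I'),a)$ and $\finalv(\tau_{I'})=(\max(I'),b)$ with $a\le b$. I would let $J'$ be the interval of width $w$ whose left endpoint $\min(J')$ is the multiple of $g$ \emph{closest} to $a$; this interval is $\delta$-aligned by construction. Since $\delta$ and $w$ are powers of $2$, $g$ is a positive integer, and when $g>1$ it is even, so rounding $a$ to the nearest multiple of $g$ moves it by at most $g/2\le \delta w/2$; when $g=1$ the integer $a$ is already a multiple, so the displacement is $0$. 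Hence in all cases $|\min(J')-a|\le \delta w/2$.

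The second ingredient is the elementary step count of a monotone path in $G_z$: if $\tau_{I'}$ uses $h$ horizontal, $v$ vertical and $d$ diagonal edges, then $h+d=w$ and $v+d=b-a$, so $|(b-a)-w|=|v-h|\le v+h\le \cost(\tau_{I'})$. Combining this with the previous bound, $\max(J')=\min(J')+w$ satisfies $|\max(J')-b|\le |\min(J')-a|+|w-(b-a)|\le \delta w/2+\cost(\tau_{I'})$. Since $(\min(I'),\min(J'))$ shares a first coordinate with $\startv(\tau_{I'})$ (and likewise at the upper right corner), $\displace(I'\times J',\tau_{I'})$ is just the larger of $|\min(J')-a|$ and $|\max(J')-b|$, which is at most $\delta w/2+\cost(\tau_{I'})\le \delta\mu(I')+\cost(\tau_{I'})$; that is the first inequality.

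For the cost bound, $\tau_{I'}$ is itself a traversal of $I'\times[a,b]$, so $\cost(I'\times[a,b])\le\cost(\tau_{I'})$, and then Proposition~\ref{prop:sym diff} (applied to $I',[a,b],J'$ — its proof is unchanged for subintervals of $\{0,\ldots,2n\}$) gives $\cost(I'\times J')\le \cost(\tau_{I'})+|[a,b]\,\Delta\,J'|$. Using the general estimate $|A\,\Delta\,B|\le|\min A-\min B|+|\max A-\max B|$ for integer intervals together with the two bounds above, $|[a,b]\,\Delta\,J'|\le \delta w/2+(\delta w/2+\cost(\tau_{I'}))=\delta w+\cost(\tau_{I'})$, so $\cost(I'\times J')\le 2\cost(\tau_{I'})+\delta w$; dividing by $w=\mu(I')$ yields $\ncost(I'\times J')\le 2\ncost(\tau_{I'})+\delta$. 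I expect the only real subtlety is matching the stated constants exactly: the factor‑$2$ saving in the additive $\delta$ term forces one to round $a$ to the \emph{nearest} multiple of $g$ rather than rounding down, and the cost estimate must go through the symmetric‑difference inequality directly rather than through the weaker bound of Proposition~\ref{prop: path displacement}, which would only give $3\ncost(\tau_{I'})+\delta$. A minor point is that $J'$ could in principle protrude past coordinate $2n$ near the boundary, but this is harmless since being ``$\delta$-aligned'' here constrains only the left endpoint.
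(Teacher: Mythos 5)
Your proof is correct and follows essentially the same route as the paper's: take the vertical projection of $\tau_{I'}$, replace it by a width-$\mu(I')$ interval snapped to the $\delta$-grid, bound the endpoint shifts by $\delta\mu(I')/2$ plus the width mismatch $|w-(b-a)|\le\cost(\tau_{I'})$, and control the cost via Proposition~\ref{prop:sym diff}. The only differences are organizational (you round the lower endpoint directly instead of first forming an unaligned box and then shifting it), and the caveats you flag (powers of two, possible protrusion past $2n$) are shared with, and glossed over in, the paper's own argument.
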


\begin{proof}
Let $J$ be the vertical projection of $\tau_{I'}$. 
If $\mu(J) \ge \mu(I')$ then let $\hat{J}$ be the interval of width $\mu(I')$ with $\min(\hat{J})=\min(J)$.
Otherwise let $\hat{J}$ be any interval of width $\mu(I')$ that contains $J$.

The box $I' \times \hat{J}$ has displacement at most $\cost(\tau_{I'})$ from $\tau_{I'}$, and
has cost at most $2\cost(\tau_{I'})$. Finally, let $J'$ be obtained by shifting $\hat{J}$ up or down to the closest
$\delta$-aligned interval. This shift is at most $\delta/2$ units.
This increases both the displacement and the cost by at most $\delta \mu(I')$.
\end{proof}

The {\em diagonal} of a square box $I \times J$ is the diagonal path joining $(\min(I),\min(J))$ to $(\max(I),\max(J))$.
Let $I \times J$ and $I' \times J'$ be square boxes with $I' \subseteq I$.  The {\em displacement} of $I' \times J'$ with respect to $I \times J$,
$\displace(I' \times J',I \times J)$
is the displacement of $I' \times J'$ with respect to the diagonal of $I \times J$, which is just the number of vertical units
one needs to shift $I' \times J'$ so that its diagonal is a subpath of the diagonal of $I \times J$.

\begin{proposition}
\label{prop:point displacement}
Suppose $\tau$ traverses the square box $I \times J$ of width $w$.  Then every point of $\tau$ is within vertical distance
$\cost(\tau)/2$ of the diagonal of $I \times J$.
\end{proposition}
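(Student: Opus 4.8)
The plan is to argue directly about the geometry of the path $\tau$ inside the square box $I \times J$ of width $w$. First I would set up coordinates: let $(a_0,b_0) = (\min(I),\min(J))$ be the start of $\tau$ and $(a_0+w, b_0+w)$ its end, so the diagonal of $I\times J$ is the set of points $(a_0+t, b_0+t)$ for $t \in \{0,\ldots,w\}$. For a point $p = (i,j)$ on $\tau$, its vertical distance to the diagonal is $|(j - b_0) - (i - a_0)|$, i.e. the absolute difference between how far $p$ has advanced vertically and how far it has advanced horizontally from the start corner.

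The key observation is that this quantity changes by at most $1$ along each edge of $\tau$: an H-edge increments the horizontal advance by $1$ and leaves the vertical advance fixed, a V-edge does the reverse, and a D-edge increments both, leaving the difference unchanged. So the ``signed displacement'' $f(p) := (j-b_0)-(i-a_0)$ is $0$ at $\startv(\tau)$, is $0$ again at $\finalv(\tau)$ (since both advances equal $w$ there), and moves in unit steps, increasing by $1$ on each V-edge and decreasing by $1$ on each H-edge. Hence if $\tau$ uses $h$ H-edges and $v$ V-edges, the fact that $f$ returns to $0$ forces $h = v$; and the maximum of $|f|$ along $\tau$ is at most the larger of the total number of consecutive up-steps or down-steps, which is crudely bounded by $\max(h,v) = h = v$. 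Since every H-edge and every V-edge costs $1$, we have $\cost(\tau) \ge h + v = 2h$, so $\max_p |f(p)| \le h \le \cost(\tau)/2$, which is exactly the claimed bound.

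To be a little more careful about the ``maximum of $|f|$'' step: I would note $f$ reaches its maximum positive value by a sequence of net $+1$ steps from $0$, using at most $v$ of them, so $\max f \le v$; symmetrically $\min f \ge -h$; combined with $h=v$ this gives $|f(p)| \le h = v \le (h+v)/2 \le \cost(\tau)/2$ for every $p$ on $\tau$. That is the whole argument; there is essentially no real obstacle here, since this is a purely combinatorial statement about monotone lattice paths. The only mild subtlety is making sure the bookkeeping of ``which edge changes which advance'' matches the cost conventions fixed earlier in the Grid graphs paragraph, and observing that the D-edge cost (which may be $0$) only helps — it never contributes to the displacement drift and only makes $\cost(\tau)$ possibly larger relative to $h+v$, so the inequality is in fact slightly loose.
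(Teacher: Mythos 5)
Your proof is correct, and every step checks out: the signed displacement $f(p)=(j-b_0)-(i-a_0)$ does equal (up to sign) the vertical distance to the diagonal, it changes by $+1$ on V-edges, $-1$ on H-edges, and $0$ on D-edges, squareness of the box forces $\#H=\#V$, and $\max_p|f(p)|\le \#V=\#H=(\#H+\#V)/2\le \cost(\tau)/2$ since D-edges have nonnegative cost. The paper reaches the same bound by a slightly different, more local route: it splits $\tau$ at the point $P=(\min(I)+u,\min(J)+v)$ into a prefix $\tau_1$ and suffix $\tau_2$, and observes that $\cost(\tau_1)\ge |v-u|$ (the prefix must drift $|v-u|$ off the diagonal to reach $P$) and $\cost(\tau_2)\ge |v-u|$ (the suffix must drift back to hit the opposite corner of the square box), so $\cost(\tau)\ge 2|v-u|$; the factor $2$ thus comes from charging each half of the path separately, rather than from your global balance $\#H=\#V$. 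Both arguments use squareness of the box (yours to force $f$ to return to $0$, the paper's to make the suffix pay $|u-v|$ as well), and both are two-line elementary counting arguments; the paper's version is marginally more economical in that it needs no bookkeeping of edge counts beyond the standard fact that a monotone path's cost dominates the absolute difference of its horizontal and vertical extents, while yours makes the lattice-path structure fully explicit, which some readers may find clearer. Your remark that zero-cost D-edges only make the inequality looser is accurate and consistent with the paper's cost conventions.
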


\begin{proof}
Consider a point of $\tau$ expressed as $P=(\min(I)+u,\min(J)+v)$.   Then $\tau$ can be split into
two parts $\tau_1$, ending at $P$ and $\tau_2$ starting at $P$. Then $\cost(\tau)=\cost(\tau_1)+\cost(\tau_2)  \geq 2|v-u|$
which is twice the vertical distance of $P$ to the diagonal of $I \times J$.  
\end{proof}

\medskip
{\bf Weighted boxes and stacks, certified boxes and stacks, shortcut graphs.}

A {\em weighted box} is a pair $(I \times J,\kappa)$ where $\kappa \geq 0$. If $\ncost(I \times J) \leq \kappa$
we say that  $(I \times J,\kappa)$  is a {\em certified box}.
A {\em weighted stack} $(I \times \mathcal{J},\kappa)$ is a pair where $I \times \cJ$ is a stack and $\kappa\geq 0$.
We associate $(I \times \mathcal{J},\kappa)$ with the
set $\{(I \times J,\kappa):J \in \mathcal{J}\}$.
If every box in $(I \times \cJ,\kappa)$ is certified, we call it a {\em certified stack}.  

Let $\widetilde{G}$ be the digraph on $\{0,\ldots,n\} \times \{0,\ldots, 2n\}$ with arc set
$\{(i,j) \rightarrow (i',j'): i \leq i', j \leq j',(i,j) \neq (i',j')\}$  The edges
with $i<i'$ and $j<j'$ are called {\em shortcuts}.  Associated to any weighted box $(I \times J,\kappa)$ 
there is a weighted shortcut edge $(\min(I),\min(J)) \rightarrow (\max(I),\max(J))$ with weight $\kappa \mu(I)$.
Given a set $\cR$ of weighted boxes, we define the {\em weighted shortcut graph} $\widetilde{G}(\cR)$
to be the weighted directed graph consisting of all H-edges and V-edges with weight 1, and
all of the shortcut edges corresponding to the boxes in $\cR$.   For a box
$I \times J$, let $\cost_{\cR}(I \times J)$ denote the minimum cost of a traversal
of $I \times J$ in $\widetilde{G}(\cR)$.

 If every box in $\cR$ is certified
we say that $\widetilde{G}(\cR)$ is a {\em certified shortcut graph}.
 A certified shortcut graph $\bar{G}(\cR)$
provides upper bounds on the edit distance.
We omit the proof of the following easy fact:

\begin{proposition}
\label{prop:shortcut}
Let $\cR$ be a set of certified boxes.  For any box $I \times J$,
$\editd(z_I,z_J) \leq \cost_{\cR}(I \times J)$.
\end{proposition}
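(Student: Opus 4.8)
The plan is to prove Proposition~\ref{prop:shortcut} by exhibiting, for an arbitrary traversal of $I \times J$ in the shortcut graph $\widetilde{G}(\cR)$, a directed path in $G_z$ from $(\min(I),\min(J))$ to $(\max(I),\max(J))$ whose cost is at most the weight of the traversal; combined with Proposition~\ref{prop:def} this gives $\editd(z_I,z_J)=\cost(I\times J)\le \cost_\cR(I\times J)$. First I would fix a minimum-weight traversal $\sigma$ of $I\times J$ in $\widetilde{G}(\cR)$ and decompose it edge by edge. Each H-edge or V-edge of $\sigma$ is already an edge of $G_z$ of the same cost $1$. For each shortcut edge $e$ of $\sigma$, say $e=(\min(I'),\min(J'))\to(\max(I'),\max(J'))$ arising from a certified box $(I'\times J',\kappa)\in\cR$ with weight $\kappa\mu(I')$, I would invoke the definition of certified: $\ncost(I'\times J')\le\kappa$, i.e. $\cost(I'\times J')\le\kappa\mu(I')$, so there is a traversal of $I'\times J'$ in $G_z$ — an actual up-right path — of cost at most $\kappa\mu(I')=\mathrm{wt}(e)$.

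The second step is to splice these pieces together. Because $\widetilde{G}$ only has arcs that move weakly up and to the right, the endpoints of consecutive edges of $\sigma$ match up, and the replacement subpaths I have chosen begin and end exactly at those same lattice points; concatenating them yields a single directed path $\tau$ in $G_z$ from $(\min(I),\min(J))$ to $(\max(I),\max(J))$. Its cost is the sum of the costs of the pieces, which is at most the sum of the weights of the edges of $\sigma$, namely $\cost_\cR(I\times J)$. Hence $\cost(I\times J)\le\cost(\tau)\le\cost_\cR(I\times J)$, and Proposition~\ref{prop:def} converts the left side into $\editd(z_I,z_J)$, completing the argument.

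I do not anticipate a genuine obstacle here — this is the "easy fact" the text flags — but the one point requiring a little care is making sure the concatenation is legitimate: one must note that a traversal of $I'\times J'$ in $G_z$ is by definition a path whose start and end are the two corners of that box, so the glued path is continuous and still monotone up-right, and therefore a valid directed path in $G_z$ on which $\cost(\cdot)$ is additive over the concatenation. A secondary bookkeeping point is the degenerate case where $\sigma$ uses no shortcut edges at all (then $\tau=\sigma$ directly) or where some $I'$ has width $0$; these are handled trivially. Everything else is immediate from the definitions of $\cost_\cR$, certified box, and the cost of a box.
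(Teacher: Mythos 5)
Your proposal is correct, and since the paper explicitly omits the proof of this ``easy fact,'' your edge-replacement argument --- replacing each shortcut edge of a minimum-cost traversal in $\widetilde{G}(\cR)$ by a traversal of the corresponding certified box in $G_z$ of no greater cost, splicing, and then applying Proposition~\ref{prop:def} --- is exactly the intended standard argument. No gaps; the splicing and degenerate cases are handled with appropriate care.
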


\section{The core speed-up algorithm of~\cite{CDGKS18}}
\label{sec:cdgks}

As discussed in Section~\ref{subsec:intro speed-up}, the main ingredient in~\cite{CDGKS18} is a core speed-up algorithm
that has access to a slow edit distance approximation algorithm and uses it to build a faster approximation algorithm. 
We review the main ideas of the core speed-up algorithm in~\cite{CDGKS18}, which provides the starting point for ours. 
To simplify the description 
we assume that the slow edit distance algorithm is just the quadratic exact edit distance algorithm.
In their work, they reduce to the case $\theta > n^{-1/5}$ and build a subquadratic time algorithm for the
gap-problem where $\theta \geq  n^{-1/5}$.
The algorithm operates in two phases.
The {\em discovery phase}  generates a set $\cQ$ of certified boxes.   In the {\em shortest path phase} the
algorithm evaluates the cost of $(\{0,\ldots,n\} \times \{n,\ldots,2n\})$ in the shortcut graph $\widetilde{G}(\cR)$ where $\cR$ is a set of
certified boxes obtained by a minor modification of $\cQ$.  Proposition~\ref{prop:shortcut} implies that this
is an upper bound on $\editd(x,y)$. The main work is to define the discovery phase to ensure that this upper bound is not too much bigger than
the true value.    The shortest path phase is implemented by a straightforward variant of dynamic programming.

The discovery phase is
defined in terms of parameters $w_1<d< w_2$, which are powers of 2 that are, respectively, approximately $n^{1/7}$, $n^{2/7}$ and $n^{3/7}$.  
The set $\cQ$ consists of certified $w_1$-boxes and certified $w_2$-boxes, and satisfies with high probability: for
every horizontally
aligned $w_2$-box $I \times J$, $\cost_{\cR}(I \times J)\leq C \cdot [ \cost(I \times J)+\theta w_2 ]$ for some constant $C$.  It is not difficult 
to show that this implies that the upper bound on $\editd(x,y)$ output by the shortest path inference phase will be at most $C \cdot [\editd(x,y)+\theta n]$, 
which is enough to solve the gap-problem.

The algorithm  generates boxes of width $w_1$ iteratively for $i$ from $0,\ldots,\log(1/\theta)$ and $\epssub{i}=2^{-i}$. 
For each horizontally aligned $I$, let $\mathcal{N}_{\epssub{i}}(I)$ be the set of $J$ that are $\epssub{i+3}$-aligned and satisfy
$\ncost(I \times J) \leq \epssub{i}$.
Iteration $i$ starts by classifying each of the $n/w_1$-aligned $w_1$-intervals,  as {\em dense}  or {\em sparse} subject to the requirement that
every $I$ with $N_{\epssub{i}}(I) \geq 2d$ is classified as dense, and every $I$ with $N_{\epssub{i}}(I) \leq d/2$ is classified as sparse;
this classification of $I$ is
done with high probability by sampling $J$ at a rate $\log(n)/d$ and calling $I$ dense (resp. sparse) if at least (resp. at most) $\log(n)$ of the sample
are within distance $\epssub{i}$ of $I$.
Next for each dense interval $I$ a set  $\mathcal{J}(I)$ of $\epssub{i+3}$-aligned $w_1$-intervals $J$
is constructed such that $\ncost(I\times J) \le 5 \epssub{i}$ and $\mathcal{N}_{\epssub{i}}(I) \subseteq \mathcal{J}(I)$.
For any given $I$ we can construct $\mathcal{J}(I)$ by computing its edit distance with every $\epssub{i}/8$-aligned interval,
in time $O(n w_1/\epssub{i})$.  If we do this for all $n/w_1$-aligned intervals the time is
 $\Theta(n^2/\epssub{i})$, but the restriction to dense intervals allows a savings of a factor of $\epssub{i} d$: 
Initialize $\mathcal{D}$ to be the set of dense aligned  $w_1$-intervals.  While $\mathcal{D} \neq \emptyset$ choose $I \in \mathcal{D}$ (the {\em pivot} for the current round) and construct $\mathcal{X} = N_{2\epssub{i}}(I)$ and $\mathcal{Y} = N_{3\epssub{i}}(I)$ and certify
all boxes $(I' \times J',5\epssub{i})$ for $I' \in \mathcal{X}$ and $J' \in \mathcal{Y}$. Delete $\mathcal{X}$ from $\mathcal{D}$
and continue.   The number of pivots is thus only
 $O(n/w_1\epssub{i}d)$ since the sets $N_{\epssub{i}}(I)$ are of size at least $d$ and 
are disjoint for different pivots.

The rest of the discovery phase constructs a (relatively small) set of $w_2$-boxes.     For each horizontally aligned $w_2$-interval $I'$,  the
$w_1$-subintervals of $I'$ that were declared sparse (over all iterations of $i$)  are used to select a small subset $\mathcal{J}'(I')$ of 
the $w_2$-intervals, and we certify each box $I' \times J'$ for $J' \in \mathcal{J}'(I')$ by computing their
edit distance exactly.  The set $\mathcal{J}'(I')$ is obtained as follows: For each $i \in \{0,\ldots,\log (1/\theta)\}$,
select a polylog$(n)$ size subset $\mathcal{S}_i(I')$ of the subintervals of $I'$ that were declared sparse in iteration $i$,
and for each $I'' \in \mathcal{S}_i(I')$ exactly compute $\cost(I'',J)$ for all $\epssub{i+3}$-aligned intervals $J$
to determine $\mathcal{N}_{\epssub{i}}(I'')$ (which has size at most $2d$).  For each box $I'' \times J$, let $J'$ be the unique $w_2$-interval
such that the diagonal of $I'' \times J$ is a subset of the diagonal of   $I' \times J'$ and add $J'$ to $\mathcal{J}'(I')$.  The size of
$\mathcal{J}'(I')$ is $\widetilde{O}(d)$ and so the total cost of evaluating the edit distance of boxes $I' \times J'$ for
$I' \in \Intervals(w_2;\{0,\ldots,n\})$ and $J' \in \mathcal{J}'(I')$ is $\widetilde{O}(n d w_2)$.  

The parameters $w_1,d,w_2$ are adjusted to minimize the run time at $\widetilde{O}(n^{12/7})$.
The key claim in~\cite{CDGKS18} is that for every horizontally aligned $w_2$-box  $I \times J$, the boxes from the discovery phase
imply an upper bound $\ncost(I \times J)$ that is at most $C\cdot \ncost(I \times J) + C'\theta$ which is sufficient for the shortcut phase to succeed.  The claim is proved by showing
that  if the set of certified $w_1$-boxes does not imply a sufficiently good upper bound on $\ncost(I \times J)$, then with high probability, one of the $w_2$-boxes $I \times J'$ constructed
in the second part of the discovery phase is within a small vertical shift of $I \times J$, and therefore can be used in the inference phase
to imply a good upper bound on $\cost(I \times J)$.   

\section{The new core speed-up algorithm}

The main new ingredient of the new core speed-up algorithm presented here  is the replacement of the pair $w_1<w_2$ of widths from~\cite{CDGKS18} by a hierarchy 
$w_1<\cdots<w_k$ of widths.  While the idea of such an extension is natural, it is not a priori clear how to extend the ideas of~\cite{CDGKS18} to such a hierarchy.   Our new algorithm proceeds in  $k$ iterations.  During iteration $j$ the algorithm
builds a data structure that supports approximate distance queries between substrings of width $w_j$.
Each successive data structure recursively uses the data structure from the previous iterations.  Iteration $j$  is accomplished by
 a suitable variant of the algorithm from~\cite{CDGKS18}.  

The algorithm of~\cite{CDGKS18}  splits neatly into a discovery phase and an inference phase.  In the new algorithm, each iteration
begins with an inference phase (using boxes discovered in the previous phase) followed by a discovery phase.

Here is our main speed-up theorem.

\begin{theorem}
\label{thm:speedup}  Suppose that $\ED$ is a gap algorithm for edit distance satisfying $\gapcondition(T',\zeta',Q')$ where $T' \geq 1$,
$\zeta'>0$ and $Q' \geq 1$. There is an algorithm $\MAIN$
(using $\ED$ as a subroutine) that satisfies $\gapcondition(T,\zeta,Q)$ with $T=T'+1/6$ where 
 $\zeta>0$ and $Q\geq 1$ are suitably chosen (depending only on $T'$,$\zeta'$ and $Q'$).
\end{theorem}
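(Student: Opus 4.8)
The plan is to implement the $k$-iteration strategy outlined above, where $k$ is a constant chosen depending on $T'$. Fix a geometric hierarchy of widths $w_1 < w_2 < \cdots < w_k$, all powers of $2$, with $w_k$ close to $n$ and each successive ratio $w_{j+1}/w_j$ roughly $n^{1/k}$ (the precise exponents will be tuned at the end to balance running times). We maintain, after iteration $j$, a set $\cR_j$ of certified boxes of width $w_j$ (together with all previously certified boxes) such that with high probability, for every horizontally aligned $w_j$-box $I \times J$ with $J$ living in the appropriate strip, the shortcut graph $\widetilde{G}(\cR_j)$ satisfies $\cost_{\cR_j}(I \times J) \le C_j \bigl(\cost(I \times J) + \theta w_j\bigr)$ for a constant $C_j$ depending only on $j$, $Q'$, and the constants in $\ED$. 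At the base we use $\ED$ itself (restricted to inputs of length $\approx w_1$, where $\theta \ge w_1^{-\zeta'}$ must be arranged by choosing $w_1$ large enough relative to $n^{-\zeta}$) to obtain $\cR_1$; the final output is $\cost_{\cR_k}(\{0,\dots,n\}\times\{n,\dots,2n\})$, which by Proposition~\ref{prop:shortcut} is an upper bound on $\editd(x,y)$, and by the invariant at $j=k$ is at most $Q\cdot(\editd(x,y)+\theta n)$ for a suitable $Q$.

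Each iteration $j$ is a variant of the \cite{CDGKS18} core routine described in Section~\ref{sec:cdgks}, but with two changes. First, the ``slow'' subroutine that iteration $j$ queries on short substrings is not the quadratic algorithm but the data structure $\widetilde{G}(\cR_{j-1})$ built in iteration $j-1$: to certify a candidate $w_j$-box we run the shortest-path computation on the relevant sub-band of $\widetilde{G}(\cR_{j-1})$, whose cost is governed by the inductive guarantee. So each iteration begins with an \emph{inference} step (reading off approximate costs of $w_j$-boxes from $\cR_{j-1}$) and then a \emph{discovery} step (the dense/sparse dichotomy, pivoting, and sampling of sparse subintervals exactly as in Section~\ref{sec:cdgks}, but now with $w_1 \rightsquigarrow w_{j-1}$ and $w_2 \rightsquigarrow w_j$, and with the exact edit-distance computations replaced by queries to $\widetilde{G}(\cR_{j-1})$). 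Second, the dense-interval pivoting needs its own width-parameter $d_j$ and sampling rate; these, together with the width ratios, are the free parameters to be optimized. The correctness of each iteration is the analogue of the key claim of \cite{CDGKS18}: if the boxes of width $w_{j-1}$ already certified do not yield a good enough bound on a given aligned $w_j$-box $I\times J$, then (using Propositions~\ref{prop:align-box}, \ref{prop: path displacement}, and \ref{prop:point displacement} to control displacement) a cheapest traversal of $I\times J$ must spend a constant fraction of its cost crossing ``sparse'' columns, so with high probability the sampled sparse subintervals cause us to construct a $w_j$-box $I\times J'$ within a small vertical shift of $I\times J$, whose cost was computed via $\cR_{j-1}$; combined with Proposition~\ref{prop:sym diff} this gives the required bound, and the constants $C_j$ grow by only a bounded multiplicative factor per level, so $C_k$ and hence $Q$ remain constants.

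For the running time: iteration $j$ performs $\widetilde{O}(\text{(number of candidate boxes)} \times \text{(cost of one query to }\cR_{j-1})$, where the query cost is, by the inductive running-time bound, $\widetilde{O}(w_j^{1+1/T'})$ on a band of height $O(w_j)$ (the band width stays proportional to $w_j$ because displacements are small), and the number of candidates is controlled, exactly as in Section~\ref{sec:cdgks}, by the dense/sparse accounting: dense intervals contribute $\widetilde{O}(n/(w_{j-1} d_j \epssub{i}))$ pivots each spawning $\widetilde{O}(d_j^2)$ boxes, and the sparse side contributes $\widetilde{O}(n d_j / w_j)$ boxes at the coarser scale. Summing over the $O(\log(1/\theta))$ values of $i$ and over $j \le k$, and choosing the width ratios and the $d_j$ to equalize the dominant terms, yields total time $\widetilde{O}(n^{1+1/T})$ with $T = T' + 1/6$; the gain of exactly $1/6$ in the exponent per application of the theorem is what the parameter balancing is designed to deliver (the ``$1/6$'' is the clean optimum for this recurrence, and applying the theorem $\lceil 6(T-1)\rceil$ times starting from the quadratic algorithm drives the exponent down to $1+1/T$, proving Theorem~\ref{thm:GUB}). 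Finally one must verify the precondition $\theta \ge w_j^{-\zeta'}$ holds at every level for all $\theta \ge n^{-\zeta}$, which forces $\zeta$ to be a sufficiently small constant multiple of $\zeta'$ (roughly $\zeta \le \zeta' \min_j \log w_j / \log n$).

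The main obstacle is the second change described above: making the recursion go through cleanly. In \cite{CDGKS18} the ``slow'' algorithm is exact, so a query returns $\cost(I\times J)$ on the nose; here a query to $\cR_{j-1}$ returns only a $(C_{j-1},\theta)$-approximation, and moreover only for boxes whose vertical projection lies in a band that $\cR_{j-1}$ actually covers. I expect the delicate points to be (i) showing that the discovery step at level $j$ only ever queries $w_{j-1}$-boxes that are in fact covered by the invariant for $\cR_{j-1}$ — i.e.\ that the candidate boxes generated by the pivoting and by the sparse-subinterval sampling all have small enough displacement from some aligned $w_{j-1}$-box in the right strip — and (ii) propagating the approximation error through the dense/sparse dichotomy without letting the multiplicative constant blow up with $k$: the dichotomy thresholds ($2d$ vs.\ $d/2$, and the $2\epssub{i}$/$3\epssub{i}$/$5\epssub{i}$ radii) must be slackened by a factor depending on $C_{j-1}$, and one must check that this slackening is absorbed by a further bounded blow-up rather than compounding geometrically in a way that depends on $n$. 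Once these are handled, the running-time bookkeeping and the final parameter optimization are routine.
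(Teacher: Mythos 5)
Your high-level outline (a constant-depth hierarchy of widths, each level doing an inference step from the previous level's certified boxes followed by a CDGKS-style dense/sparse discovery step, a final shortest-path bound via Proposition~\ref{prop:shortcut}, and a closing parameter optimization) does match the structure of the actual proof. But the one place where you deviate is exactly where the argument lives, and as written it fails. You certify the candidate $w_j$-boxes produced by the sparse-sampling/marker mechanism by a shortest-path query to $\widetilde{G}(\cR(j-1))$, reserving $\ED$ for the bottom width $w_1$ only. This cannot work: level $j-1$ produces certified boxes only for columns that are \emph{dense} at the relevant scale (that is the whole point of the dense/sparse dichotomy --- sparse columns' neighborhoods are never enumerated, since there is no pivot to amortize the search against). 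Hence a low-cost $w_j$-box whose optimal traversal crosses mostly sparse $w_{j-1}$-columns is precisely a box that $\widetilde{G}(\cR(j-1))$ cannot certify, and it is precisely the kind of box the sparse-sample markers are designed to flag; querying the lower-level data structure on it is circular. Your inductive invariant (``every aligned $w_j$-box satisfies $\cost_{\cR_j}(I\times J)\le C_j(\cost(I\times J)+\theta w_j)$'') is therefore too strong to be established by your construction. The invariant that can actually be maintained is weaker: either the box is certified through $\cR(j-1)$ (the $\Bbelow$/APM route), or it has many markers, and in the marker case the box must be verified by running $\ED$ directly on the two length-$w_j$ substrings. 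So the slow gap algorithm is invoked at \emph{every} level $j$ on inputs of length $w_j$, with the number of such calls kept small by the sparse samples and the density parameters (roughly $d_{j-1}$ verifications per query); these calls are what give rise to the $\sum_{h\le j} d_{h-1} w_h^{1+1/T'}$ terms whose balancing produces the gain of $1/6$ in the exponent.

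A symptom of the same confusion appears in your time accounting: you assert a query to a band of $\widetilde{G}(\cR(j-1))$ costs $\widetilde{O}(w_j^{1+1/T'})$ ``by the inductive running-time bound,'' but $w^{1+1/T'}$ is the running time of $\ED$ on length-$w$ inputs, not the cost of a shortest-path computation in a shortcut graph (which is governed by the number of certified boxes in the band). With the ED-at-every-level structure in place, the per-level cost is $\widetilde{O}\bigl(\tfrac{n}{\theta^2 w_j d_j}\sum_{h\le j} d_{h-1} w_h^{1+1/T'}\bigr)$, and the optimizing schedule is not a geometric spacing $w_{j+1}/w_j\approx n^{1/k}$: one takes $w_1\approx d_0\approx\sqrt{n}$, $d_k=1$, and width/density exponents chosen so that all terms are equal, which is what yields $T=T'+1/6$ for a sufficiently large constant $k$. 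Your concerns (i) and (ii) are real but benign --- the quality parameter compounds as $\qual_j=3\qual_{j-1}+21$, which is harmless for constant $k$ --- whereas the missing idea is that marker-identified candidates must be checked by $\ED$ at their own scale, with the density schedule making exactly this affordable.
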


Applying this theorem inductively with $A_0$ being the exact edit distance algorithm, we get a sequence of algorithms
$A_j$ where $A_j$ satisfies $\gapcondition(1+j/6, \zeta_j,Q_k)$ for suitable constants $\zeta_j>0$ and $Q_j$,
and taking $j=6(T-1)$ gives 
Theorem~\ref{thm:GUB}.

The proof of Theorem~\ref{thm:speedup} is the heart of the paper.   
We describe the algorithm in the following order:
\begin{enumerate}
\item The parameters used by the algorithm (Section~\ref{subsec:params}).
\item The overall architecture, including data objects, of the algorithm  (Section~\ref{subsec:objects}).
\item Some basic functions used in the algorithm (Section~\ref{subsec:primitives}).
\item The mechanics of the algorithm. (Section~\ref{subsec:mechanics}).
\item The use of randomness in the algorithm  (Section~\ref{subsec:randomness}).
\item The properties enforced by the algorithm (Section~\ref{subsec:properties} and~\ref{subsec:completeness}).
\item The proof that \MAIN{} satisifes the gap-algorithm Soundness and Completeness requirements  (Section~\ref{subsec:main proof}).
\item The running time analysis in terms of the parameters (Section~\ref{subsec:time analysis}).
\item The choice of parameters that attain the run time claims for \MAIN{} (Section~\ref{subsec:parameter choice}).
\item Tying up the proof of Theorem~\ref{thm:speedup} (Section~\ref{subsec:tie up}).
\end{enumerate}

\subsection{The algorithm parameters}
\label{subsec:params}

Recall that a gap-algorithm takes as input $(n,\theta,\delta;x,y)$ where $n$ is a power of 2 and $|x|=|y|=n$, and $\theta \in (0,1]$ is a power of 1/2.

In our description of the algorithm,
we fix the input parameter $\delta$ in the algorithm \MAIN{}  to $\delta=1/2$.  
For $\delta<1/2$, we execute the algorihm with $\delta=1/2$ independently for
 $r=\lceil \log_2 (1/\delta) \rceil$ times, and \REJECT{} only if every run returns \REJECT.  This compound algorithm will \REJECT{} every input $x,y$ such that $\editd(x,y) \geq Q \theta n$, since every run will \REJECT.  The probability that the compound algorithm incorrectly returns
\REJECT{} on input with $\editd(x,y) \leq \theta n$ is at most $(1/2)^r \leq \delta$, as required.  

Second, we fix the value of $\delta$ for all calls of $\ED$ within \MAIN, to $\delta = n^{-12}$ where $n$ is the length of the global input
to \MAIN{}.  Since the number of calls to $\ED$ will be bounded above (easily) by $\tildeO(n^2)$, a union bound implies that the probability that  every
call to $\ED$ is correct is at least $1-n^{-8}$.

The algorithm $\MAIN$ takes as input  $n,\theta;x,y$ where
$n$ is a power of 2, $x$ and $y$ are strings of length $n$ and the gap parameter
$\theta \in (0,1]$ is a power of 1/2.   The algorithm sets
$z$ to be the concatenation of $xy$ and treats $z$ as a global variable.

The number of iterations (levels)  of $\MAIN$ is a parameter $k$.
For each $j \in {1,\ldots,k+1}$, there is a {\em width parameter} $w_j$ and for each $j \in \{0,\ldots,k\}$, there is a {\em density parameter} $d_j$.  These parameters are integer powers of 2 satisfying:\footnote{We denote by $\pwrround{\cdot}$ the closest power of two of size smaller or equal.}

$$w_1=\pwrround{\sqrt{n}}<w_2< \cdots < w_k < w_{k+1}=n.$$

$$d_0=\pwrround{\sqrt{n}} > d_1 > \cdots > d_k=1.$$

Furthermore, for $1 \leq j \leq k$:
\begin{eqnarray}
\label{eqn:tech assump2}
\frac{n}{w_j} &\ge& d_j.
\end{eqnarray}

These parameters will be chosen in Section~\ref{subsec:parameter choice} to optimize
the time analysis.  
For now we note a  technical assumption, that will be verified in Section~\ref{subsec:parameter choice},  that is needed in the analysis.  
 For $1 \leq j \leq k$:

\begin{eqnarray}
\label{eqn:tech assump}
\frac{w_j}{w_{j+1}} &\leq &\theta/2.
\end{eqnarray}

%

For each $j \in \{0,\ldots,k\}$, there are quality parameters $\qual_j$ that satisfy the recurrence:
\begin{eqnarray*}
\qual_0 &=&\log(Q') \hspace{.5 in}\text{(where $Q'$ is the quality  of $\ED$)}\\
\qual_j &=&3\qual_{j-1}+21 \hspace{.5 in} \text{for $j>1$}.
\end{eqnarray*}
The quality of the final approximation is $Q=2^{\qual_k+6}$

We also define, for integers $i$, $\epssub{i}=2^{-i}$.  In most cases, $i \in
\{0,\ldots, \log(1/\theta)\}$ so  
$1 \geq \epssub{i} \geq \theta$.

There is a constant $c_0$ used in the definition of the procedure $\ProcessDense$. (See Section~\ref{subsec:randomness}.)

\subsection{The  architecture of the algorithm, and the neighborhood data structure}
\label{subsec:objects}

\MAIN{} consists of $k$ iterations ({\em levels}), and a final post-processing step.   
 During iteration $j$, the algorithm examines pairs $\candidate{i}{I \times J}$, called {\em candidates},  where $i \in \{0,\ldots,\log(1/\theta)\}$,
$I \in \Intervals(w_j)$ and $J \in \Intervals(w_j,\epssub{i+3})$. (Hence, a candidate is any $\candidate{i}{I \times J}$ that satisfies some weak consistency requirements.)
The pair $I \times J$ is called a {\em level $j$ box}
and $\candidate{i}{I \times J}$ is a  {\em level $j$ candidate}.
Iteration $j$ implicitly classifies all level $j$-candidates as \CLOSE{} or \FAR.  This classification satisfies:

\begin{itemize}
\item If $\ncost(I \times J) \leq \epssub{i}$ then $\candidate{i}{I \times J}$ is classified as \CLOSE.
\item If $\ncost(I \times J) > \epssub{i-\qual_{j-1}-6}$ then $\candidate{i}{I \times J}$ is classified as \FAR.
\end{itemize}

If $\epssub{i}< \ncost(I \times J) \leq \epssub{i-\qual_{j-1}-6}$ then $\candidate{i}{I \times J}$ may be classified as either \CLOSE{} or \FAR.

This implicit classification is accomplished by
 a data structure, called the {\em neighborhood data structure}. The
data structure implements a query $\Enumerate$
which
takes as input $(j,I\times \cJ,i)$ where:
\begin{itemize}
\item  $j \in \{1,\ldots,k\}$ is the level,
\item  $I \times \mathcal{J}$ is a stack satisfying $I \in \Intervals(w_j)$ and $\mathcal{J} \subseteq \Intervals(w_j,\epssub{i+3})$,
\item $i \in \{0,\ldots,\log(1/\theta)\}$,  
\end{itemize}
and returns the set of $J \in \mathcal{J}$ for which $\candidate{i}{I \times J}$ is \CLOSE. 
In particular, $\Enumerate(j,I \times \{J\},i)$ returns $\{J\}$ if $\candidate{i}{I \times J}$ is \CLOSE{} and returns $\emptyset$ otherwise.
The pair $\candidate{i}{I \times \cJ}$ is called a {\em level $j$ candidate stack}.

The queries with level parameter $j$ are the
{\em level $j$ queries}.    Initially the data structure is unable to answer any queries.  During iteration $j$
 the algorithm constructs the part of the data structure that determines the classification of level $j$ candidates
as \CLOSE{} or \FAR, and thereby enabling level $j$ queries.  

At the start of iteration $j$, queries up to level $j-1$ have been enabled.
To enable $\Enumerate(j,\cdot)$ the algorithm constructs families of sets
for each $I \in \Intervals(w_j)$ and each $i \in \{0,\ldots, \log(1/\theta)\}$ as follows:

\begin{itemize}
\item 
A subset of $\Intervals(w_j,\epssub{i+3})$
denoted ${\bf \Bbelow}(j,I,i)$.       
\item A subset of
$\Intervals(w_{j-1};I)$ denoted
 ${\bf \SparseSample}(j,I,i)$.
\end{itemize}

The query $\Enumerate(j,\cdot)$ uses these sets, as well as calls to $\Enumerate(j-1,\cdot)$.
Thus the level $j$ neighborhood data structure consists of all of the sets $\Bbelow(j',\cdot)$ and $\SparseSample(j',\cdot)$ for
$1 \leq j' \leq j$.  

During iteration $j$, subroutines $\Preprocess$ and $\ProcessDense$ are called with parameter $j$.
The purpose of $\Preprocess(j)$ is to create the sets
 $\Bbelow(j,\cdot)$ and $\SparseSample(j,\cdot)$. The construction of these sets
 involves some random choices, which affect the \CLOSE/\FAR{} classification; but
once the choices are made the \CLOSE/\FAR{} classification is fixed.
 The creation of these sets activates 
$\Enumerate(j,\cdot)$.   While the data structure grows during each iteration
 to enable higher level queries, once $\Enumerate(j,\cdot)$ is enabled, the portion of the data structure
used to handle level $j$ queries is static.

The other procedure in iteration $j$ of \MAIN() is $\ProcessDense(j)$.   $\ProcessDense(j)$ creates
the following sets
 for each $i \in \{0,\ldots,\log(1/\theta)\}$:
\begin{itemize}
 \item  $\Sparse(j,i) \subseteq \Intervals(w_j)$.
\item For each $I  \not\in \Sparse(j,i)$, 
a subset of $\Intervals(w_j,\epssub{i+3})$
denoted {\bf $\Bdense(j,I,i)$}.
\item A set {\bf $\cR(j)$} of weighted boxes (which we will prove are all certified).
\end{itemize}

The  sets $\Bdense(j,\cdot)$ are local variables within $\ProcessDense(j)$, used 
to create $\cR(j)$.

The set $\cR(j)$ and $\Sparse(j,\cdot)$ are global variables but,
with the exception of the final iteration $j=k$,
they are used only in $\Preprocess(j+1)$, and then never used again.
Following iteration $k$, the set $\cR(k)$ is used in the post-processing step to generate the final output which is
$\cost_{R(k)}(\{0,\ldots,n\} \times \{n,\ldots,2n\})$.

\subsection{Elementary primitives}
\label{subsec:primitives}

We describe some elementary functions used within the algorithm.

\medskip
{\bf The function  $\Round$}.  $\Round(J,\epsilon)$ where $J$ is an interval and $\epsilon \leq 1$ is a power of 2, is equal to the
$\epsilon$-aligned interval $J'$ of width $\mu(J)$ obtained by shifting $J$ down (decreasing its two endpoints) at
most $\epsilon \mu(J)-1$ units.

\medskip
{\bf The function \ZoomIn}. 
Recall the definition of {\em displacement} in Section~\ref{sec:preliminaries}.
The function $\ZoomIn$ takes as input a box $I \times J$, and a subinterval $I'$ of $I$ and some additional
parameters, and outputs a set of suitably aligned intervals $J'$ of width
$\mu(I')$ so that each box $I' \times J'$ has small displacement from $I \times J$.
More precisely,
for a box $I \times J$, a subinterval $I' \subseteq I$, and $0 \leq i' \leq i \leq \log(1/\theta)$, $\ZoomIn(j,I\times J,i,I',i')$ 
is the set of all $\epssub{i'+3}$-aligned intervals $J' \subseteq J$ of width $\mu(I')$, for which the displacement of $I' \times J'$ from $I \times J$
is at most $2\epssub{i }\mu(I)$.

\begin{proposition}
\label{prop:33}
Let $I$ be an interval of width $w$ and $I' \subseteq I$ of width $w'$ a divisor of $w$. Let $i' \leq i \in \{0,\ldots,\log(1/\theta)\}$.
\begin{enumerate}
\item For $J$ of width $w$, $|\ZoomIn(j,I \times J,I',i')|$ has size at most $1+32 \epssub{i-i'}w/w'$.
\item
Let $I' \times J'$ be a box.   The number of $\epssub{i+3}$-aligned width-$w$ intervals $J$ such that
$J' \in \ZoomIn(j,I \times J,i,I',i')$ is at most 33.
\end{enumerate}
\end{proposition}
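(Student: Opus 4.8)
The plan is to unwind the definition of $\ZoomIn$ and translate both statements into counting $\epssub{i'+3}$- or $\epssub{i+3}$-aligned intervals in a window of a controlled size, then bound these counts by the ratio of the window length to the alignment spacing.

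\medskip
\textbf{Part 1.} Fix $J$ of width $w$. By definition, $J' \in \ZoomIn(j,I\times J,i,I',i')$ iff $J'$ is an $\epssub{i'+3}$-aligned interval of width $w'=\mu(I')$, $J' \subseteq J$, and the displacement of $I' \times J'$ from $I\times J$ is at most $2\epssub{i}w$. Unwinding the definition of displacement of a square box with respect to the diagonal of $I\times J$ (Section~\ref{sec:preliminaries}): writing $I'$ so that $\min(I')=\min(I)+u$, the diagonal of $I\times J$ passes through the point $(\min(I)+u,\min(J)+u)$, so a valid $J'$ must have $\min(J')$ within $2\epssub{i}w$ vertical units of $\min(J)+u$. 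Hence $\min(J')$ ranges over an interval of length at most $4\epssub{i}w$. The spacing between consecutive $\epssub{i'+3}$-aligned positions is $\max(\epssub{i'+3}w',1)$. First I would handle the generic case $\epssub{i'+3}w' \geq 1$: the number of admissible values of $\min(J')$ is at most $1 + \frac{4\epssub{i}w}{\epssub{i'+3}w'} = 1 + 32\,\epssub{i-i'}\,w/w'$, which is the claimed bound. When $\epssub{i'+3}w'<1$ the spacing is $1$ and the count is at most $1+4\epssub{i}w \le 1 + 32\epssub{i-i'}w/w'$ since $\epssub{i'+3}w' < 1$ forces $8/(\epssub{i'}w') > 1$; I would note this case briefly. (One should double-check whether the intended bound uses $32$ or a slightly different constant depending on exactly how "within $K$ vertical units" and the endpoint condition at $\max$ interact; since the $\min$ and $\max$ constraints are coupled through the common shift, only one free parameter remains, so the $\min$-count governs.)

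\medskip
\textbf{Part 2.} Now fix the box $I'\times J'$ and count the $\epssub{i+3}$-aligned width-$w$ intervals $J$ with $J'\in\ZoomIn(j,I\times J,i,I',i')$. Again using the coupling through the shift: with $\min(I')=\min(I)+u$ as above, the constraint "displacement of $I'\times J'$ from the diagonal of $I\times J$ is at most $2\epssub{i}w$" says $\min(J')$ is within $2\epssub{i}w$ vertical units of $\min(J)+u$, i.e. $\min(J) \in [\min(J')-u-2\epssub{i}w,\ \min(J')-u+2\epssub{i}w]$, an interval of length $4\epssub{i}w$. Since $J$ is $\epssub{i+3}$-aligned, consecutive admissible values of $\min(J)$ are spaced $\max(\epssub{i+3}w,1)$ apart; because $i \le \log(1/\theta)$ we have $\epssub{i}\ge\theta$ and, more to the point, $\epssub{i+3}w = \epssub{i}w/8$, so the count is at most $1 + \frac{4\epssub{i}w}{\epssub{i}w/8} = 33$ (and at most $1+4\epssub{i}w \le 33$ when the spacing degenerates to $1$, again using $\epssub{i+3}w<1$). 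This gives the bound $33$.

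\medskip
\textbf{Main obstacle.} The only real subtlety — and the step I would be most careful about — is correctly extracting "one free parameter" from the displacement condition: the displacement of a square subbox $I'\times J'$ from the diagonal of $I\times J$ is a single shift amount, so both the $\min$-corner and $\max$-corner constraints are governed by that one shift, and I must not double-count by treating them independently. Once that is pinned down, everything reduces to a one-dimensional lattice-point count (length of an interval divided by alignment spacing, plus one), and the degenerate small-width alignment case (\,spacing $=1$\,) must be checked separately but is easy. The constants $32$ and $33$ should then fall out, modulo a routine check that the factor-of-$2$ in "$2\epssub{i}\mu(I)$" and the factor-of-$8$ from $\epssub{i+3}$ versus $\epssub{i}$ combine as claimed.
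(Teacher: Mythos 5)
Your proposal is correct and follows essentially the same argument as the paper: the displacement condition pins $\min(J')$ (respectively $\min(J)$) to a window of length $4\epssub{i}w$ centered at $\min(J)+\Delta$ (respectively $\min(J')-\Delta$) where $\Delta=\min(I')-\min(I)$, and the bounds follow by counting aligned starting points in that window, giving $1+4\epssub{i}w/(\epssub{i'+3}w')=1+32\epssub{i-i'}w/w'$ and $1+4\epssub{i}w/(\epssub{i+3}w)=33$. Your explicit treatment of the degenerate spacing-$1$ case is a harmless (and correct) addition that the paper leaves implicit.
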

\begin{proof}
Set $\Delta=\min(I')-\min(I)$. If $J' \in \ZoomIn(j,I \times J,i,I',i')$ then $|\min(J')-\Delta-\min(J)| \leq2  \epssub{i} w$.

Proof of (1).   Holding $J$ fixed, we have $\min(J') \in [\min(J)+\Delta-2\epssub{i}w,\min(J)+\Delta+2\epssub{i}w]$. This is an interval of width $4\epssub{i}w$, and the number of $\epssub{i'+3}$-aligned intervals of width $w'$
that start in this interval is at most $1+32 \epssub{i-i'}w/w'$.

Proof of (2). Holding $J'$ fixed, we have $\min(J) \in [\min(J')-\Delta-2\epssub{i}w,\min(J')-\Delta+2\epssub{i}w]$.  This is an interval of width $4\epssub{i}w$, and the number of $\epssub{i+3}$-aligned intervals of width $w$
that start in this interval is at most 33.
\end{proof}

Calling $\ZoomIn(j,I\times \cJ,i,I',i')$ with a stack $I\times \cJ$ returns the union of results $\bigcup_{J\in \cJ} \ZoomIn(j,I\times J,i,I',i')$.

\medskip
{\bf The function \INDUCED}. This is a function that takes as input a set of weighted square boxes $\cQ$ and outputs a collection 
of weighted boxes {\em induced by $\cQ$}.   For an interval $J$, and $t \le \mu(J)/2$
let $J/[t]$ denote the interval $[\min(J)+t,\max(J)-t]$.   For each $(I \times J,\kappa)$ in $\cQ$,
$\INDUCED(\cQ)$ includes $(I \times J, \kappa)$ together with boxes of the form $(I \times J/[2^i],\kappa+\frac{2^{i+1}}{\mu(I)})$
for $i \in \{0,\ldots,\log(\mu(J))-1\}$.  

\begin{proposition}
\label{prop:INDUCED}
If all boxes of $\cQ$ are certified boxes then so are all boxes of $\INDUCED(\cQ)$. 
\end{proposition}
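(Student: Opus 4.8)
The plan is to unwind the definition of $\INDUCED(\cQ)$ and check the certification condition box by box. Recall that a weighted box $(A \times B, \lambda)$ is certified precisely when $\ncost(A \times B) \leq \lambda$, i.e. $\cost(A \times B) \leq \lambda \mu(A)$. So I fix an arbitrary $(I \times J, \kappa) \in \cQ$, assume $\ncost(I \times J) \leq \kappa$, and must verify the certification condition for each box that $\INDUCED$ puts into its output on account of $(I \times J,\kappa)$: first the box $(I \times J, \kappa)$ itself, which is certified by hypothesis, and then, for each $i \in \{0,\ldots,\log(\mu(J))-1\}$, the box $(I \times J/[2^i], \kappa + \tfrac{2^{i+1}}{\mu(I)})$.

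The single substantive step is to bound $\cost(I \times J/[2^i])$ in terms of $\cost(I \times J)$. Here I would invoke Proposition~\ref{prop:sym diff} in the form that relates the costs of two boxes sharing a horizontal projection: $|\cost(I \times J) - \cost(I \times J')| \leq |J \Delta J'|$. Taking $J' = J/[2^i] = [\min(J)+2^i, \max(J)-2^i]$, the symmetric difference $J \Delta J'$ consists of the $2^i$ integers just above $\min(J)$ and the $2^i$ integers just below $\max(J)$ (being a bit careful with endpoint conventions, but in any case $|J \Delta J'| \leq 2 \cdot 2^i = 2^{i+1}$). Hence $\cost(I \times J/[2^i]) \leq \cost(I \times J) + 2^{i+1} \leq \kappa \mu(I) + 2^{i+1} = \left(\kappa + \tfrac{2^{i+1}}{\mu(I)}\right)\mu(I)$, which is exactly the statement that $(I \times J/[2^i], \kappa + \tfrac{2^{i+1}}{\mu(I)})$ is certified. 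One should also note that the construction only applies $/[2^i]$ for $2^i < \mu(J)$ (equivalently $i \leq \log(\mu(J))-1$), so $J/[2^i]$ is a genuine nonempty interval of width $\mu(J) - 2^{i+1} \geq 0$ and $z_{J/[2^i]}$ is well defined; that makes $\cost(I \times J/[2^i])$ meaningful.

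There is essentially no obstacle here — the only thing to be a little careful about is the off-by-one bookkeeping in Proposition~\ref{prop:sym diff} (which is stated for $I,J,J' \subseteq \{0,\ldots,n\}$ but holds verbatim for subintervals of $\{0,\ldots,2n\}$ by the same argument) and in the exact count of $|J \Delta J'|$; using the crude bound $|J\Delta J'| \leq 2^{i+1}$ is all that is needed and matches the weight increment in the definition. Since every box of $\INDUCED(\cQ)$ arises either as some $(I\times J,\kappa)\in\cQ$ or as one of these $(I\times J/[2^i],\kappa+\tfrac{2^{i+1}}{\mu(I)})$ boxes, and all of these have been shown to be certified, the proposition follows.

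\begin{proof}
Every box in $\INDUCED(\cQ)$ is either a box $(I \times J,\kappa)$ already in $\cQ$, or, for some $(I \times J,\kappa) \in \cQ$ and some $i \in \{0,\ldots,\log(\mu(J))-1\}$, the box $(I \times J/[2^i],\kappa + \tfrac{2^{i+1}}{\mu(I)})$. Boxes of the first kind are certified by hypothesis. For the second kind, fix $(I \times J,\kappa) \in \cQ$; by assumption $\ncost(I \times J) \leq \kappa$, i.e. $\cost(I \times J) \leq \kappa \mu(I)$. Since $i \leq \log(\mu(J)) - 1$, we have $2^{i+1} \leq \mu(J)$, so $J/[2^i] = [\min(J)+2^i,\max(J)-2^i]$ is a (possibly degenerate) interval with $J/[2^i] \subseteq J$ and $|J \Delta (J/[2^i])| \leq 2 \cdot 2^i = 2^{i+1}$. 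By Proposition~\ref{prop:sym diff} (whose proof applies verbatim to subintervals of $\{0,\ldots,2n\}$),
$$\cost(I \times J/[2^i]) \leq \cost(I \times J) + |J \Delta (J/[2^i])| \leq \kappa \mu(I) + 2^{i+1} = \left(\kappa + \frac{2^{i+1}}{\mu(I)}\right)\mu(I).$$
Dividing by $\mu(I)$ gives $\ncost(I \times J/[2^i]) \leq \kappa + \tfrac{2^{i+1}}{\mu(I)}$, so $(I \times J/[2^i],\kappa + \tfrac{2^{i+1}}{\mu(I)})$ is certified. Hence all boxes of $\INDUCED(\cQ)$ are certified.
\end{proof}
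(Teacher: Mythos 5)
Your proof is correct and follows exactly the paper's argument: the paper's one-line proof likewise notes that $|J \Delta (J/[2^i])| = 2^{i+1}$ and applies Proposition~\ref{prop:sym diff}. You have simply spelled out the same computation in more detail, so there is nothing further to add.
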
    

\begin{proof}
Note that $|J \Delta(J/[2^i])|=2^{i+1}$ and apply Proposition~\ref{prop:sym diff}.
\end{proof}

\medskip
{\bf The function  $\APM$ (Approximate pattern match)}. 
Recall from Section~\ref{sec:preliminaries} that $\cost_{\cR}(I \times J)$ is the length of the min-cost traversal of $I \times J$ in the shortcut
graph $\widetilde{G}(\cR)$. $\APM$  takes as input a stack $I \times \cJ$, $\kappa>0$ and  a set $\cR$ of certified boxes,
and outputs a subset $\cal{S}$ of $\cal{J}$ that satisfies: 
\begin{description}
\item[Completeness of \APM.] For all $J \in \cJ$ satisfying $\cost_{\cR}(I \times J) \leq \kappa \mu(I)$,  $J \in \cS$
\item[Soundness of \APM.] For all $J \in \cJ$ satisfying $\cost(I \times J) >2\kappa \mu(I)$, $J \not\in \cS$.
\end{description}
The running time is $\widetilde{O}(\mu(I)+|\cJ|+|\cR|)$.  (Notice, the subtle distinction between $\cost_{\cR}$ and $\cost$ in Soundness and Completeness.)  The implementation, described in Section~\ref{sec:apm}, is a customized variant of dynamic programming  that closely follows~\cite{CDGKS-arxiv,CDK-arxiv}.

\subsection{The mechanics of the algorithm}
\label{subsec:mechanics}

We are now ready to present  the pseudocode for \MAIN{} and the three main subroutines:
\Preprocess{} and \ProcessDense{}, and \Enumerate{}.  

\medskip
\noindent
{\bf The algorithm \MAIN}.  This algorithm inputs an integer $n$ which is a power of $2$,
$\theta \in (0,1]$ a power of 1/2, and 
 two strings $x,y$ of length $n$, and returns \ACCEPT{} 
or \REJECT.  (Recall that the error parameter $\delta$ is fixed to 1/2.)
The algorithm consists of iterations indexed by  $j \in \{1,\ldots,k\}$. $\Preprocess(j)$ creates the sets $\Bbelow(j,I,i)$ and $\SparseSample(j,I,i)$ that enable the
 level $j$ queries $\Enumerate(j,\cdot)$.
$\ProcessDense(j)$ creates sets $\cR(j)$ and $\Sparse(j,i)$ needed for $\Preprocess(j+1)$.

\begin{algorithm}[H]
\begin{algorithmic}[1]
\caption{\MAIN$(n,\theta;x,y)$}
\label{alg-main}
   
\REQUIRE $n$ is a power of 2. $|x|=|y|=n$. $\theta \in (0,1]$ is a power of 1/2.

\ENSURE If $\ed(x,y)\geq Q\theta$ then return \REJECT.  If $\ed(x,y) \leq \theta$ then return \ACCEPT{} with probability at least 1/2.

\vspace{1mm}
\hrule\vspace{1mm}

      \STATE $z\longleftarrow xy$.

      \FOR{$j \in  \{1,\dots,k\}$}

         \STATE \Preprocess$(j)$.

         \STATE \ProcessDense$(j)$.
               
      \ENDFOR

   \STATE Return \ACCEPT{} if $\APM(\{0,\ldots,n\} \times\{\{n,\ldots,2n\}\},\theta 2^{\qual_k+5}, \cR(k))$ is non-empty, otherwise return \REJECT.
   
\end{algorithmic}
\end{algorithm}

\medskip
\noindent
{\bf The subroutine \Preprocess}. On input $j$, the sets
$\Sparse(j-1,i)$ and $\cR(j-1)$ created by $\ProcessDense(j-1)$ are used to produce the
sets $\Bbelow(j,I,i)$ and $\SparseSample(j,I,i)$ for $I \in \Intervals(w_j)$ and $i \in \{0,\ldots,\log(1/\theta)\}$.
To begin, the set of weighted $w_{j-1}$-boxes $\cR(j-1)$  is partitioned into sets
$\cR(j-1,I)$, with $I' \times J'$ assigned to $\cR(j-1,I)$ for $I' \subseteq I$. 
For each $i$ and $I$:

\begin{enumerate}
\item The set $\Sparse(j-1,i) \subseteq \Intervals(w_{j-1})$ was produced by $\ProcessDense(j-1)$.
$\SparseSample(j,I,i)=\emptyset$  if $\Sparse(j-1,i)$ contains no subintervals of $I$, and otherwise
is an independent random sample (multiset) of size $\log(n)^{\theta(1)}$ selected from the subsets of $I$ belonging to $\Sparse(j-1,i)$.
\item  Run $\APM$ with input stack $I \times \Intervals(w_j,\epssub{i+3})$ and $\cR(j-1,I)$ to determine
the set of intervals $J \in \Intervals(w_j,\epssub{i+3})$ that are suitably close to $I$ 
in the shortcut graph $\widetilde{G}(\cR(j-1,I))$.
\end{enumerate}

\begin{algorithm}[H]
\begin{algorithmic}[1]
\caption{\Preprocess$(j)$}
\label{alg-preprocess}
   
\REQUIRE $j\in[k]$. Levels $j=1,\dots,j-1$ were already processed. Uses
sets $\Sparse(j-1,i)$ and $\cR(j-1)$ constructed by $\ProcessDense(j-1)$, where $\cR(0)=\emptyset$.

\ENSURE $\Bbelow(j,I,i)$, and 
$\SparseSample(j,I,i)$ for  $(I,i) \in \Intervals(w_j)  \times \{0,\ldots,\log(1/\theta)\}$.

\vspace{1mm}
\hrule\vspace{1mm}
\STATE Partition $\cR(j-1)$ into $\{\cR(j-1,I):I \in \Intervals(w_j)\}$ where $I' \times J'$
is in $\cR(j-1,I)$ if $I' \subseteq I$.
   \FOR{$i \in \{0, \ldots, \log 1/\theta\}$}
   \FOR{$I \in \Intervals(w_j)$}
      \IF {$j=1$} \STATE $\SparseSample(j,I,i)\longleftarrow \emptyset;\; \Bbelow(j,I,i) \longleftarrow \emptyset$
      \ELSE

       \IF {$\Intervals(w_{j-1};I) \cap \Sparse(j-1,i) = \emptyset$} \STATE $\SparseSample(j,I,i) \longleftarrow \emptyset$
       \ELSE \STATE  Make $30 \log n$ independent uniform selections from $\Intervals(w_{j-1};I) \cap \Sparse(j-1,i)$ to obtain $\SparseSample(j,1,i)$
       \ENDIF
       \STATE $\Bbelow(j,I,i) \longleftarrow \APM(I \times \Intervals(w_{j},\epssub{i+3}),\epssub{i-\qual_{j-1}-5}, \cR(j-1,I))$
     \ENDIF 
   \ENDFOR
   \ENDFOR
\end{algorithmic}
\end{algorithm}

\medskip
{\bf The subroutine \Enumerate}.  The creation of $\Bbelow(j,I,i)$ and $\SparseSample(j,I,i)$ by $\Preprocess$
enables the query $\Enumerate(j,\cdot)$, which implicitly classifies all level $j$
candidates $\candidate{i}{I \times J}$ as \CLOSE{} or \FAR{} subject to: 
\begin{description}
\item[Completeness of $\Enumerate$.] If $\ncost(I \times J) \leq \epssub{i}$ then with high probability
$\candidate{i}{I \times J}$ is \CLOSE.
\item[Soundness of $\Enumerate$.] If $\ncost(I \times J) >  \epssub{i-\qual_{j-1}-6}$ then $\candidate{i}{I \times J}$ is \FAR.
\end{description}
$\Enumerate(j,\cdot)$ takes
a stack $I \times \cJ$ and $i \in \{0,\ldots,\log(1/\theta)\}$ with $I \in \Intervals(w_j)$ and $\cJ \subseteq \Intervals(w_j,\epssub{i+3})$ and returns  $\{J \in \cJ:\candidate{i}{I \times J}$ \text{ is } \CLOSE\}.
$\cS$ accumulates the set of intervals to be output.  For $j=1$, $\ED(z_I,z_J,\epsilon)$ is run for
each $J \in \cJ$ and $\cS$ is the set of accepted $J$. For $j>1$, $\cS$ is the union of two sets.
The first is
$\Bbelow(j,I,i) \cap \cJ$ found by $\Preprocess(j)$.   The second  is obtained by identifying (as described below) a 
small subset $\cK \subseteq	 \cJ$, testing each $J \in \cK$  using $\ED$, and adding $J$ to $\cS$ if $z_J$ is
suitably close to $z_I$.    To identify $\cK$, for each  $(I',i') \in \SparseSample(j,I,i) \times \{0,\ldots,i\} $ 
use
$\ZoomIn$ to identify the set $\cJ'$ of $J' \in \Intervals(w_{j-1},\epssub{i'+3})$ such that  $I' \times J'$ has displacement
at most $2\epssub{i}\mu(I)$ from  $I \times J$.
Recursively use $\Enumerate(j-1,I' \times \cJ',i')$ to select $\cS'=\{J' \in \cJ': 
\candidate{i'}{I' \times J'} \text { is }\CLOSE\}$.   $\cK$ consists of those $J$ for which $I \times J$ has small displacement
from $I' \times J'$ for some $J' \in \cS'$.

The loops on $i',I'$ (line 11-21) produce $\cK \subseteq \cJ$. For each $J \in \cK$, 
$\ED$ is run on $z_I,z_J$.  The loop on $I'$ is over $\SparseSample(j,I,i')$.  The subset $\cK$ of $\cJ$ depends
on the random sample $\SparseSample(j,I,i')$  of $\Sparse(j-1,i') \cap \Intervals(w_{j-1};I)$.
The following definitions highlight this dependence.

\begin{itemize}
\item
For $\candidate{i}{I \times J}$, let $I' \in \Intervals(w_{j-1};I)$ and $i' \in \{0,\ldots,i\}$. 
The pair $(I',i')$ is a {\em marker%
\footnote{We call it {\em marker} as in genomics, where a short DNA sequence identifies a gene. Similarly here, a marker for $z_I$
is its substring $z_{I'}$ which is relatively rare in $z$, i.e., $I'$ belongs to $\Sparse(j-1,i')$.}
for the candidate $\candidate{i}{I \times J}$} 
if $I' \in \Sparse(j-1,i')$ and there is some $J' \in \ZoomIn(j,I\times J, i, I',i')$ such that  
$\candidate{i'}{I' \times J'}$ is classified as \CLOSE{}. 
When lines (13-18) are executed for a marker $(I',i')$, $J$ is added to $\cK$ in line 17. Ideally, $\cK$ will consist of all intervals $J$ identifiable by their markers.
\item $\winners(j,I \times  J,i,i')=\{I' \in \Sparse(j-1,i') \cap \Intervals(w_{j-1};I): (I',i')$ is a marker
for $\candidate{i}{I \times J}\}$.
We will be interested in situations when for some $i'\le i$ there will be many markers, namely, $|\winners(j,I \times  J,i,i')| \ge \frac{1}{3}|\Sparse(j-1,i') \cap \Intervals(w_{j-1};I)|$, so that with high probability $\SparseSample(j-1,I,i')$ will contain a marker that will identify $J$.
\end{itemize}

%

\begin{algorithm}[H]
\begin{algorithmic}[1]
\caption{$\Enumerate(j,I\times \cJ,i)$}
\label{alg-enumerate}
   
\REQUIRE $j\in[k]$, $I \in \Intervals(w_j)$, $\cJ \subseteq \Intervals(w_j,\epssub{i+3})$, $i \in \{0,\ldots,\log(1/\theta)\}$, levels $1,\dots,j-1$ were already processed, and level $j$ was preprocessed.

\ENSURE Returns  $\cS=\{J \subseteq \cJ: \candidate{i}{I\times J} \text{ is }\CLOSE\}$

\vspace{1mm}
\hrule\vspace{1mm}

   \IF{$j=1$}
      \STATE Initialization: $\cS\longleftarrow \emptyset$.
      \FOR{$J \in \cJ$}
         \IF{$\ED(z_I,z_J,\epssub{i})$ returns \ACCEPT}
            \STATE Add $J$ to $\cS$. \cmt{$\candidate{i}{I \times J}$ {\em  is classified as \CLOSE}}
         \ENDIF            
      \ENDFOR
   \ELSE  
      \STATE \cmt{$j>1$}
 
      \STATE Initialization: $\cS\longleftarrow \Bbelow(j,I,i) \cap \cJ$, $\cJ \longleftarrow \cJ \setminus \cS$, $\cK \longleftarrow \emptyset$.
      \FOR{$i' \in \{0, \ldots, i\}$}
      \FOR{$I'\in \SparseSample(j,I,i')$}
      
         \STATE $\cJ' \longleftarrow \ZoomIn(j,I\times \cJ, i, I',i')$.
         \STATE $\cS' \longleftarrow \Enumerate( j-1, I' \times \cJ',i')$.

         \FOR{$J \in \cJ$}

            \IF{$\ZoomIn(j,I\times J,i,I',i') \cap \cS' \neq \emptyset$} 
           \STATE Add $J$ to $\cK$ 
           \ENDIF
           \ENDFOR
           \ENDFOR
          \ENDFOR
           \FOR{$J \in \cK$}
              
             \IF{ $\ED(z_I,z_J,\epssub{i})$ returns \ACCEPT}
               \STATE Add $J$ to $\cS$.  \cmt{$\candidate{i}{I \times J}$ {\em  is classified as \CLOSE}}
            \ENDIF
            
         \ENDFOR
    
   \ENDIF   

   \STATE Return $\cS$.
\end{algorithmic}
\end{algorithm}

\medskip

\begin{algorithm}[H]
\begin{algorithmic}[1]
\caption{ProcessDense$(j)$}
\label{alg-process}
   
\REQUIRE $j\in[k]$. Levels $1,\dots,j-1$ were already processed, and level $j$ was preprocessed.

\ENSURE For each  $i \in \{0,\ldots,\log(1/\theta)\}$, specify the set $\Sparse(j,i) \subseteq \Intervals(w_j)$ and
specify the sets $\Bdense(j,I,i)$  for all intervals $I \in \Intervals(w_j) \setminus \Sparse(j,i)$.  


\vspace{1mm}
\hrule\vspace{1mm}

   \FOR{$i=0,\ldots,\log 1/\theta$}

   \STATE Initialization: $\cT=\Intervals(w_j)$. 
   \STATE Initialization: $\Sparse(j,i)=\emptyset$.

\IF {$i\le \qual_j$}
\STATE Set $\Bdense(j,I,i)=\Intervals(w_j,\epssub{i+3})$  for every $I \in \cT$.
\ELSE 
\STATE \cmt{$i>\qual_j$}
   \WHILE{$\cT$ is non-empty}

      \STATE Pick $I\in \cT$.
      
      \STATE Let $\cS$ be the subset of $\Intervals(w_j,\epssub{i+3})$ obtained by including each element independently
with probability $p:=\min(1,(c_0 \log n)/d_j)$. 
      
      \IF{$\Enumerate(j,I\times \cS,i)$ has less than $p\cdot d_j$ elements }

         \STATE  Add $I$ to $\Sparse(j,i)$, and $\cT=\cT \setminus \{I\}$. \cmt{$I$ is declared sparse.}
         
      \ELSE
 
         \STATE  \cmt{$I$ is declared dense and used as a pivot.}

\STATE $h_1 \longleftarrow i-\qual_{j-1} - 7$; $h_2\longleftarrow i -2\qual_{j-1}-14$.  \cmt{Since $i>\qual_j$, $h_1,h_2 > 0$.}
         \
         \STATE $\cX \longleftarrow \Enumerate(j,I \times \Intervals(w_j), h_1)$.

         \STATE $\cY' \longleftarrow \Enumerate(j,I \times \Intervals(w_j,\epssub{h_2+3}),h_2)$.
         
         \STATE $\cY \longleftarrow \{ J \in \Intervals(w_j,\epssub{i+3}) : \Round(J, \epssub{h_2+3}) \in \cY' \}$.

         \STATE $\Bdense(j,I',i)\longleftarrow \cY$ for each $I'\in \cX$.

         \STATE  $\cT \longleftarrow \cT \setminus \cX$.
         
      \ENDIF
     
   \ENDWHILE
 \ENDIF
\ENDFOR

\STATE \cmt{Convert stack of \CLOSE{} boxes into weighted boxes.}
\STATE  $\cQ(j) \longleftarrow \{(I\times J,\epssub{i-\qual_j}): i \in \{0,\ldots,\log(1/\theta)\}, I \in \Intervals(w_j), J \in \Bdense(j,I,i)\}$.

\STATE $\cR(j)\longleftarrow\INDUCED(\cQ(j))$.

   
\end{algorithmic}
\end{algorithm}

\medskip
{\bf The procedure \ProcessDense}. This takes as input a level number $j$.  The procedure
corresponds closely to the procedure Dense Strip Removal in~\cite{CDGKS-arxiv}.

For each
$i \in \{0,\ldots,\log(1/\theta)\}$ the procedure builds a set $\Sparse(j,i) \subseteq \Intervals(w_j)$
and also builds sets $\Bdense(j,I,i) \subseteq \Intervals(w_j,\epssub{i+3})$ for every $I \in \Intervals(w_j)\setminus \Sparse(j,i)$.
This is done by processing the intervals of $\Intervals(w_j)$; when interval $I$ is processed it is either assigned to $\Sparse(j,i)$
or the set $\Bdense(j,I,i)$ is constructed.  We keep track of a subset $\cT \subseteq \Intervals(w_j)$ of {\em unprocessed} intervals. This set
is initialized to $\Intervals(w_j)$ and the iteration ends when $\cT=\emptyset$.  We proceed in rounds.  In a round we select an arbitrary $I$ 
from $\cT$.  We perform a test (see ''Testing potential pivots in $\ProcessDense$'' in Section~\ref{subsec:randomness}) to decide whether to put it in $\Sparse(j,i)$.  If $I$ is not placed in $\Sparse(j,i)$ 
then $I$ is designated the {\em pivot} for that round.  We then call $\Enumerate$ on the stack $I \times \cT$ (with suitable parameters) 
to determine the subset $\cX$ of $\Intervals(w_j)$, we call $\Enumerate$ on the stack $I \times \Intervals(w_j,\kappa)$
(for a suitable $\kappa \geq \epssub{i}$) to determine  $\cY'\subseteq \Intervals(w_j,\kappa)$ and we let $\cY$ be the set of intervals from $\Intervals(w_j,\epssub{i+3})$ which round to an interval in $\cY'$.
 We then define $\Bdense(j,I',i)=\cY$ for all $I' \in \cX$,
and remove $\cX$ from $\cT$, to complete the round.

The parameters used in the above calls are expressed in terms of $h_1$ and $h_2$ introduced in the pseudocode.   
The particular choice $h_1$ and $h_2$ is motivated by both the correctness analysis and the time analysis (Section~\ref{subsec:time analysis}).

In the sequel, we will need the following definition and observation.

\medskip
{\bf Approved Candidate.}  A candidate $\candidate{i}{I \times J}$ is said to be {\em approved} if $I \not\in \Sparse(j,i)$ and
$J \in \Bdense(j,I,i)$.  Note that the boxes in $\cQ(j)$ are in one-to-one correspondence with the
approved candidates, with $(I \times J, \epssub{i-\qual_j}) \in \cQ(j)$ if and only if $\candidate{i}{I \times J}$ is approved.
All candidates of the form $\candidate{i}{I \times J}$ are approved for $i\le \qual_j$.

\begin{proposition}
\label{prop:sparse empty}
At level $k$, the sets $\Sparse(k,i)$ are empty for all $i \in \{0,\dots,\log(1/\theta)\}$.
\end{proposition}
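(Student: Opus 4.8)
The plan is to trace through the definition of $\ProcessDense(k)$ and show that at the last level $k$ the density parameter is $d_k = 1$, and this forces every interval of $\Intervals(w_k)$ to be declared dense (never placed in $\Sparse(k,i)$). Recall from Section~\ref{subsec:params} that $d_k = 1$. First I would examine the sparse-test in line 12 of $\ProcessDense$: when an interval $I$ is picked as a potential pivot at level $j$, we form a random sub-sample $\cS$ of $\Intervals(w_j,\epssub{i+3})$ where each element is included independently with probability $p = \min(1, (c_0 \log n)/d_j)$, and $I$ is declared sparse iff $\Enumerate(j,I\times \cS,i)$ returns fewer than $p \cdot d_j$ elements. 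With $j=k$ and $d_k = 1$ we get $p = \min(1, c_0 \log n) = 1$ (since $c_0 \log n \geq 1$), so $\cS = \Intervals(w_k,\epssub{i+3})$ with probability $1$, and the threshold $p \cdot d_k = 1$. Thus $I$ is declared sparse only if $\Enumerate(k, I \times \Intervals(w_k,\epssub{i+3}), i)$ returns fewer than $1$ element, i.e., returns the empty set.

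Next I would argue that $\Enumerate(k, I \times \Intervals(w_k, \epssub{i+3}), i)$ is always non-empty, because it must contain at least one interval $J$ with $\candidate{i}{I \times J}$ classified as \CLOSE. The natural candidate is $J = I$ itself (viewed as an element of $\Intervals(w_k)$, which is a subset of $\Intervals(w_k,\epssub{i+3})$ since aligned intervals are $\delta$-aligned for any $\delta$): since $z_I = x_I$ and $z_J = z_I$, we have $\editd(z_I, z_J) = 0$, hence $\ncost(I \times J) = 0 \leq \epssub{i}$, so by Completeness of $\Enumerate$ the candidate $\candidate{i}{I \times I}$ is \CLOSE{} with high probability. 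Here I need to be slightly careful: Completeness of $\Enumerate$ only holds ``with high probability'' over the random choices made in $\Preprocess$, so strictly the statement ``$\Sparse(k,i) = \emptyset$'' should be understood to hold with high probability, or alternatively one checks that the diagonal box $I \times I$ gets certified deterministically through $\Bbelow$ at level $k$ — one can trace $\Preprocess(k)$ and check whether the $\APM$ call with $\cR(k-1,I)$ already captures $J=I$. The cleanest route, which I expect is what the authors intend, is to observe that for $i \leq \qual_k$ all candidates are approved by construction (the ``$i \le \qual_j$'' branch of $\ProcessDense$ sets $\Bdense$ to everything and never touches $\Sparse$), and for $i > \qual_k$ the pivot test reduces to the emptiness check above, which fails because the zero-cost diagonal candidate is always \CLOSE.

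The main obstacle is handling the ``with high probability'' caveat cleanly: one wants the conclusion of the proposition to be a clean deterministic statement, so either (a) the proposition is implicitly conditioned on the good event that all $\Enumerate$ completeness guarantees hold (which is the convention the paper seems to adopt for its correctness analysis), or (b) one shows the diagonal box is detected deterministically. I would go with a short remark pinning down convention (a): under the event that level-$k$ queries satisfy Completeness of $\Enumerate$ — which holds with high probability by the union bound the paper sets up in Section~\ref{subsec:params} — every $\Enumerate(k, I \times \Intervals(w_k,\epssub{i+3}), i)$ contains $I$, hence has at least one element, hence no $I$ is ever declared sparse, so $\Sparse(k,i) = \emptyset$ for all $i$. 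The remaining content is purely definitional unwinding of the parameter choice $d_k = 1$ and the sampling probability $p$, which requires no real calculation.
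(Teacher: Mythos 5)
Your proposal is correct and follows essentially the same route as the paper's own (very terse) proof: since $d_k=1$ forces $p=1$ and threshold $p\cdot d_k=1$, the sample $\cS$ is all of $\Intervals(w_k,\epssub{i+3})$ and in particular contains $I$, and the zero-cost diagonal candidate $\candidate{i}{I\times I}$ is \CLOSE{} by Completeness of $\Enumerate$, so the sparsity test never succeeds. Your caveat about conditioning on the successful-randomization event is a fair point that the paper leaves implicit (its Completeness of $\Enumerate$ is itself stated under that event), and your observation that the $i\le\qual_k$ branch leaves $\Sparse(k,i)$ empty deterministically is a harmless extra detail.
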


\begin{proof}
Since $d_k=1$, the set $\cS$ created in line (10) is all of $\Intervals(w_j,\epssub{i+3})$ which, in particular includes $I$. 
The set returned by $\Enumerate$ in line (11) includes $I$ and so the if condition fails, and $I$ is not added to $\Sparse(k,i)$.
\end{proof}

%
%
%
%
%

\subsection{The use of randomization}
\label{subsec:randomness}

Randomization is used in three parts of the algorithm: the subroutine $\ED$, the construction of $\SparseSample$ during $\Preprocess$
and in $\ProcessDense$, each time we test a selected $I \in \cT$ to decide whether it is a pivot.  We discuss each of these uses below.

\medskip
{\bf The subroutine $\ED$.} 
$\ED$ takes calling parameters $(n',\theta',\delta';x',y')$.  By our
assumption $\delta'$ is fixed to $n^{-12}$ for all calls.
The gap-soundness and completeness conditions for $\ED$ guarantee that
if $\ed(x',y') >Q'\theta' n'$ then $\ED$ returns \REJECT{},
and if $\ed(x',y') \leq \theta' n'$ then $\ED$ returns \ACCEPT{} with
probability at least $1-n^{-12}$.   Say that an execution of $\ED(n',\theta',n^{-12};x',y')$ fails if  $\ed(x',y') \leq \theta' n'$ and $\ED$ returns \REJECT{}.  
We will introduce an event $\SG$ that no call to $\ED$ fails. 

To simplify the analysis, we make the following assumption: when
we run $\MAIN$ we pregenerate a single string $\BSG$ of $b$ random bits where
$b$ is an upper bound on the number of random bits used in any call to $\ED$.  
In every call to $\ED$ we use (a prefix of) $\BSG$ to provide the random bits
for the call.  This makes all
calls to $\ED$ deterministic, and also ensures that if the algorithm
makes multiple calls
to $\ED$ with the same input parameters then all such calls yield the same output.

Reusing random bits for different calls of $\ED$ makes these calls
dependent, but this is irrelevant to the analysis.
The proof of correctness relies only on the fact that the event $\SG$ holds.

We now upper bound the probability that there is a call that does not
succeed.
Every possible input tuple
$(n',\theta',n^{-12};x',y')$ for $\ED$ satisfies that
$n'$ is a power of 2 with $n' <n$, $x',y'$ are substrings of $z=xy$ of
length $n'$, and $\theta'$ is an integral power of 1/2.    We may assume
that $\theta' \geq 1/n$ since for $\theta'<1/n$ we may assume that
$\ED$ is the deterministic algorithm that returns \ACCEPT{} if $x=y$
and \REJECT{} otherwise.    Let $\SG$ denote the event that for all possible choices of input parameters $(n',x',y',\theta')$ 
with $\theta' \geq 1/n$, the choice of random bits succeeds.

The number of possible choices
of input parameters  for which randomness is used is at most $4n^2\log^2(n)$.
(There are at most $\log(n)$ ways to choose $n'$, and to choose $\theta'$,
and at most $2n$ ways to choose the starting location of  $x'$ and of $y'$.)
Thus by a union bound, the probability that $\SG$ does not hold
is at most $n^{-8}$.

\medskip
{\bf The construction of $\SparseSample$}. $\SparseSample(j,I,i)$ is a random sample of $\Sparse(j-1,i)$ generated during \Preprocess$(j)$.
What we want from this sample is that for each $i' \in \{0,\ldots,i\}$, if a nontrivial fraction of $\Sparse(j-1,i)$ belongs
to the set of markers $\winners(j,I \times J,i,i')$ then $\SparseSample(j,I,i)$ should include a member of $\winners(j,I \times J,i,i')$.
(Note: for the purposes of this discussion, the exact technical definition of $\winners(j,I \times J,i,i')$ is unimportant,  we only need that for each $j,I,J,i,i'$,  $\winners(j,I \times J,i,i')$ and $\Sparse(j-1,i)$
are completely determined after iteration $j-1$,  and
$\winners(j,I \times J,i,i') \subseteq \Sparse(j-1,i)$.) 
Formally, we say that $\SparseSample(j,I,i)$ {\em fails} 
for $J \in \Intervals(w_j,\epssub{i+3})$
and $i' \in \{0,\ldots,i\}$ if  $|\Sparse(j-1,i) \cap \Intervals(w_{j-1};I)|>0$, $|\winners(j,I \times J,i,i')| \geq |\Sparse(j-1,i)\cap \Intervals(w_{j-1};I)|/3$ 
and $\SparseSample(j,I,i) \cap \winners(j,I \times J,i,i') = \emptyset$.  
Since $\winners(j,I \times J,i,i')$ is completely determined by the
end of iteration $j-1$, and $\SparseSample(j,I,i)$ is an independent sample of $30 \log n$
elements from $\Sparse(j,I,i)$ selected during iterations $j$,  
the probability that $\SparseSample(j,I,i)$ fails for $J,i'$ is at most
$(1-1/3)^{30 \log n} \leq n^{-10}$.  There are at most $n$ pairs $J,i'$ so  the probabibility that $\SparseSample(j,I,i)$
fails for some $J,i'$ is at most $n^{-9}$. There are at most $n$ triples $j,I,i$ so the probability
that some $\SparseSample(j,I,i)$ fails is at most $n^{-8}$. 
We denote by $\BSS^j$ the random bits that are used at iteration $j$ to generate the samples from $\Sparse(j,I,i)$ for all $I$ and $i$.

\medskip
{\bf Testing potential pivots in $\ProcessDense$}.  During the while loop for $I \in \cT$ of $\ProcessDense$, we make a random
selection of a set $\cS$, and this choice affects whether $I$ is assigned to $\Sparse(j,i)$ or becomes a pivot. The constant $c_0$ in line (10) is chosen below to
satisfy certain technical conditions.
We denote by $\BPD^j$ the random bits used at iteration $j$ to generate sets $\cS$ where we make the simplifying assumption 
that there is a designated block of bits for each possible $I\in \Intervals(w_j)$ and $i$ to select the corresponding $\cS$. (Some of the blocks
might be unused.)

There are two bad events that depend on the choice of $\cS$:
\begin{enumerate}
\item $|\Enumerate(j,I \times \Intervals(w_j,\epssub{i+3}),i)|<d_j/2$ and $I$ is not assigned to $\Sparse(j,i)$.
\item $|\Enumerate(j,I \times \Intervals(w_j,\epssub{i+3}),i)|>2d_j$ and $I$ is assigned to $\Sparse(j,i)$.
\end{enumerate}

For both of the bad events, we observe that (i) for any input $(j,I\times \cJ,i)$,
$\Enumerate(j,I \times \cJ,i)$ returns the stack of candidates
$\candidate{i}{I \times J}$ that are classified as $\CLOSE$ among $I\times \cJ$,  and
(ii) the classification of
level $j$ candidates as $\CLOSE$ or $\FAR$ is completely deterministic given the random
bits $\BSG$ for $\ED$, and the random bits $\BSS^{\le j}$ and $\BPD^{\le j-1}$ for the first $j-1$ iterations and $\Preprocess(j)$. 
Thus, for the random sample $\cS$ of $\Intervals(w_j,\epssub{i+3}),i)$, where
each interval is placed in $\cS$ independently with probability $p$, 
$\frac{1}{p}|\Enumerate(j,I \times \cS,i)|$ is an estimate of  
$|\Enumerate(j,I \times \Intervals(w_j,\epssub{i+3}),i)|$, and the 
bad events can only occur if this estimate is sufficiently inaccurate.
For suitably large $c_0$, a simple Chernoff-Hoeffding bound shows that for each $(I,i)$ the probability of a bad event
is at most
 $n^{-10}$,
and summing over the at most $O(n)$ such pairs, the probability of a bad event  is at most $n^{-9}$. 
We say $\ProcessDense$ has successful sampling if no such bad event occurs.

\medskip
{\bf Successful randomization}.
An execution of $\MAIN$ has {\em successful  randomization} if all calls to $\ED$ are correct, all calls to $\SparseSample$ are successful,
and  $\ProcessDense$ has successful sampling. We denote the event
of successful randomization by $\SR$.  By the above, $\Pr[\SR] \geq 
1-1/n^7$.

%

\subsection{The properties enforced by $\MAIN$.}
\label{subsec:properties}

In this section we state and prove a theorem that states the main properties enforced by $\MAIN$.
By hypothesis, $\ED$  is a gap algorithm for edit distance satisfying $\gapcondition(T',\zeta',Q')$. We want to show that$\MAIN$
satisfies $\gapcondition(T,\zeta,Q)$ with $T=T'+1/6$ and suitably chosen
 $\zeta>0$ and $Q\geq 1$  (depending only on $T'$,$\zeta'$ and $Q'$).
As in the discussion in 
Section~\ref{subsec:objects}, we say that the level $j$ candidate 
$\candidate{i}{I \times J}$ is classified as \CLOSE{} if $\Enumerate(j,I \times \{J\},i)$ returns  $\{J\}$ and is classified as \FAR{} if $\Enumerate(j,I \times \{J\},i)$ returns 
$\emptyset$.

\begin{theorem}
\label{thm:inductive properties}
Assume that $\ED$ is a gap algorithm for edit distance satisfying $\gapcondition(T',\zeta',Q')$.
Consider a run of $\MAIN$ on input $(n,\theta,1/2;x,y)$ where $n^{-\zeta'} \leq \theta \leq 1$ and $|x|=|y|=n$, that
meets the conditions for successful randomization.

For all $j \in \{1,\ldots,k\}$, $i \in \{0,\ldots,\log(1/\theta)\}$, $I \in \Intervals(w_j)$, $J \in \Intervals(w_j,\epssub{i+3})$, $\cJ \subseteq \Intervals(w_j,\epssub{i+3})$:
\begin{description}
\item[Soundness of $\Bbelow$.]  If $J \in \Bbelow(j,I,i)$ then $\ncost(I \times J) \leq \epssub{i-\qual_{j-1}-6}$.
\item[Completeness of $\Bbelow$.] If $\ncost(I \times J) \leq \epssub{i}$ then (i) $J \in \Bbelow(j,I,i)$  or (ii)  there exists an $i' \leq i$ such that  $|\winners(j,I \times J,i,i')| > \frac{1}{3}|\Intervals(w_{j-1};I) \cap \Sparse(j-1,i')|$.  
\item[Consistency of $\Enumerate$.] 
$J \in \Enumerate(j,I \times \cJ,i)$ if and only if $J \in \Enumerate(j,I \times \{ J \},i)$.
If $J \in \Enumerate(j,I \times \{ J \},i)$ then  $\candidate{i}{I \times J}$ is classified as \CLOSE.
\item[Soundness of $\Enumerate$.] If  $\candidate{i}{I \times J}$ is classified as \CLOSE{} then $\ncost(I \times J) \leq \epssub{i-\qual_{j-1}-6}$.
\item[Completeness of $\Enumerate$.]If $\ncost(I \times J) \leq \epssub{i}$
then $\candidate{i}{I \times J}$ is classified as \CLOSE.
\item[Validity of $\Sparse$.] $I \in \Sparse(j,i)$ implies that $\Enumerate(j,I \times \Intervals(w_j,\epssub{i+3}))$ has size at most $2d_j$.
\item[Soundness of $\Bdense$.] If $J \in \Bdense(j,I,i)$ then $\ncost(I \times J) \leq \epssub{i-\qual_j}$.
\item[Completeness of $\Bdense$.] If  $I \not\in \Sparse(j,i)$ and $\ncost(I \times J) \leq \epssub{i}$ then $J \in \Bdense(j,I,i)$.
\item[Soundness of $\cR(j)$.]  Every box in $\cR(j)$ is correctly certified, i.e., $(I \times J,\kappa) \in \cR(j)$ implies $\ncost(I \times J) \leq \kappa$.
\item[Completeness of $\cQ(j)$.] If $I \not\in \Sparse(j,i)$ and $\ncost(I \times J) \leq \epssub{i}$ then $(I \times J,\min(1,\epssub{i-\qual_j})) \in \cQ(j)$ 
\end{description}
\end{theorem}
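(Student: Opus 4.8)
The plan is to prove all ten assertions of Theorem~\ref{thm:inductive properties} simultaneously by induction on the level $j$, with a nested induction on the iteration counter $i$ (from small to large) where needed. The key point is that the classification of level $j$ candidates as \CLOSE/\FAR{} depends only on the random bits of earlier iterations plus $\Preprocess(j)$, so under the event $\SR$ (successful randomization) every sampling step behaves as if it returned exact answers, up to the stated slack. I would first dispatch the statements about $\Bbelow$, since they feed directly into $\Enumerate$. \textbf{Soundness of $\Bbelow$} follows from Soundness of \APM{} applied to the call in line~11 of \Preprocess{} with threshold $\epssub{i-\qual_{j-1}-5}$: if $J \in \Bbelow(j,I,i)$ then $\cost(I\times J)\le 2\epssub{i-\qual_{j-1}-5}\mu(I) = \epssub{i-\qual_{j-1}-6}\mu(I)$. \textbf{Completeness of $\Bbelow$} is where the $w_{j-1}$-boxes in $\cR(j-1)$ enter: by the inductive Completeness of $\cQ(j-1)$ and the definition of $\INDUCED$, if $\ncost(I\times J)\le\epssub{i}$ and the $w_{j-1}$-subboxes "cover" $I\times J$ cheaply, then $\cost_{\cR(j-1,I)}(I\times J)$ is small and \APM{} catches $J$; otherwise — and this is the crux — one argues that a fixed shortest traversal $\tau$ of $I\times J$ has a large fraction of its $w_{j-1}$-subintervals $I'$ lying in $\Sparse(j-1,i')$ for an appropriate $i'$, and for each such $I'$ the box $I'\times(\text{vertical projection of }\tau_{I'})$ is \CLOSE{} after suitable rounding, using Propositions~\ref{prop:align-box}, \ref{prop:point displacement} and the definition of $\ZoomIn$; this yields $|\winners(j,I\times J,i,i')| > \frac13|\Intervals(w_{j-1};I)\cap\Sparse(j-1,i')|$, giving case (ii).

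Next I would handle $\Enumerate$. \textbf{Consistency} is immediate from the code: $\Enumerate(j,I\times\cJ,i)$ processes each $J\in\cJ$ using sets that do not depend on $\cJ$ (the sets $\Bbelow$, $\SparseSample$, and recursive calls whose relevant output for a fixed $J$ is determined by $J$ alone via $\ZoomIn$), so membership of $J$ in the output is a property of $\candidate{i}{I\times J}$ alone. \textbf{Soundness of $\Enumerate$}: a $J$ enters $\cS$ either via $\Bbelow(j,I,i)$ — handled by Soundness of $\Bbelow$ — or via the $\cK$-loop, where it is only added if $\ED(z_I,z_J,\epssub{i})$ \ACCEPT{}s; under $\SG$, gap-soundness of $\ED$ forces $\ed(z_I,z_J)\le Q'\epssub{i}$, i.e. $\ncost(I\times J)\le 2^{\qual_0}\epssub{i} = \epssub{i-\qual_0}\le\epssub{i-\qual_{j-1}-6}$ (for $j=1$ directly; for $j>1$ the bound from $\Bbelow$ already suffices and $\epssub{i-\qual_0}$ is even smaller). \textbf{Completeness of $\Enumerate$} is proved by the nested induction: if $\ncost(I\times J)\le\epssub{i}$, then by Completeness of $\Bbelow$ either $J\in\Bbelow(j,I,i)\subseteq\cS$ and we are done, or there is $i'\le i$ with many markers; since $\SparseSample(j,I,i')$ did not fail (event $\SR$), it contains some $I'\in\winners(j,I\times J,i,i')$, so some $J'\in\ZoomIn(j,I\times J,i,I',i')$ has $\candidate{i'}{I'\times J'}$ \CLOSE, which by the level-$(j-1)$ Completeness/Consistency is returned by $\Enumerate(j-1,I'\times\cJ',i')$; hence $J$ enters $\cK$, and since $\ncost(I\times J)\le\epssub{i}$ means $\ed(z_I,z_J)\le\epssub{i}\mu(I)$, gap-completeness of $\ED$ (under $\SG$) makes $\ED(z_I,z_J,\epssub{i})$ \ACCEPT, so $J\in\cS$.

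Finally the $\ProcessDense$-side statements. \textbf{Validity of $\Sparse$}: $I$ is added to $\Sparse(j,i)$ only when $|\Enumerate(j,I\times\cS,i)|<p d_j$ on the random subsample $\cS$; under successful sampling (event $\SR$) this rules out the bad event $|\Enumerate(j,I\times\Intervals(w_j,\epssub{i+3}),i)|>2d_j$, and for the trivial range $i\le\qual_j$ one checks the relevant bound separately or observes $\Sparse$ is handled by line~4. \textbf{Soundness of $\Bdense$}: for $i\le\qual_j$, $\Bdense(j,I,i)=\Intervals(w_j,\epssub{i+3})$ and $\epssub{i-\qual_j}\ge1$ so the bound is vacuous; for $i>\qual_j$, $\cY$ is built from $\cY'=\Enumerate(j,I\times\cdots,h_2)$ with $h_2 = i-2\qual_{j-1}-14$, whose elements $J'$ satisfy $\ncost(I\times J')\le\epssub{h_2-\qual_{j-1}-6}$ by Soundness of $\Enumerate$; since $\cX$-elements $I'$ are \CLOSE{} to $I$ at level $h_1=i-\qual_{j-1}-7$, Proposition~\ref{prop:sym diff} together with the displacement/rounding bookkeeping gives $\ncost(I'\times J)\le\epssub{i-\qual_j}$ for $J\in\cY$ assigned to $I'$, after plugging in the recurrence $\qual_j = 3\qual_{j-1}+21$. \textbf{Completeness of $\Bdense$}: if $I\notin\Sparse(j,i)$ then $I$ (or its pivot $I'$ with $I\in\cX$, but in fact one shows $I$ itself becomes a pivot when needed) has its $\cX,\cY$ computed, and a $J$ with $\ncost(I\times J)\le\epssub{i}$ is \CLOSE{} at level $i$ hence, via the displacement argument relating it to the pivot, lands in $\cY$ after rounding. \textbf{Soundness of $\cR(j)$} follows from Soundness of $\Bdense$ plus Proposition~\ref{prop:INDUCED}, since $\cQ(j)$ weights $(I\times J,\epssub{i-\qual_j})$ exactly at the level certified. \textbf{Completeness of $\cQ(j)$} is Completeness of $\Bdense$ restated via the definition of approved candidate.

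The main obstacle is \textbf{Completeness of $\Bbelow$} (equivalently the marker-counting dichotomy): one must show that a cheap traversal of a $w_j$-box, when chopped into $w_{j-1}$-pieces, either is cheaply reconstructible from already-certified $w_{j-1}$-boxes sitting in $\cR(j-1)$ (so \APM{} finds it), or else genuinely touches many \emph{sparse} $w_{j-1}$-intervals each of which is \CLOSE{} to the corresponding piece of the traversal. This requires a careful amortization — the pieces lying in dense intervals contribute to the \APM{} reconstruction, while a constant fraction of the total width must lie in sparse pieces — and getting the constant $\frac13$ and the index shifts $i'$, $\qual_{j-1}+6$, etc., to line up is the delicate part of the whole argument, mirroring the key claim of~\cite{CDGKS18} but now across an entire hierarchy of widths.
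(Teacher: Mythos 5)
Your overall plan coincides with the paper's proof: induction on $j$ with the properties established in the listed order, Soundness of $\APM$ (together with the inductively known Soundness of $\cR(j-1)$, which you should cite explicitly, since $\APM$'s guarantees presuppose certified boxes) for Soundness of $\Bbelow$, the two routes ($\Bbelow$ versus a call to $\ED$) for Soundness/Consistency of $\Enumerate$, the success of $\SparseSample$ plus the marker mechanism for Completeness of $\Enumerate$, the triangle inequality through a pivot for $\Bdense$, and Proposition~\ref{prop:INDUCED} for $\cR(j)$; those parts are essentially right. One local slip: in Completeness of $\Bdense$ your parenthetical claim that ``$I$ itself becomes a pivot when needed'' is false and unnecessary. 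The paper's argument takes the pivot $I^*$ of the round in which $I$ was removed from $\cT$, uses Soundness of $\Enumerate$ to get $\ncost(I^*\times I)\le\epssub{h_1-\qual_{j-1}-6}$, and then the triangle inequality with Proposition~\ref{prop:sym diff} and Completeness of $\Enumerate$ at parameter $h_2$ to conclude $\Round(J,\epssub{h_2+3})\in\cY'$, hence $J\in\Bdense(j,I,i)$; your hedged alternative (``or its pivot $I'$ with $I\in\cX$'') is the correct route.

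The genuine gap is Completeness of $\Bbelow$, which is the only non-routine item and which you describe only as a dichotomy to be argued. The paper's mechanism is quantitative and none of it appears in your proposal: fix a min-cost traversal $\tau$ of $I\times J$; for each $I'\in\Intervals(w_{j-1};I)$ and $i'\le i$ choose $J_{i'}(I')$ via Proposition~\ref{prop:align-box}; let $t(I')$ be the largest $h\le i$ for which $\candidate{h}{I'\times J_h(I')}$ is \emph{approved}, so that $(I'\times \hat{J}(I'),\epssub{t(I')-\qual_{j-1}})\in\cQ(j-1)$; build an explicit traversal in the shortcut graph $\Gt(\cR(j-1,I))$ out of the trimmed $\INDUCED$ boxes plus vertical/horizontal connectors to get $\cost_{\cR(j-1,I)}(I\times J)\le 8\epssub{i}w_j+\sum_{I'}\epssub{t(I')-\qual_{j-1}-1}w_{j-1}$; and then, assuming (ii) fails, sum the inequalities $|\winners(j,I\times J,i,i')|\le\tfrac12\,|(\Sparse(j-1,i')\cap\Intervals(w_{j-1};I))\setminus\winners(j,I\times J,i,i')|$ weighted by $\epssub{i'}$, swap the order of summation, and invoke the marker criterion (if $\epssub{i'}\ge 2\ncost(\tau_{I'})+\epssub{i+3}$ and $I'\in\Sparse(j-1,i')$ then $(I',i')$ is a marker, using assumption~(\ref{eqn:tech assump}) to control the displacement) together with the fact that $t(I')<i$ and $\epssub{t(I')+1}\ge 2\ncost(\tau_{I'})+\epssub{i+3}$ force $I'\in\Sparse(j-1,t(I')+1)$ (by maximality of $t(I')$ and Completeness of $\Bdense$ at level $j-1$), to conclude $\sum_{I'}\epssub{t(I')}w_{j-1}\le 14\,\epssub{i}w_j$ and hence $\cost_{\cR(j-1,I)}(I\times J)\le\epssub{i-\qual_{j-1}-5}w_j$, after which Completeness of $\APM$ yields (i). Your gloss -- that ``a large fraction of the $w_{j-1}$-subintervals lie in $\Sparse(j-1,i')$'' or that ``a constant fraction of the total width must lie in sparse pieces'' -- is not what is proved and would not suffice: the sparse intervals inside $I$ may be arbitrarily few, and what the argument actually needs is that for \emph{every} scale $i'$ the markers are a small fraction of the sparse set, converted by the maximal-approved-scale bookkeeping into a per-interval charge against $\ncost(\tau_{I'})$. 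Since this amortization is the substance of the theorem, the proposal as written does not establish Completeness of $\Bbelow$ (and consequently neither Completeness of $\Enumerate$ nor the downstream completeness statements, which all rest on it).
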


The proof of this theorem is by induction on $j$.  For fixed $j$ when we prove a property we assume the properties
listed above it hold.   With the exception of the Completeness of $\Bbelow$, which we defer to the next subsection, the proofs  are straightforward. 


\medskip
{\bf Proof of Soundness of $\Bbelow$.}
For $j=1$, the requirement is vacuously satisfied.  
Suppose $j > 1$. By Soundness of $\cR(j-1)$,
every box in $\cR(j-1,I)$ is certified. If $J \in \Bbelow(j,I,i)$, then the pseudocode
implies that $J \in \APM(j, I \times \Intervals(w_{j},\epssub{i+3},\epssub{i-\qual_{j-1}-5}, \cR(j-1,I)) $.    By definition of the soundness of \APM, $J$ is included in the output to the call of $\APM$  implies that
$\cost(I \times J) \leq 2\epssub{i-\qual_{j-1}-5}=\epssub{i-\qual_{j-1}-6}$.

\medskip
{\bf Proof of Completeness of $\Bbelow$.}  See subsection~\ref{subsec:completeness}

\medskip
{\bf Proof of Consistency of $\Enumerate$.}
We must show that whether 
$J \in \Enumerate(j,I \times \cJ,i)$ does not depend on $\cJ \setminus \{J\}$.
In the case $j=1$, $J \in \Enumerate(i,I \times \cJ,i)$ if and only $\ED(I \times J) \leq \epssub{i}$ returns \ACCEPT{} which
does not depend on $\cJ \setminus \{J\}$.  Assume $j>1$.  From the pseudocode of $\Enumerate$, $J \in \Enumerate(j,I \times \cJ,i)$  if and only if
(i) $J \in \Bbelow(j,I,i)$ or (ii) $J \in \cK$ and $\ED(z_I,z_J,\epssub{i})$ returns \ACCEPT.
Neither condition (i) nor  $\ED(z_I,z_J,\epssub{i})$ depend on $\cJ \setminus \{J\}$.
It remains to show that whether $J \in \cK$ is also independent of $\cJ \setminus \{J\}$.  Now $J \in \cK$ if and only if
there exists $i' \in \{0,\ldots,i\}$, $I' \in \SparseSample(j,I,i)$ and  $J' \in \ZoomIn(j,I \times J,i,I',i')$ such that  $J' \in \cS'$.
The set $\ZoomIn(j,I \times J,i,I',i')$ obviously doesn't depend on $\cJ \setminus \{J\}$.  
For $J' \in \ZoomIn(j,I \times J,i,I',i')$ we must have $J' \in \cJ'$, and therefore by the consistency of $\Enumerate$
at level $j-1$, $J' \in \Enumerate(j-1,I' \times \cJ',i')$ if and only if $J' \in \Enumerate(j-1,I' \times \{J'\},i')$.

\medskip
{\bf Proof of Soundness of $\Enumerate$.} $\candidate{i}{I \times J}$ is classified as \CLOSE{}
means that $J \in \Enumerate(j,I \times \{J\},i)$.  Now for this to happen either (i) $\ED(z_I,z_J, \epssub{i})$ returns \ACCEPT,
or (ii) $J \in \Bbelow(j,I,i)$.  If (i) holds then the guarantee on $\ED$ implies $\ncost(I \times J) \leq Q'\epssub{i} \leq \epssub{i-\qual_{j-1}-6}$, since $\log(Q')=\qual_0 \leq \qual_{j-1}$ for all $j \geq 1$.
If (ii) holds then the result follows from the Soundness of $\Bbelow$.

\medskip
{\bf Proof of Completeness of $\Enumerate$.}
Suppose $\ncost(I \times J) \leq \epssub{i}$.  By the Completeness of $\Bbelow$, we have (i) $J \in \Bbelow(j,I,i)$  or (ii) $\Sparse(j-1,i) \cap \Intervals(w_{j-1};I) \neq \emptyset$ and there exists an $i^* \leq i$ so that $|\winners(j,I \times J,i,i^*)| \geq \frac{1}{3}|\Intervals(w_{j-1};I) \cap \Sparse(j-1,i^*)|$.
If (i) holds, then the definition of $\Enumerate$ immediately gives $J \in \Enumerate(j,I \times \{J\},i)$.  If (ii) holds,
then the success condition for $\SparseSample(j,I,i)$ (from Section~\ref{subsec:randomness}) implies that there  is an $I^* \in \SparseSample(j,I,i^*)$ such that $(I^*,i^*)$ is a marker for $\candidate{i}{I \times J}$.  
During the execution of $\Enumerate(j,I \times J,i)$, when $i^*$ is selected in line (11) and $I^*$ in line (12), by the definition
of marker, $J$ is added to $\cK$ in line (17).
The correctness of $\ED{}$ implies
that  $\ED(I \times J,\epssub{i})$ will \ACCEPT{} in line (23) and so $J$ will be added to $\cS$.

\medskip
{\bf Proof of Validity of $\Sparse$.}
This follows immediately from the assumption that $\ProcessDense$ has successful sampling.

\medskip
{\bf Proof of Soundness of $\Bdense$.} 
For $i\le \qual_j$ the claim is trivial so we assume $i-\qual_j >0$.
Suppose $J \in \Bdense(j,I,i)$.  $\Bdense(j,I,i)$ was defined during iteration $i$ of the main loop (1-34) of $\ProcessDense(j)$, during one of the iterations of the while loop (8-23).  Let $I^*$ be the pivot during
that iteration.   Then $I \in \Enumerate(j,I^* \times \Intervals(w_j),h_1)$ and $J' \in \Enumerate(j,I^* \times \Intervals(w_j,\epssub{h_2+3}),h_2)$, 
for $J'=\Round(J, \epssub{h_2+3})$.  By the Soundness of $\Enumerate$, $\ncost(I^{*} \times I) \leq \epssub{h_1-\qual_{j-1}-6}$ and $\ncost(I^* \times J') \leq \epssub{h_2-\qual_{j-1}-6}$.  By
the triangle inequality  and Propositon~\ref{prop:sym diff}, we have $\ncost(I \times J) \leq \epssub{h_1-\qual_{j-1}-6}+\epssub{h_2-\qual_{j-1}-6} + \epssub{h_2+3}  \leq 2\epssub{h_2-\qual_{j-1}-6}=\epssub{h_2-\qual_{j-1}-7} = \epssub{i-3\qual_{j-1}-21}=\epssub{i-\qual_j}$.

\medskip
{\bf Proof of Completeness of $\Bdense$.}
Suppose $I \not\in \Sparse(j,i)$ and  $\ncost(I \times J) \leq \epssub{i}$. Since $I \not\in \Sparse(j,i)$ during iteration $i$ of the main loop (1),
there is an iteration of the while loop (8-22) of $\ProcessDense(j)$
where $I$ was removed from $\cT$.  Let $I^*$ be the pivot for that iteration.  Since
$I$ was removed from $\cT$, $I \in \cX$ during this iteration, so  $I \in \Enumerate(j,I^* \times \Intervals(w_j),h_1)$ and by the Soundness of $\Enumerate$
$\ncost(I^* \times I) \leq \epssub{h_1-\qual_{j-1}-6}$.  Let $J'=\Round(J, \epssub{h_2+3})$. It suffices to show that $J \in \cY$ for this same iteration, which would follow
from $J' \in \Enumerate(j,I^* \times \Intervals(w_j,\epssub{h_2+3}),h_2)$.   By the Completeness of $\Enumerate$ it suffices to
show that $\ncost(I^* \times J') \leq \epssub{h_2}$.  By the triangle inequality and Propositon~\ref{prop:sym diff},
$\ncost(I^* \times J') \leq \ncost(I^* \times I) + \ncost(I \times J')
\leq \ncost(I^* \times I) + \ncost(I \times J) + \epssub{h_2+3} 
\leq \epssub{h_1-\qual_{j-1}-6}+\epssub{i} + \epssub{h_2+3} \leq \epssub{h_2}/2 + \epssub{h_2}/4 + \epssub{h_2}/8 \le \epssub{h_2}$ as required.

\medskip
{\bf Proof of Soundness of $\cR(j)$.} By Proposition~\ref{prop:INDUCED}, it suffices that every box in $\cQ(j)$ is correctly certified.   In line (26) of $\ProcessDense(j)$, $(I \times J, \epssub{i-\qual_j}) \in \cQ(j)$ only if $J \in \Bdense(j,I,i)$ 
which   is correctly certified by
the Soundness of $\Bdense$.

\medskip
{\bf Proof of Completeness of $\cQ(j)$.}
 Suppose $I \not\in \Sparse(j,i)$ and $\ncost(I \times J) \leq \epssub{i}$. By the Completeness of $\Bdense$, $J \in \Bdense(j,I,i)$
and so the definition of $\cQ(j)$ implies that $(I \times J,\epssub{i-\qual_j}) \in \cQ(j)$.

\subsection{Proof of Completeness of $\Bbelow$}
\label{subsec:completeness}
Here we finish the proof of Theorem~\ref{thm:inductive properties}, by establishing the final property, whose proof
is significantly more involved than that of the others.   The proof is based on ideas from~\cite{CDGKS-arxiv}.

Consider a candidate $\candidate{i}{I \times J}$ with $\ncost(I \times J) \leq \epssub{i}$. 
We assume condition (ii) fails  and deduce $\cost_{\cR(j-1,I)}(I \times J) \leq \epssub{i-\qual_{j-1}-5} w_j$. By the definition
of $\Preprocess$ and the Completeness of $\APM$, this immediately implies condition
$J \in \Bbelow(j,I,i)$, which is condition (i).  

Fix a minimum cost traversal $\tau$ of $I \times J$.
The proof proceeds  via the following steps.

\begin{description}
\item[Step 1.]  For each
$I' \in \Intervals(w_{j-1};I)$ we specify a candidate  
$\candidate{t(I')}{I' \times \hat{J}(I')}$, which is approved in the sense defined in the description of $\ProcessDense$
in Section~\ref{subsec:mechanics}. (The
collection of boxes $\{I' \times \hat{J}(I'):I' \in \Intervals(w_{j-1};I)\}$ should be thought of as approximatly covering $\tau$.) 
\item[Step 2.]  We upper bound $\cost_{\cR(j-1,I)}(I \times J)$ as a constant times $\sum_{I'} \epssub{t(I')} w_{j-1}$ plus $8\epssub{i}w_j$.
\item[Step 3.] We show that if (ii) fails, then $\sum_{I'}\epssub{t(I')} w_{j-1}$ can be upper bounded  by a constant multiple of $\epssub{i}w_j$
\item[Step 4.] This gives that $\cost_{cR(j-1)}(I \times J)$ is at most a constant multiple of $\epssub{i}w_j$.
\end{description}

{\bf Step 1.} Specifying $\candidate{t(I')}{I \times \hat{J}(I')}$ for each $I'$.
Consider a pair $(I',i')$ where $i' \in \{0,\ldots,i\}$ and $I' \in \Intervals(w_{j-1};I)$.
 Proposition~\ref{prop:align-box} implies there 
is a level $j-1$ candidate $\candidate{i'}{I' \times J'}$ such that
$\ncost(I' \times J') \leq 2\ncost(\tau_{I'})+\epssub{i'+3}$ and 
$\displace(I' \times J',\tau_{I'}) \leq \cost(\tau_{I'})+\epssub{i'+3}w_{j-1}$.
Select such an interval $J'$ and denote it by $J_{i'}(I')$ (keeping the dependence on $\tau$ implicit.)

For each $I'$ let us define $t(I')$ to be the largest index $h\le i$ for which the candidate $\candidate{h}{I' \times J_i(I')}$ is approved
that is $I' \not\in \Sparse(j-1,h)$ and $J_{i'}(I') \in \Bdense(j-1,I',h)$. Let $\hat{J}(I')=J_{t(I')}(I')$.  We record the important properties:

\begin{proposition}
\label{prop:t(I')}  For each $I' \in \Intervals(w_{j-1};I)$:
\begin{enumerate}
\item The box $I' \times \hat{J}(I')$ satisfies $\ncost(I' \times \hat{J}(I')) \leq 2\ncost(\tau_{I'})+\epssub{t(I')+3}$ and 
$\displace(I' \times \hat{J}(I'),\tau_{I'}) \leq \cost(\tau_{I'})+\epssub{i'+3}w_{j-1}$.  
\item The candidate $\candidate{t(I')}{I' \times \hat{J}(I')}$ is approved, and hence $(I' \times \hat{J}(I'),\epssub{t(I')-\qual_{j-1}}) \in \cQ(j-1)$.
\item For any $i' \in \{t(I')+1,\dots,i\}$ either  $I' \in \Sparse(j-1,i')$ or $I' \not\in \Sparse(j-1,i')$
and $J_{i'}(I') \not\in \Bdense(j-1,I',i')$.
\end{enumerate}
\end{proposition}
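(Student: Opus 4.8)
The plan is to prove all three items essentially by unwinding definitions, since $\hat{J}(I')$ is set to $J_{t(I')}(I')$ and $t(I')$ is defined by a maximality condition phrased directly in terms of the sets $\Sparse(j-1,\cdot)$ and $\Bdense(j-1,\cdot)$ produced by $\ProcessDense(j-1)$ (I read the definition as: $t(I')$ is the largest $h\le i$ for which $\candidate{h}{I'\times J_h(I')}$ is approved, i.e.\ $I'\notin\Sparse(j-1,h)$ and $J_h(I')\in\Bdense(j-1,I',h)$, and $\hat{J}(I')=J_{t(I')}(I')$). The one preliminary point that genuinely needs an argument is that $t(I')$ is well-defined, i.e.\ that there exists at least one admissible $h$; the three items then read off immediately.

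For well-definedness I would invoke the behaviour of $\ProcessDense(j-1)$ on small indices: since $\qual_{j-1}\ge 0$, for every $h\le\qual_{j-1}$ (in particular $h=0$) the procedure sets $\Sparse(j-1,h)=\emptyset$ and $\Bdense(j-1,I',h)=\Intervals(w_{j-1},\epssub{h+3})$ for all $I'$. Because $J_0(I')$ is, by its construction via Proposition~\ref{prop:align-box}, an $\epssub{3}$-aligned interval of width $w_{j-1}$, hence a member of $\Intervals(w_{j-1},\epssub{3})$, the candidate $\candidate{0}{I'\times J_0(I')}$ is approved; so $h=0$ is admissible and the maximum $t(I')\in\{0,\ldots,i\}$ exists (in particular $t(I')\le i$, so Proposition~\ref{prop:align-box} is legitimately applicable with parameter $t(I')$).

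Given this, item (1) is exactly the conclusion of Proposition~\ref{prop:align-box} applied to $\tau_{I'}$ with alignment parameter $\epssub{t(I')+3}$: by definition $\hat{J}(I')=J_{t(I')}(I')$ is the interval that proposition supplies, so it satisfies $\ncost(I'\times\hat{J}(I'))\le 2\ncost(\tau_{I'})+\epssub{t(I')+3}$ and $\displace(I'\times\hat{J}(I'),\tau_{I'})\le\cost(\tau_{I'})+\epssub{t(I')+3}w_{j-1}$ (the index $i'$ appearing in the displacement bound of the statement should be $t(I')$). Item (2) is the defining property of $t(I')$ together with the construction of $\cQ(j-1)$: $\candidate{t(I')}{I'\times\hat{J}(I')}$ being approved means $I'\notin\Sparse(j-1,t(I'))$ and $\hat{J}(I')\in\Bdense(j-1,I',t(I'))$, and substituting this into the definition of $\cQ(j-1)$ in line (26) of $\ProcessDense$ gives $(I'\times\hat{J}(I'),\epssub{t(I')-\qual_{j-1}})\in\cQ(j-1)$. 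Item (3) is precisely the maximality of $t(I')$: for each $i'\in\{t(I')+1,\ldots,i\}$ the candidate $\candidate{i'}{I'\times J_{i'}(I')}$ is not approved, and negating ``$I'\notin\Sparse(j-1,i')$ and $J_{i'}(I')\in\Bdense(j-1,I',i')$'' yields exactly the stated dichotomy.

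I do not expect a real obstacle here: this proposition is pure bookkeeping against the pseudocode. If anything, the only step a reader might pause on is the well-definedness of $t(I')$, which is why I would settle the $h\le\qual_{j-1}$ case of $\ProcessDense$ before touching the three items. The substantive difficulty of the surrounding Completeness-of-$\Bbelow$ argument lies in the later steps — bounding $\cost_{\cR(j-1,I)}(I\times J)$ by a constant times $\sum_{I'}\epssub{t(I')}w_{j-1}$ plus $8\epssub{i}w_j$ (Step 2), and then bounding $\sum_{I'}\epssub{t(I')}w_{j-1}$ by a constant multiple of $\epssub{i}w_j$ when condition (ii) fails (Step 3) — not in this preparatory proposition.
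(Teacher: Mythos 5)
Your proposal is correct and follows essentially the same route as the paper, which likewise treats items (1) and (2) as immediate from the definitions of $t(I')$, $\hat{J}(I')$ and of $\cQ(j-1)$, and item (3) as the maximality of $t(I')$. The only addition is your explicit well-definedness argument for $t(I')$, which the paper covers implicitly by its earlier remark that every candidate $\candidate{h}{I'\times J}$ with $h\le \qual_{j-1}$ is approved (and you are also right that the $\epssub{i'+3}$ in the displacement bound of the statement should read $\epssub{t(I')+3}$).
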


\begin{proof}
The first two properties follow immediately from the definitions of $t(I')$ and $\hat{J}(I')$.  For the third property, the maximality of $t(I')$
implies that for $i' \in \{t(I')+1,\dots,i\}$, $\candidate{i'}{I' \times J_i(I')}$ is not approved, and the result follows from the definition of approved.
\end{proof}
\medskip
{\bf Step 2.} Upper bound on $\cost_{\cR(j)}(I \times J)$.

\begin{proposition}
\label{prop:R(j-1) cost}
$$\cost_{\cR(j-1,I)}(I \times J) \leq 8\epssub{i}w_j+\sum_{I' \in \Intervals(w_{j-1};I)} \epssub{t(I')-\qual_{j-1}-1}w_{j-1}.$$
\end{proposition}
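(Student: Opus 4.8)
The plan is to construct an explicit traversal of $I \times J$ in the shortcut graph $\widetilde{G}(\cR(j-1,I))$ whose cost is bounded as claimed, and then invoke the definition of $\cost_{\cR(j-1,I)}$. The natural candidate traversal follows the minimum-cost traversal $\tau$ of $I \times J$, but ``shortcuts across'' each subinterval $I' \in \Intervals(w_{j-1};I)$ using the weighted box $(I' \times \hat{J}(I'), \epssub{t(I')-\qual_{j-1}}) \in \cQ(j-1)$ guaranteed by Proposition~\ref{prop:t(I')}(2). Since $\INDUCED(\cQ(j-1))$ contains not only these boxes but also their ``shrunk'' variants $I' \times \hat{J}(I')/[2^\ell]$ with cost increased by $2^{\ell+1}/w_{j-1}$ (Proposition~\ref{prop:INDUCED} and the definition of $\INDUCED$), we have a shortcut edge available not just from $(\min(I'),\min(\hat{J}(I')))$ to $(\max(I'),\max(\hat{J}(I')))$ but from a range of nearby vertices. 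This flexibility is exactly what is needed to stitch consecutive subintervals together.

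The key steps, in order: (1) Enumerate $\Intervals(w_{j-1};I) = \{I'_1,\dots,I'_m\}$ in left-to-right order, where $m = w_j/w_{j-1}$. For each $\ell$, the box $I'_\ell \times \hat{J}(I'_\ell)$ has displacement from $\tau_{I'_\ell}$ at most $\cost(\tau_{I'_\ell}) + \epssub{i'+3}w_{j-1} \le \cost(\tau_{I'_\ell}) + w_{j-1}$ by Proposition~\ref{prop:t(I')}(1); call this quantity $D_\ell$. (2) Build a path in $\widetilde{G}(\cR(j-1,I))$ that, for each $\ell$, moves from the current vertex (which lies on or near $\tau$ at horizontal coordinate $\min(I'_\ell)$) vertically and horizontally by at most $O(D_\ell)$ steps to reach a vertex of the form $(\min(I'_\ell), \min(\hat{J}(I'_\ell)/[2^{\ell_0}]))$ for an appropriate $\ell_0$, then takes the shortcut edge across $I'_\ell$, then corrects by at most $O(D_\ell)$ further H/V steps to land back on $\tau$ at horizontal coordinate $\max(I'_\ell) = \min(I'_{\ell+1})$. (3) Account for the total cost: the shortcut edges contribute $\sum_\ell (\epssub{t(I'_\ell)-\qual_{j-1}} + O(1)/w_{j-1}) w_{j-1} = \sum_\ell \epssub{t(I'_\ell)-\qual_{j-1}}w_{j-1} + O(w_j/w_{j-1})\cdot O(1)$, which folds into the sum term (after adjusting the constant $-1$ in the exponent $t(I')-\qual_{j-1}-1$ to absorb the induced-box overhead and the factor-2 slack); the H/V correction steps contribute $\sum_\ell O(D_\ell) = O\big(\sum_\ell \cost(\tau_{I'_\ell})\big) + O(m w_{j-1}) = O(\cost(\tau)) + O(w_j)$. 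Since $\ncost(I\times J)\le\epssub{i}$ gives $\cost(\tau)=\cost(I\times J)\le \epssub{i}w_j$, this is $O(\epssub{i}w_j)$, and one checks the constants come out to at most $8\epssub{i}w_j$.

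The main obstacle I anticipate is the bookkeeping in step (2): verifying that the ``correction'' moves between the end of one shortcut and the start of the next can always be realized with at most $O(D_\ell + D_{\ell+1})$ edges, and that the total vertical drift never forces us outside $J$ or outside the grid. This requires controlling how far $\min(\hat J(I'_\ell))$ and $\max(\hat J(I'_\ell))$ can be from the corresponding points of $\tau$, using the displacement bound, and then choosing the shrink parameter $\ell_0$ (or equivalently picking which induced box to use) so that the endpoints align; the telescoping of these errors is where the constant $8$ has to be tracked carefully. A secondary subtlety is ensuring $\hat J(I'_\ell)$ has width $w_{j-1}$ so that $I'_\ell \times \hat J(I'_\ell)$ is a genuine level $(j-1)$ box belonging to $\cQ(j-1)$ — this is immediate from the construction in Step~1 via Proposition~\ref{prop:align-box}, which produces a $\delta$-aligned interval of width exactly $\mu(I'_\ell)=w_{j-1}$. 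Everything else is routine path-pasting and arithmetic with geometric series.
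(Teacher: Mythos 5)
Your overall plan is the same as the paper's: follow the optimal traversal $\tau$, and for each $I'\in\Intervals(w_{j-1};I)$ replace the portion $\tau_{I'}$ by a shortcut edge coming from the approved box $I'\times\hat J(I')$, using the shrunk boxes supplied by $\INDUCED$ to make the endpoints compatible with monotone (up/right) movement, then pay for the stitching with H/V edges. The gap is in your cost accounting, and it is not just a matter of tracking constants: it breaks the claimed bound. You bound the displacement $D_\ell\le \cost(\tau_{I'_\ell})+\epssub{t(I'_\ell)+3}w_{j-1}$ crudely by $\cost(\tau_{I'_\ell})+w_{j-1}$, so your total correction cost becomes $O(\cost(\tau))+O(m\,w_{j-1})=O(\cost(\tau))+O(w_j)$. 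The term $O(w_j)$ cannot be absorbed into $8\epssub{i}w_j$ (here $\epssub{i}$ may be as small as $\theta=o(1)$), and it cannot be charged to the sum either, because $\epssub{t(I')-\qual_{j-1}-1}$ can be far below $1$ when $t(I')$ is large. The correct accounting must keep the factor $\epssub{t(I')+3}$: per interval the displacement-type overhead is at most $2\cost(\tau_{I'})+2\epssub{t(I')+3}w_{j-1}$, the $\cost(\tau_{I'})$ parts telescope into the $8\epssub{i}w_j$ term, and the $\epssub{t(I')+3}w_{j-1}$ parts are folded into the per-interval term using $\epssub{t(I')-\qual_{j-1}}+\epssub{t(I')}\le\epssub{t(I')-\qual_{j-1}-1}$ --- this folding is exactly the reason the exponent in the statement is $t(I')-\qual_{j-1}-1$ rather than $t(I')-\qual_{j-1}$. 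The same issue infects your treatment of the induced-box surcharge: shrinking $\hat J(I')$ enough to guarantee monotone stitching requires removing about $\displace(I'\times\hat J(I'),\tau_{I'})$ indices from each end, so the surcharge per shortcut is proportional to that displacement (the paper's $2\delta_h$), not $O(1)/w_{j-1}$; it must be folded in the same way.

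A secondary omission: when the displacement exceeds $w_{j-1}/2$ there is no induced box with the required shrink (the definition $J/[t]$ requires $t\le\mu(J)/2$), so the shortcut cannot be used at all for that $I'$. The paper handles this case separately (its set $[m]\setminus L$) by crossing $I'$ with a purely horizontal path of cost $w_{j-1}$, which is affordable precisely because in this case $w_{j-1}\le 2\,\displace(I'\times\hat J(I'),\tau_{I'})$, so the cost is again charged to $\cost(\tau_{I'})$ and $\epssub{t(I')+3}w_{j-1}$. Your proposal implicitly assumes a usable induced box always exists, so you need to add this fallback (or argue it never occurs, which is false in general). With these two repairs --- exact (not $O(\cdot)$) bookkeeping of the displacement terms and the large-displacement fallback --- your construction becomes the paper's proof.
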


(This is closely related to Lemma 4.1 of~\cite{CDGKS-arxiv} and the proof is similar.)
\begin{proof}
We transform the path $\tau$ in $G_{z}$ to a path $\tau'$ in the shortcut graph $\widetilde{G}(\cR(j-1,I))$ (see Section~\ref{sec:preliminaries})  and control the increase in cost.  Let $I_1,\ldots,I_m$ be the intervals of 
$\Intervals(w_{j-1};I)$ in order, and for $h \in [m]$,
let $i_h=t(I_h)$ and
$J_h=\widehat{J}(I_h)$. 
Let $\delta_h$ be the smallest power of 2 such that $\delta_h w_{j-1} \ge \displace(I_h \times J_h,\tau_{I_h})$.
By Proposition~\ref{prop:t(I')},
$\delta_h \leq 2\ncost(\tau_{I_h})+2\epssub{i_h+3}$, and $(I_h \times J_h,\epssub{i_h-\qual_{j-1}}) \in \cQ(j-1)$.  
Let $L=\{h \in [m]:\delta_h < 1/2\}$.  For $h \in L$, let $J_h'=J_h/[\delta_h w_{j-1}]$ (the interval
obtained by
removing the first and last $\delta_h w_{j-1}$ indices from $J_h$).  The
certified box $(I_h \times J_h', \epssub{i_h-\qual_{j-1}}+2\delta_h)$ belongs to $\cR(j-1)$, and since $I_h \subseteq I$,
it also belongs to $\cR(j-1,I)$.  Let $e_h=e_{I_h,J_h'}$ be the shortcut edge with cost 
$(\epssub{i_h-\qual_{j-1}}+2\delta_h)w_{j-1}$.
We claim (1) there is a source-sink path $\tau'$ in $\widetilde{G}(\cR(j-1,I))$ that consists of $\{e_i:i \in L\}$, plus a collection $\{H_i:i \in [m]\setminus L\}$
where $H_i$ is a horizontal path whose projection to the $x$-axis is $I_i$, plus
a collection of (possibly empty) vertical paths $V_0,V_1,\ldots,V_m$ where the $x$-coordinate of $V_i$ for $i >0$ is $\max(I_i)$
and 0 for $V_0$,
and (2)  $\cost(\tau')$ satisfies the bound of the lemma.

For the  first claim, for $h \in [m]$,
let $p_h=(i_h,j_h)$ be the first point in $\tau_{I_h}$ and define $p_{m+1}$ to be the final point of $\tau$. We will define
$\tau'$ to pass through all of the $p_h$.  Let $J_h^*$ be the vertical projection of $\tau_{I_h}$
so that $\tau_{I_h}$ traverses $I_h \times J_h^*$.   The choice of $\delta_h$ implies that for $h \in L$, $J_h' \subseteq J_h^*$. 
Define the portion $\tau'_h$ between $p_h$ and $p_{h+1}$ as follows: if $h \in L$,
climb vertically from $p_h$ to $(i_h,\min(J'_h))$ and  traverse $e_{I_{h},J'_{h}}$ and climb vertically to $p_{h+1}$
and if $h \not\in L$ then move horizontally from  $p_h$ to $(i_{h+1},j_h)$ and then climb vertically to $p_{h+1}$.

For the second claim, we upper bound $\cost(\tau')$.  
For $h \in L$, $e_{I_{h},J_h}$ has cost at most 
$(\epssub{i_h-\qual_{j-1}}+2\delta_h)w_{j-1}$, and
for $h \not\in L$, the horizontal path that projects to $I_h$ costs $w_{j-1} \leq 2\delta_h w_{j-1}$; the total cost of
shortcut and horizontal edges is at most 
$\sum_h (\epssub{i _h-\qual_{j-1}}+ 2 \delta_h)w_{j-1}$.   
The cost of vertical edges
is $\sum_{h \in L}(w_{j-1} - \mu(J'_h)) + \sum_{h \not\in L}w_{j-1} =
\sum_{h \in L} 2\delta_h w_{j-1}+\sum_{h \not\in L}w_{j-1} \leq \sum_h 2\delta_hw_{j-1}$.

The combined cost of all edges is at most 
\begin{eqnarray*}
\sum_h (\epssub{i_h-\qual_{j-1}} + 4 \delta_h) w_{j-1} & \leq & \sum_h (\epssub{i_h-\qual_{j-1}} + 8\cost(\tau_{I_h})+8\epssub{i_h+3})w_{j-1} \\
& \le & 8\cost(\tau) + \sum_h (\epssub{i_h-\qual_{j-1}}+\epssub{i_h}) w_{j-1}\\
& \leq &8\cost(\tau) + \sum_h \epssub{i_h-\qual_{j-1}-1}w_{j-1},
\end{eqnarray*}
which implies the desired bound.
\end{proof}

\medskip
{\bf Step 3.} Implication of failure of condition (ii).
We now use the failure of (ii) to obtain an upper bound on the righthand side of Proposition~\ref{prop:R(j-1) cost}.

For $i'\le i$, let $\winners_{i'} = \winners(j,I \times J,i,i')$ and $\cS_{i'}$ represent the set $\Sparse(j-1,i') \cap \Intervals(w_{j-1};I)$.  
Let $\cI' = \Intervals(w_{j-1};I)$.

The failure of condition (ii) implies:

\begin{equation}
\label{winners fail}
|\winners_{i'}| \leq \frac{1}{2}|\cS_{i'} \setminus \winners_{i'}|
\end{equation}.

Multiplying (\ref{winners fail}) by $\epssub{i'}$ 
and summing on $i'$ yields:
 \begin{equation*}
 \sum_{i' \le i} \sum_{I'\in \winners_{i'}} \epssub{i'} \leq \frac{1}{2} \sum_{i' \le i} \sum_{I'\in \cS_{i'} \setminus \winners_{i'}} \epssub{i'} .
 \end{equation*}
Switching the sums:
 \begin{equation}
 \label{eqtn winner loser}
 \sum_{I' \in \cI'} \sum_{i':I'\in \winners_{i'}} \epssub{i'} \leq \frac{1}{2} \sum_{I' \in \cI'} \sum_{i':I'\in \cS_{i'} \setminus \winners_{i'}} \epssub{i'} .
 \end{equation}

To reduce this further, we need the
 following sufficient condition for $I' \in \winners_{i'}$.


\begin{proposition}
\label{prop:marker condition} 
Suppose the candidate $\candidate{i}{I \times J}$ satisfies $\ncost(I \times J) \leq \epssub{i}$
and $\tau$ is a min-cost traversal of $I \times J$.  Let $(I',i')$ be a pair such that $I' \in \Intervals(w_{j-1};I)$ and $i' \in \{0,\ldots,i\}$. 
\begin{enumerate}
\item If $\epssub{i'}\geq 2\ncost(\tau_{I'})+\epssub{i+3}$
then $\ncost(I' \times J_{i'}(I')) \leq \epssub{i'}$.
\item If $\epssub{i'}\geq 2\ncost(\tau_{I'})+\epssub{i+3}$ and $I'\in \Sparse(j-1,i')$ then $(I',i')$ is a marker for $\candidate{i}{I \times J}$.
\end{enumerate}
\end{proposition}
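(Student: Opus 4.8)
The plan is to obtain both parts directly from the two guarantees that $J_{i'}(I')$ was built to satisfy in Step~1 via Proposition~\ref{prop:align-box}, namely $\ncost(I'\times J_{i'}(I'))\le 2\ncost(\tau_{I'})+\epssub{i'+3}$ and $\displace(I'\times J_{i'}(I'),\tau_{I'})\le \cost(\tau_{I'})+\epssub{i'+3}w_{j-1}$. Part~1 is then a short dyadic calculation: substitute the hypothesis $2\ncost(\tau_{I'})\le\epssub{i'}-\epssub{i+3}$ into the cost bound and collect terms, using $i'\le i$ to control the residual $\epssub{\cdot}$-contributions, to conclude $\ncost(I'\times J_{i'}(I'))\le\epssub{i'}$. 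For part~2 the idea is to take the interval $J_{i'}(I')$ itself as the witness demanded by the definition of ``marker'': one must check (a) $J_{i'}(I')\in\ZoomIn(j,I\times J,i,I',i')$ and (b) $\candidate{i'}{I'\times J_{i'}(I')}$ is classified \CLOSE. Claim~(b) is immediate from part~1 and the Completeness of $\Enumerate$ applied at level $j-1$ (available from the induction on $j$ that underlies Theorem~\ref{thm:inductive properties}), since $\ncost(I'\times J_{i'}(I'))\le\epssub{i'}$.

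So the work is claim~(a). By construction $J_{i'}(I')$ is $\epssub{i'+3}$-aligned of width $w_{j-1}=\mu(I')$, and $J_{i'}(I')\subseteq J$ is a routine consequence of how the covering interval is chosen in the proof of Proposition~\ref{prop:align-box} (its vertical projection sits inside $J$, and $\min(J)$, being $\epssub{i+3}$-aligned, is also $\epssub{i'+3}$-aligned since $w_j/w_{j-1}\ge 2^{i-i'}$). The remaining requirement is the displacement bound $\displace(I'\times J_{i'}(I'),I\times J)\le 2\epssub{i}\mu(I)$, which I would establish exactly as in the (commented-out) zoom-in proposition: since $\tau$ traverses $I\times J$ with $\cost(\tau)=\ncost(I\times J)\mu(I)\le\epssub{i}\mu(I)$, Proposition~\ref{prop:point displacement} places both endpoints of $\tau_{I'}$ within $\tfrac12\epssub{i}\mu(I)$ vertical units of the diagonal of $I\times J$; chaining this through $\displace(I'\times J_{i'}(I'),\tau_{I'})$ by the triangle inequality for vertical displacement gives
\[
\displace(I'\times J_{i'}(I'),I\times J)\leq \tfrac12\epssub{i}\mu(I)+\cost(\tau_{I'})+\epssub{i'+3}w_{j-1}\leq 2\epssub{i}\mu(I),
\]
where the last step uses $\cost(\tau_{I'})\le\cost(\tau)\le\epssub{i}\mu(I)$ and absorbs $\epssub{i'+3}w_{j-1}\le w_{j-1}$ into $\tfrac12\epssub{i}\mu(I)$ via $w_{j-1}/\mu(I)=w_{j-1}/w_j\le\theta/2\le\epssub{i}/2$, which is assumption~(\ref{eqn:tech assump}). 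With (a) and (b) in hand, the hypothesis $I'\in\Sparse(j-1,i')$ makes $(I',i')$ a marker for $\candidate{i}{I\times J}$, which is part~2.

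I expect the only delicate point to be the bookkeeping in that displayed inequality — keeping the $\epssub{\cdot}$ exponents straight and invoking the width assumption~(\ref{eqn:tech assump}) at the right moment — together with the fiddly (but routine) verification that $J_{i'}(I')$ genuinely lies inside $J$ after the alignment shift. Conceptually there is nothing new beyond Step~1: the interval singled out there simultaneously serves as the zoom-in target and the marker witness, so no fresh construction is needed, and parts~1 and~2 amount to re-reading the align-box guarantees against the stated hypothesis.
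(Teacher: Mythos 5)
Your proposal is correct and essentially reproduces the paper's proof: Part 1 is the same one-line substitution of the hypothesis into the Step-1 guarantee for $J_{i'}(I')$, and Part 2 uses the same witness, obtains the \CLOSE{} classification from Part 1 together with Completeness of $\Enumerate$ at level $j-1$, and bounds $\displace(I' \times J_{i'}(I'), I \times J)$ by chaining through the start point of $\tau_{I'}$ via Proposition~\ref{prop:point displacement} and assumption~(\ref{eqn:tech assump}), ending with the same $(\tfrac{3}{2}\epssub{i}+\tfrac{1}{2}\theta)w_j \le 2\epssub{i}w_j$ estimate. The only wobble is your remark that $i' \le i$ controls the residual term in Part 1, which points the wrong way (the Step-1 bound carries $+\epssub{i'+3} \ge \epssub{i+3}$, so the honest calculation gives $\epssub{i'}-\epssub{i+3}+\epssub{i'+3}$), but the paper's own proof makes the identical silent replacement of $\epssub{i'+3}$ by $\epssub{i+3}$, so this is not a gap relative to the paper's argument.
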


\begin{proof}
For the first part, by the choice of 
$J_{i'}(I')$, we have $\ncost(I' \times J_{i'}(I')) \leq 2\ncost(\tau_{I'})+\epssub{i+3}$ and by the hypothesis of the
Proposition, this is at most $\epssub{i'}$.  

For the second part. By Completeness of $\Enumerate(j-1,\cdot)$ and the first part, $\candidate{i'}{I' \times J_{i'}(I')}$ is classified as \CLOSE{}. 
So we just have to show that $J_{i'}(I) \in \ZoomIn(j,I \times \{J\},i,I',i')$.
It suffices that $\displace(I' \times J_{i'}(I'),I \times J) \leq 2\epssub{i}w_j$.
To bound $\displace(I' \times J_{i'}(I'),I \times J)$ it suffices to bound the vertical distance from the point 
$(\min(I'),\min(J_{i'}(I')))$ to the diagonal of $I \times J$.  Let $(p,q)$ be the initial point of
$\tau_{I'}$.  By the definition of  $J_{i'}(I'))$, the vertical distance from $(\min(I'),\min(J_{i'}(I')))$ to $(p,q)$ is 
at most $\cost(\tau_{I'})+\epssub{i'+3}w_{j-1} \leq \cost(\tau)+w_{j-1}$.  By Proposition~\ref{prop:point displacement} the 
vertical distance from $(p,q)$ to the diagonal of $I \times J$ is at most $\cost(\tau)/2$.  
So $\displace(I' \times J_{i'}(I'),I \times J) \leq \frac{3}{2} \cost(\tau) + w_{j-1}$.   
By hypothesis, $\cost(\tau) \leq \epssub{i}w_j$, and  by assumption (\ref{eqn:tech assump}) 
in Section~\ref{subsec:params},  $w_{j-1} \leq \frac {\theta}{2} w_j$, and so $\displace(I' \times J_{i'}(I')),I \times J) \leq (\frac{3}{2} \epssub{i}+\frac{1}{2} \theta)w_j \leq 2\epssub{i}w_j$, as 
required.
\end{proof}

Let $\cG(I)=\{I' \in \Intervals(w_{j-1};I):t(I')<i \;\&\; \epssub{t(I')+1} \geq 2\ncost(\tau_{I'})+\epssub{i+3}\}$.  
We claim that for each $I' \in \cG(I)$, $I'\in \Sparse(j-1,t(I')+1)$. If it were not then by Part 3 of Proposition~\ref{prop:t(I')},
$J_{i'}(I') \not\in \Bdense(j-1,I',t(I')+1)$. But by Part 1 of Proposition~\ref{prop:marker condition} this would 
contradict completeness of $\Bdense(j-1,I',t(I')+1)$. Hence, for each $I' \in \cG(I)$, $(I',t(I')+1)$ is a marker.

We will combine Proposition~\ref{prop:marker condition} with inequality (\ref{eqtn winner loser}).  
The sum on the lefthand side of (\ref{eqtn winner loser}) includes
all pairs $(I',t(I')+1)$ where  $I' \in \cG(I)$ and so is bounded below by $\sum_{I' \in \cG(I)} \epssub{t(I')+1}$.
To upper bound the righthand sum of (\ref{eqtn winner loser}), we look at the inner sum corresponding to a given $I' \in \Intervals(w_{j-1};I)$.
This is a sum of $\epssub{i'}$ over those $i'$ such that $I'$ in $\Sparse(j-1,i')$ and $I'$ not in $\winners_{i'}$.

We claim that if $i'$ contributes to this sum then
\begin{equation}
 \label{sum pty}
   \epssub{i'} < 2\ncost(\tau_{I'})+\epssub{i+3}.
 \end{equation}
To see this note that if $\epssub{i'} \ge 2\ncost(\tau_{I'})+\epssub{i+3}$ then
Part 2 of Proposition~\ref{prop:marker condition} implies that $I' \not \in \cS_{i'} \setminus \winners_{i'}$, 
so $i'$ is not included in the sum.

Now in the case that $I' \in \cG(I)$ then (\ref{sum pty}) implies that
$\epssub{i'} <  \epssub{t(I')+1}$ and so $\epssub{i'} \le  \epssub{t(I')+2}$. Summing over all such $i'$,
the geometric series is at most $\epssub{t(I')+1}$.

For $I' \not \in \cG(I)$, let $v(I')$ be the least $i'$ that contributes to the sum. 
So the sum is at most $2\epssub{v(I')}$, and by (\ref{sum pty}) this is at most 
$4\ncost(\tau_{I'})+\epssub{i+2}$.
%
%

Thus (\ref{eqtn winner loser}) implies:

\begin{equation}
\label{eqtn winner loser 2}
\sum_{I' \in \cG(I)} \epssub{t(I')+1} \leq \frac{1}{2}\cdot \left(\sum_{I' \in \cG(I)} \epssub{t(I')+1}+\sum_{I' \not\in \cG(I)} 4 \ncost(\tau_{I'})+\epssub{i+2} \right).
\end{equation}

Multiplying the inequality by $2$ and substracting $\sum_{I' \in \cG(I)}\epssub{t(I')+1}$ from both sides gives:

\begin{equation}
\label{eqtn winner loser 3}
\sum_{I' \in \cG(I)} \epssub{t(I')+1} \leq \sum_{I' \not\in \cG(I)} 4 \ncost(\tau_{I'})+\epssub{i+2}
\end{equation}

Now add $\sum_{I' \not\in \cG(I)} \epssub{t(I')+1}$ to both sides:

\begin{equation}
\label{eqtn winner loser 4}
\sum_{I' \in \Intervals(w_{j-1};I)} \epssub{t(I')+1} \leq \sum_{I' \not\in \cG(I)} \epssub{t(I')+1} + \sum_{I' \not\in \cG(I)} 4 \ncost(\tau_{I'})+\epssub{i+2}
\end{equation}

For the first sum on the right, $I' \not \in \cG(I)$ implies either $\epssub{t(I')+1}=\epssub{i+1}$ or
$\epssub{t(I')+1} < 2\ncost(\tau_{I'})+\epssub{i+3}$, so you can bound this in both cases
by $2\ncost(\tau_{I'})+\epssub{i+1}$. Thus we get:


\begin{eqnarray*}
\sum_{I' \in \Intervals(w_{j-1};I)} \epssub{t(I')}w_{j-1} & \leq  &  2 \cdot \sum_{I' \in \Intervals(w_{j-1};I)} (6\ncost(\tau_{I'})+3\epssub{i+2})w_{j-1}\\
 & \leq  &  12 \cost(\tau)+2\epssub{i}w_j\\
& \leq & 14\epssub{i}w_j.
\end{eqnarray*}

{\bf Step 4.} Combining the bounds.
Combining the previous bound with the bound of Proposition~\ref{prop:R(j-1) cost} gives:

\begin{eqnarray*}
\cost_{\cR(j-1,I)}(I \times J) & \leq & 14 \epssub{i-\qual_{j-1}-1}w_j+\epssub{i-3}w_j  \\
& \leq & \epssub{i-\qual_{j-1}-5}w_j.
\end{eqnarray*}
as required to establish the Completeness of $\Bbelow$.

\subsection{Correctness of \MAIN}
\label{subsec:main proof}

We now  complete the proof that the output of \MAIN{} gives a constant factor approximation to edit distance with high probability.
As in Theorem~\ref{thm:inductive properties} we assume
that $\ED$ is a gap algorithm for edit distance satisfying $\gapcondition(T',\zeta',Q')$.
Consider a run of $\MAIN$ on input $(n,\theta,1/2;x,y)$ where $n^{-\zeta'} \leq \theta \leq 1$ and $|x|=|y|=n$.
The conclusion of the theorem has a quality parameter $Q$ which we set to $2^{\qual_k+6}$.  We must prove that
the \MAIN{} satisfies the Soundness and Completeness properties for gap algorithms from Section~\ref{sec:intro}.

The final post-processing step is a call to $\APM(\{0,\ldots,n\} \times\{\{n,\ldots,2n\}\},\theta2^{\qual_k+5}, \cR_k)$, and
the algorithm returns \ACCEPT{} or \REJECT{} according to the output of this call.  We will apply the Soundness and
Completeness of $\Bbelow$ (with $j=k+1$) by reinterpreting this final step as asking whether  $\{n,\ldots,2n\} \in \Bbelow(k+1,\{0,\ldots,n\},\log(1/\theta)-\qual_k-6)$ (where
$w_{k+1}=n$).  The Soundness and Completeness of $\Bbelow$ extends (with no change) to this case.
 Thus if the algorithm returns \ACCEPT{}, then  $\ncost(\{0,\ldots,n\},\{n,\ldots,2n\}) \leq
\theta2^{\qual_k+6}=\theta Q$, and the gap-algorithm satisfies Soundness.  For Completeness, assume   $\ed(x,y) \leq \theta$.
The Completeness  of $\Bbelow$ extends (with no change) to this case.  We conclude that  
(i) $J \in \Bbelow(k+1,\{0,\ldots,n\},\log(1/\theta)-\qual_k-5)$  or (ii)  
there exists an $i' \leq i$ such that  $|\winners(k+1,I \times J,i,i')| > \frac{1}{3}|\Intervals(w_{k};I) \cap \Sparse(k,i)|$.    
Since $d_k=1$, Proposition~\ref{prop:sparse empty} implies
all sets $\Sparse(k,i)$ are empty, so $\winners(k+1,I \times J,i,i')$ are also empty but (ii) requires them to be non-empty. 
Hence, (ii) can not hold, and so (i) holds, which implies
 \MAIN{} must \ACCEPT, and so Completeness holds.

\subsection{Time analysis}
\label{subsec:time analysis}
In this subsection, we upper bound the expected running time of \MAIN{}
conditioned on the event $\SR$ of successful randomization, 
in terms of the algorithm parameters $w_1,\ldots,w_k$,
and $d_0,\ldots,d_k$.  These parameters will be optimized in the next subsection.

\begin{theorem}
\label{thm:time}
Suppose that $\ED$ is a gap algorithm for edit distance satisfying $\gapcondition(T',\zeta',Q')$. For $\theta \geq n^{-\zeta'}$
the expected running time of $\MAIN(n,\theta,1/2;x,y)$ conditioned on 
$\SR$ is upper-bounded by:

\begin{equation}
\label{eqn:time bound}
\widetilde{O}\left(\sum_{j=1}^k \frac{n}{\theta^2 w_jd_j}(\sum_{h=1}^j d_{h-1}w_h^{1+1/T'})\right).
\end{equation}
\end{theorem}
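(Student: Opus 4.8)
The plan is to split the running time as (post-processing) $+\sum_{j=1}^{k}\bigl(\text{cost of }\Preprocess(j)+\text{cost of }\ProcessDense(j)\bigr)$, to first bound the cost of one call to $\Enumerate$, and then to count how often $\Enumerate$ and $\APM$ are invoked. For the first part, observe that the recursion tree of a call $\Enumerate(j,I\times\cJ,i)$ has branching factor $O(i\cdot|\SparseSample(\cdot)|)=\widetilde{O}(1)$ (the loops on $i'$ and $I'$ in Algorithm~\ref{alg-enumerate}) and depth at most $j\le k=O(1)$, hence $\widetilde{O}(1)$ nodes; since $k$ depends only on $T$, the implied constant is absorbed by $\widetilde{O}$. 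At a node at level $h$ the work is of three kinds. (i) Calls to $\ED$ on width-$w_h$ substrings: since $\epssub{i}\ge\theta\ge n^{-\zeta'}\ge w_h^{-\zeta'}$, the $\gapcondition(T',\zeta',Q')$ running-time guarantee gives cost $\widetilde{O}(w_h^{1+1/T'})$ per call (the $\log(1/\delta')$ factor with $\delta'=n^{-12}$ is polylog); for $h\ge2$ the set $\cK$ of substrings tested has size $\widetilde{O}(d_{h-1})$, because every $I'$ sampled into $\SparseSample$ lies in $\Sparse(h-1,i')$, so by Validity of $\Sparse$ (Theorem~\ref{thm:inductive properties} at level $h-1$) and Consistency of $\Enumerate$ the set $\cS'$ returned by the recursive call has size $\le 2d_{h-1}$, and by Proposition~\ref{prop:33}(2) each element of $\cS'$ puts at most $33$ intervals into $\cK$; for $h=1$ the number of $\ED$ calls is just the size of the input stack. (ii) Evaluations of $\ZoomIn$ and intersections with $\Bbelow$: near-linear in the stack sizes, where by Proposition~\ref{prop:33}(1) the stack at level $h-1$ has size at most $(1+32\,\epssub{i_h-i_{h-1}}w_h/w_{h-1})$ times the stack at level $h$, a bound that telescopes down a chain.

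Next I would enumerate the top-level invocations. For each $i\in\{0,\dots,\log(1/\theta)\}$, $\ProcessDense(j)$ invokes $\Enumerate$ (a) to test a candidate pivot $I$, on the stack $I\times\cS$ where $\cS$ is a Bernoulli$(p)$-sample of $\Intervals(w_j,\epssub{i+3})$ with $p=\Theta(\log n/d_j)$, so $\E|\cS|=\widetilde{O}\bigl(n/(\epssub{i}w_jd_j)\bigr)$; and (b) to process a pivot, twice on the full stacks $I\times\Intervals(w_j)$ and $I\times\Intervals(w_j,\epssub{h_2+3})$. The number of pivots is $\widetilde{O}(n/(w_jd_j))$ per $i$, since each pivot removes from $\cT$ a set $\cX$ of $\Omega(d_j)$ intervals and these sets are disjoint (the $\Omega(d_j)$ bound holding under $\SR$ because the pivot-test estimate is accurate). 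The number of pivot-tests is at most $|\Intervals(w_j)|+(\text{number of pivots})=O(n/w_j)$, but a test on an $I$ with no level-$(j{-}1)$-sparse subinterval triggers no recursion and costs only $\widetilde{O}(|\cS|+|\Bbelow(j,I,i)|)$. $\Preprocess(j)$ makes $O\bigl(\tfrac{n}{w_j}\log(1/\theta)\bigr)$ calls to $\APM$, each of cost $\widetilde{O}(w_j+|\Intervals(w_j,\epssub{i+3})|+|\cR(j-1,I)|)$, with $\sum_I|\cR(j-1,I)|=|\cR(j-1)|=\widetilde{O}(|\cQ(j-1)|)$ by Proposition~\ref{prop:INDUCED}; constructing the $\SparseSample$ sets is polylog each. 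The post-processing is a single call to $\APM$ of cost $\widetilde{O}(n+|\cR(k)|)$.

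Then I would sum. Multiplying the per-node $\ED$ cost $\widetilde{O}(d_{h-1}w_h^{1+1/T'})$ by the number of nodes and the number of top-level calls, the contribution of level $h$ of the recursion aggregated over $\ProcessDense(j)$ becomes proportional to $\tfrac{n}{\theta^2 w_jd_j}\,d_{h-1}w_h^{1+1/T'}$, where $n/w_j$ counts level-$j$ intervals, one factor $1/d_j$ comes from the sampling rate $p$ (and the pivot count), and the two factors $1/\theta$ come from the geometric sums $\sum_i 2^i=O(1/\theta)$ over the outer index and $\sum_{i'}2^{i'}$ over the recursion thresholds (these are the $\epssub{}$-factors surviving the $\ZoomIn$ telescoping, used to bound the level-$1$ stack size). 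Using $d_0w_1=w_1^2=n$, the $h=1$ term reproduces $\tfrac{n}{\theta^2 w_jd_j}d_0w_1^{1+1/T'}$. Summing over $h\le j$ and $j\le k$, and then verifying that the $\ZoomIn$/$\Bbelow$ bookkeeping, the $\Preprocess$ calls, and the post-processing are all dominated by this expression, yields (\ref{eqn:time bound}). Throughout, since every quantity is bounded pointwise given the random bits except at the Bernoulli samples, passing to expectations conditioned on $\SR$ is routine.

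The main obstacle is exactly this last accounting: tracking how the sampling rate $p$, the evolving stack sizes, and the thresholds $\epssub{i}$ propagate through the bounded-depth recursion so that the geometric series close to precisely $\tfrac{n}{\theta^2 w_jd_j}\sum_{h\le j}d_{h-1}w_h^{1+1/T'}$ --- and, most delicately, keeping the pivot-test cost within this budget, since the naive count of $O(n/w_j)$ fully-recursive tests per level would overshoot, so one must exploit that a test is expensive only when $I$ contains a level-$(j{-}1)$-sparse subinterval (bounding how many such $I$ occur, or charging the excess to the pivot-processing or $\Preprocess$ budgets). Controlling $|\cQ(j-1)|$ and $|\cR(j-1)|$ for the $\Preprocess$ term is a secondary technical point.
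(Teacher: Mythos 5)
Your overall decomposition (per-level costs of $\Preprocess$ and $\ProcessDense$, a single-call analysis of $\Enumerate$, a pivot count via disjointness, and the $\APM$ bookkeeping) mirrors the paper's, but there is a genuine gap at exactly the point you flag as the ``main obstacle,'' and the fix you sketch there does not work. Your per-call bound on $\Enumerate(j,I\times\cJ,i)$ is deterministic: at a level-$h$ node you charge $\widetilde{O}(d_{h-1}w_h^{1+1/T'})$ for the $\cK$-loop, using $|\cS'|\le 2d_{h-1}$ (Validity of $\Sparse$) together with the multiplicity-$33$ bound of Proposition~\ref{prop:33}. With that bound, the $O(n/w_j)$ pivot-test calls in line (11) of $\ProcessDense(j)$ already contribute, from the $h=j$ term alone, on the order of $\frac{n}{w_j}\, d_{j-1}w_j^{1+1/T'}$, which exceeds the budget $\frac{n}{\theta^2 w_jd_j}\, d_{j-1}w_j^{1+1/T'}$ by a factor $\theta^2 d_j$ --- a polynomial overshoot for the parameters of Section~\ref{subsec:parameter choice}. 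Your proposed repair --- that a test is expensive only when $I$ contains a level-$(j{-}1)$-sparse subinterval, so one can bound the number of such $I$ or recharge the excess --- fails: nothing prevents every $I\in\Intervals(w_j)$ from containing sparse subintervals (e.g.\ when essentially all level-$(j{-}1)$ intervals are sparse), and the $\Preprocess$ and pivot-processing budgets are far too small to absorb the excess.

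What is missing is the paper's Lemma~\ref{lem:enumerate time}: an induction on $j$ showing that if the input stack $\cJ$ is random with each interval present with probability at most $p$ (given the earlier random bits), then the \emph{expected} cost of $\Enumerate(j,I\times\cJ,i)$ is $\widetilde{O}(\frac{p}{\theta}\sum_{h\le j}d_{h-1}w_h^{1+1/T'})$, i.e.\ scales linearly in $p$. The point is that the thinning propagates down the recursion: by Proposition~\ref{prop:33}(2) each $J'$ lies in $\ZoomIn(j,I\times J,i,I',i')$ for at most $33$ choices of $J$, so each $J'$ belongs to $\cJ'$ with probability at most $33p$; hence $\cS'$ has expected size at most $66\,p\,d_{j-1}$ (not $2d_{j-1}$), $\cK$ has expected size $O(p\,d_{j-1})$, and the level-$1$ stack has expected size $O(p\,d_0/\theta)$. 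Applying this with $p=\Theta(\log n/d_j)$ (with the factor-$2$ correction for conditioning on $\PD(j)$) makes the $n/w_j$ pivot tests cost $\widetilde{O}(\frac{n}{\theta w_jd_j}\sum_{h\le j} d_{h-1}w_h^{1+1/T'})$ in total, while the $p=1$ case combined with the $O(\frac{n}{\theta w_jd_j})$ pivot count (your disjointness argument, which is correct) gives the dominant $\frac{n}{\theta^2 w_jd_j}$ term of (\ref{eqn:time bound}). As a minor point, your attribution of the two $1/\theta$ factors to geometric sums over $i$ and $i'$ is off: the outer loop on $i$ contributes only a $\log(1/\theta)$ factor; the $1/\theta$'s come from $|\Intervals(w_j,\epssub{i+3})|=O(n/(\theta w_j))$ (which drives both the level-$1$ stack size and the pivot count) and, for the pivot-processing calls, from the per-call bound with $p=1$.
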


The above theorem is not quite sufficient for our purposes since
it gives only an expected upper bound on the running time of the algorithm,
while we want an absolute upper bound.   We can replace
the expected upper bound by an absolute upper bound by the following
routine modification of $\MAIN$.
 On a given input, use the above theorem to determine a number
$\tau^*$ which is at least six times the expected upper bound on running
time given by the above theorem.  Then the probability that $\MAIN$
takes more than $\tau^*$ steps is at most $1/6$.  So  we run $\MAIN$
but terminate with \REJECT{} if it reaches $\tau^*$ steps.  This converts
the expected running time to an absolute bound on running time, but now
the completeness error (the probability of false rejection) is increased
from 1/2 to 2/3.  But  by running this algorithm twice and accepting if either
run accepts we restore the completeness error to below 1/2.

Combining the above theorem with this modification gives an algorithm
satisfying the correctness properties proved for $\MAIN$ and having
an absolute upper bound on time given as in the above theorem.

We now proceed to the proof of Theorem~\ref{thm:time}.
\begin{proof}
Recall from Section~\ref{subsec:randomness} that
successful randomization means: (1) All calls to $\ED$ return
correct answers, (2) All calls to $\SparseSample$ are successful and (3)
$\ProcessDense$ has successful sampling.  

Recall that $\BSG$ is the sequence of random bits pregenerated for the calls to $\ED$
(as described in Section~\ref{subsec:randomness}).  For $j \in [k]$,
$\BSS^j$ are the random bits generated to select $\SparseSample$'s in $\Preprocess(j)$ at iteration $j$ of the algorithm, and 
$\BPD^j$ are the random bits generated to select sets $\cS$ in $\ProcessDense(j)$ at iteration $j$ 
(also as described in Section~\ref{subsec:randomness}).
Let $B^{\leq j}$ denote the random bits $\BSG,\BSS^1,\BPD^1, \ldots, \BSS^{j},\BPD^j$.
We introduce the following events:

\begin{itemize}[align=left]
\item[$\SG$] All calls to $\ED$ return correct answers.
\item[$\SSE(j)$] All calls to $\SparseSample$ during iteration $j$ are successful.
\item[$\SSE(\leq j)$]All calls to $\SparseSample$ through the end of iteration $j$ are successful.
\item[$\PD(j)$] $\ProcessDense$ has successful sampling during iteration $j$.
\item[$\PD(\leq j)$] $\ProcessDense$ has successful sample through the end of iteration $j$.
\item[$\SR$] Successful randomization, i.e. $\SG \wedge \SSE(\leq k) \wedge \PD
(\leq k)$.
\end{itemize}

We will argue that
the expected running time of \MAIN{} conditioned on $\SR$ is bounded by (\ref{eqn:time bound}).
In the bound, the outer sum on $j$ corresponds to iterations of \MAIN{}.  We will show that the cost of iteration $j$
is bounded by the inner sum.  When we analyze iteration $j$ we fix the randomness  $B^{\leq j}$ in such a way that $\SG \wedge \SSE(\leq j-1) \wedge \PD(\leq j-1)$ holds. The cost  of iteration $j$ is bounded conditioned on these
fixed random bits and subject to requirement $\SSE(j) \wedge \PD(j)$.

As a first step, we need a bound on the  running time for $\Enumerate$.  
Recall that fixing the random bits $B^{\leq j-1}$ and $\BSS^j$ makes $\Enumerate(j,\cdot)$ run deterministically.
In the lemma below, we condition on $(B^{\leq j-1},\BSS^j)=\beta^{*j}$ and
consider the expected time of $\Enumerate(j,I \times \cJ,i)$ where $\cJ$ is a set
of intervals chosen
according to any distribution (possibly depending on $\beta^{*j}$) in which no set appears in $\cJ$ with probability more than some fixed bound $p$. 

\begin{lemma}
\label{lem:enumerate time}
Let $p\in[0,1]$, $j\in[k]$, $I \in \Intervals(w_j)$, and $i \in \{0,\ldots,\log(1/\theta)\}$.   Let $\beta^{*j}$ be an assignment of the random bits
$B^{\leq j-1}$ and $\BSS^j$  that satisfies
the success conditions $\SSE(\leq j)$ and $\PD(\leq j-1)$. 
Let $\cJ$ be a random variable  whose value is a subset of  
$\Intervals(w_j,\epssub{i+3})$
with the property that given the fixed randomness $B^{\leq j-1}$ and $\BSS^j$, 
each $J \in \Intervals(w_j,\epssub{i+3})$ belongs to $\cJ$ with probability
at most $p$. Then
the expected
running time of $\Enumerate(j,I \times \cJ,i)$ over the choice of $\cJ$ is at most:
\begin{equation}
\label{eqn:Enumerate}
\widetilde{O}(\frac{p}{\theta} \sum_{h=1}^j d_{h-1}w_h^{1+1/T'}).
\end{equation}
\end{lemma}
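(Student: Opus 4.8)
The plan is to prove this by induction on the level $j$, tracking separately the cost of the three kinds of work that $\Enumerate(j,I\times\cJ,i)$ performs: (a) the $\APM$-related bookkeeping and the set operations on $\Bbelow(j,I,i)\cap\cJ$; (b) the direct calls to $\ED$ on the $\candidate{i}{I\times J}$ with $J\in\cK$ (for $j=1$, on $J\in\cJ$); and (c) the recursive calls $\Enumerate(j-1,I'\times\cJ',i')$ over all $i'\le i$ and all $I'\in\SparseSample(j,I,i')$, together with the $\ZoomIn$ computations that build $\cJ'$ and filter $\cK$. For the base case $j=1$: each $J\in\cJ$ triggers one call to $\ED(z_I,z_J,\epssub i)$ on strings of length $w_1$, which under $\gapcondition(T',\zeta',Q')$ costs $\widetilde O(w_1^{1+1/T'})$ since $\delta$ is fixed at $n^{-12}$ so the $\log(1/\delta)$ factor is absorbed into $\widetilde O$; the expected number of such $J$ is at most $p\cdot|\Intervals(w_1,\epssub{i+3})|\le p\cdot O(n/(\epssub{i+3}w_1))=\widetilde O(pn/(\theta w_1))$ (using $\epssub{i}\ge\theta$ and $w_1\mid n$). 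So the base-case cost is $\widetilde O\bigl(\tfrac p\theta\cdot \tfrac n{w_1}\cdot w_1^{1+1/T'}\bigr)=\widetilde O\bigl(\tfrac p\theta d_0 w_1^{1+1/T'}\bigr)$ after recalling $d_0=\pwrround{\sqrt n}$ and $w_1=\pwrround{\sqrt n}$, hence $d_0 w_1 = \Theta(n)$; this matches the $h=1$ term.

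For the inductive step ($j>1$), the term-(b) cost is the number of $J$ ending up in $\cK$ times $\widetilde O(w_j^{1+1/T'})$ per $\ED$-call; I would bound $|\cK|\le|\cJ|$ and note $\E|\cJ|\le p|\Intervals(w_j,\epssub{i+3})|=\widetilde O(pn/(\theta w_j))$, giving $\widetilde O(\tfrac p\theta d_{j-1}w_j^{1+1/T'})$ once we use $n/w_j\ge d_j$ together with the telescoping relation among the $d$'s — actually the cleanest accounting is $\tfrac p\theta\cdot\tfrac n{w_j}\cdot w_j^{1+1/T'}$ and then observe $n/w_j\le d_{j-1}\cdot(n/(w_j d_{j-1}))$; since the parameters are chosen so that the $h=j$ contribution dominates this, the form in~(\ref{eqn:Enumerate}) subsumes it. The term-(c) cost is the heart of the induction: there are $\widetilde O(1)$ values of $i'$ (at most $\log(1/\theta)$), and $|\SparseSample(j,I,i')|=O(\log n)=\widetilde O(1)$, so there are $\widetilde O(1)$ recursive calls $\Enumerate(j-1,I'\times\cJ',i')$. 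To invoke the induction hypothesis at level $j-1$ I must check that each $J'\in\Intervals(w_{j-1},\epssub{i'+3})$ lands in $\cJ'=\ZoomIn(j,I\times\cJ,i,I',i')$ with probability at most some $p'$: by Proposition~\ref{prop:33}(2), each fixed $J'$ is produced by at most $33$ distinct $J\in\Intervals(w_j,\epssub{i+3})$, and since each such $J$ is in $\cJ$ with probability $\le p$, a union bound gives $p'\le 33p$. Then the induction hypothesis bounds each recursive call by $\widetilde O(\tfrac{33p}\theta\sum_{h=1}^{j-1}d_{h-1}w_h^{1+1/T'})$, and multiplying by the $\widetilde O(1)$ number of calls keeps this within $\widetilde O(\tfrac p\theta\sum_{h=1}^{j-1}d_{h-1}w_h^{1+1/T'})$, which is absorbed into~(\ref{eqn:Enumerate}).

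It remains to bound the overhead of building $\cJ'$, of building $\cK$ (the inner loop on $J\in\cJ$ testing $\ZoomIn(j,I\times J,i,I',i')\cap\cS'\ne\emptyset$), and of the $\APM$ call underlying $\Bbelow$. For $\cK$: by Proposition~\ref{prop:33}(1) each $\ZoomIn(j,I\times J,i,I',i')$ has size $\widetilde O(\epssub{i-i'}w_j/w_{j-1})$, so the total work of the inner loop is $\widetilde O(|\cJ|\cdot \epssub{i-i'}w_j/w_{j-1})$ per $(i',I')$; in expectation $\E|\cJ|\le \widetilde O(pn/(\theta w_j))$, giving $\widetilde O(pn\epssub{i-i'}/(\theta^2 w_{j-1}))$, and since $i'\le i$ and $\epssub{i-i'}\le 1$ while $n/w_{j-1}$ is dominated by $d_{j-2}w_{j-1}^{1/T'}\cdot(\text{stuff})$ under the parameter choices, this is dominated by the $h=j-1$ term of~(\ref{eqn:Enumerate}); I will state this domination and defer its verification to the parameter-choice section. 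The $\APM$ call costs $\widetilde O(\mu(I)+|\Intervals(w_j,\epssub{i+3})|+|\cR(j-1,I)|)=\widetilde O(w_j+n/(\theta w_j)+|\cR(j-1,I)|)$, which is similarly subsumed. \textbf{The main obstacle} is the probabilistic bookkeeping in the inductive step: the set $\cJ'$ passed to the recursive call is a random variable whose distribution depends on $\beta^{*j}$ and on $\cJ$, and I must make sure that the "no element appears with probability more than $p'$" hypothesis of the lemma is genuinely preserved — this is exactly where Proposition~\ref{prop:33}(2) is used, and one must be careful that the factor-$33$ blowup in $p$ does not compound across levels (it does not, because the number of recursion levels $k$ is a constant depending only on $T'$, so $33^k$ is an $\widetilde O(1)$ constant). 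A secondary subtlety is that $\cJ$ is random but $\beta^{*j}$ is fixed, so all the "expectation" statements are conditional expectations over $\cJ$ alone; the per-element probability bound must hold pointwise in $\beta^{*j}$, which it does by hypothesis.
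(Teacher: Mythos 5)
Your base case and the bookkeeping for the recursive calls (the factor-$33$ bound from Proposition~\ref{prop:33}(2) giving $p'\le 33p$ for $\cJ'$, non-compounding because $k$ is constant) match the paper's argument. But there is a genuine gap in your inductive step, and it sits exactly at the point that makes the lemma true: the cost of the final loop in which $\ED$ is run on every $J\in\cK$. You bound $|\cK|\le|\cJ|$ with $\E|\cJ|\le \widetilde{O}(pn/(\theta w_j))$, giving $\widetilde{O}\bigl(\tfrac{p}{\theta}\cdot\tfrac{n}{w_j}\cdot w_j^{1+1/T'}\bigr)$, and then assert that the parameters will make this subsumed by the $h=j$ term $\tfrac{p}{\theta}d_{j-1}w_j^{1+1/T'}$. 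That assertion is false under the parameter choices of Section~\ref{subsec:parameter choice}: there $1-\gamma_j-\delta_{j-1}=\tfrac{B}{T'+1}\bigl(1-(\tfrac{T'}{T'+1})^{j-1}\bigr)>0$ for $j\ge 2$, so $n/w_j$ is polynomially larger than $d_{j-1}$, and deferring the inequality $n/w_j\lesssim d_{j-1}$ to the parameter-choice section would impose an unsatisfiable constraint (indeed, if one could afford an $\ED$ call on every $J\in\cJ$, the whole marker/\SparseSample{} machinery would be pointless). So as written your argument proves a polynomially weaker bound than \eqref{eqn:Enumerate}.

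The missing idea is that $\cK$ is bounded through $\cS'$, not through $\cJ$, and that the sparsity of $I'$ is what brings $d_{j-1}$ into the bound. The paper argues: by Proposition~\ref{prop:33}(2), each $J'\in\cS'$ adds at most $33$ intervals to $\cK$, so $|\cK|\le 33|\cS'|$ per iteration of the $(i',I')$ loops; and since $I'\in\SparseSample(j,I,i')\subseteq\Sparse(j-1,i')$, the successful-sampling condition for $\ProcessDense$ at level $j-1$ (Validity of $\Sparse$) guarantees that at most $2d_{j-1}$ intervals of $\Intervals(w_{j-1},\epssub{i'+3})$ are classified \CLOSE{} for $(I',i')$, each lying in $\cJ'$ with probability at most $33p$, whence $\E|\cS'|\le 66\,p\,d_{j-1}$ and the expected cost of the $\ED$ loop is $\widetilde{O}(p\,d_{j-1}w_j^{1+1/T'})$. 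Your proposal never invokes the sparsity of $I'$ (nor the event $\PD(\le j-1)$ that you carefully carry in the hypotheses), which is precisely where the $d_{h-1}$ factors in \eqref{eqn:Enumerate} come from; without it the lemma is not established. A secondary, minor point: $\Enumerate$ makes no $\APM$ call ($\Bbelow(j,I,i)$ is precomputed in $\Preprocess(j)$), so charging $\widetilde{O}(\mu(I)+|\Intervals(w_j,\epssub{i+3})|+|\cR(j-1,I)|)$ to $\Enumerate$ is an unnecessary (and not obviously affordable) overcount; the paper accounts for that cost in the analysis of $\Preprocess$ instead.
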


\begin{proof}
The proof is by induction on $j$.  
Suppose $j=1$. We run  $\ED(z_I,z_J,\kappa)$ for each $J \in \cJ$.
The expected time is $\widetilde{O}(\frac{1}{\theta}p d_0w_1^{1+1/T'})$ since 
the expected size of $\cJ$ is at most $8p \frac{n}{\theta w_1} \le \frac{16p}{\theta} \sqrt{n} \le \frac{32p}{\theta} d_0$
and each call of $\ED$ costs $w_1^{1+1/T'}$.

Now suppose $j>1$.  The loops on $i'$ and $I'$ starting in lines (11-12) are executed $\widetilde{O}(1)$ times.
The construction of $\cJ'$ in line (13) using $\ZoomIn$ 
takes $\widetilde{O}(|\cJ'|)$ time (sort $\cJ$ in the natural order and build $\cJ'$ "from left to right").
By Proposition~\ref{prop:33}, for each $J' \in  \Intervals(w_{j-1},\epssub{i'+3})$, the number 
of $\epssub{i+3}$-aligned $w_j$-intervals $J$ such that $J' \in \ZoomIn(j,I \times J,i,I',i')$ is at most 33.   Since $\cJ$ is selected according to a probability
distribution so that no set $J$ belongs to  $\cJ$ with probability more than $p$, $\cJ'$ is sampled according to some distribution where for each 
$J' \in  \Intervals(w_{j-1},\epssub{i'+3})$ the probability of $J' \in \cJ'$ is at most $33p$.
Hence, the expected size of $\cJ'$ is at most $33 p \frac{n}{w_{j-1} \epssub{i'+3}} \leq O(p \frac{w_j}{\theta})$,
since $w_j \geq w_{j-1} \geq \pwrround{\sqrt{n}}$ and $\epssub{i'+3} \ge \theta/8$.  
This is dominated by the summand
for $h=j$ in (\ref{eqn:Enumerate}), which   
is at least 
$\frac{p}{\theta} d_{j-1}w_j^{1+1/T'}$.

By induction hypothesis, the recursive call to $\Enumerate$ in 
line (14) takes expected time $\widetilde{O}(\frac{33p}{\theta}\sum_{1 \leq h \leq j-1}d_{h-1}w_h^{1+1/T'})$ which is
$\widetilde{O}(\frac{p}{\theta} \sum_{1 \leq h \leq j-1}d_{h-1}w_h^{1+1/T'})$.

The final loop (22-26) on $J \in \cK$ requires $O(|\cK| w_j^{1+1/T'})$ time. So we need to  bound the size of $\cK$. 
$\cK$ is created in the loop on $i',I'$.  As noted
there are $\widetilde{O}(1)$ iterations of these loops, so it suffices to bound the number of elements added to $\cK$ for
a single choice of $I',i'$.  During lines (15-17), for each $J \in \cJ$, $J$
is added to $\cK$ if there is a $J' \in \cS$ that is in $\ZoomIn(i,I\times J,I',i')$.  By Proposition~\ref{prop:33},
each $J' \in \cS'$ is responsible for the addition of at most 33 intervals to $\cK$, so $|\cK| \leq 33|\cS'|$.    Now, $\cS'$ is the output of a call to $\Enumerate(j-1,I' \times \cJ',i')$ where $I' \in \SparseSample(j,I,i')$.  By the success condition for 
iteration $j-1$ of $\ProcessDense$ (Section~\ref{subsec:randomness})  there are at most $2d_{j-1}$ intervals $J' \in \Intervals(w_{j-1},\epssub{i'+3})$
classified as \CLOSE{} for $i'$. As observed in the previous paragraph, 
each of these at most $2d_{j-1}$ intervals belongs to $\cJ'$ with probability at most $33p$. So the expected size of $|\cS'| \leq 66 p d_{j-1}$.  
Thus the expected cost of the loop (22-26) is  $\widetilde{O}(p d_{j-1}w_j^{1+1/T'})$.
Combining with the other loop gives the claimed time bound for $\Enumerate$.
\end{proof}

Now we analyze the running time of $\Preprocess(j)$. 
There are $\widetilde{O}(n/w_j)$ pairs $(I,i)$ that are enumerated in the two outer loops.  
For each such pair, we construct $\Sparse(j,I,i)$ (which takes $\widetilde{O}(1)$ time), and $\Bbelow(j,I,i)$
whose running time is $\widetilde{O}(1)$ if $j=1$ and is $\widetilde{O}(w_j+|\cR(j-1,I)|+|\Intervals(w_j,\epssub{i+3})|)$ for $j>1$, which is the time to run $\APM$.  Summing over $O(\log(n))$ values of $i$ and noting that $\epssub{i} \geq \theta$, we obtain the upper bound
$\widetilde{O}(w_j+|\cR(j-1,I)|+\frac{n}{\theta w_{j-1}})$. 
Summing over $I$ gives $\widetilde{O}(n+|\cR(j-1)|+\frac{n^2}{\theta w_{j-1}w_j})$. 
$|\cR(j-1)|$ is at most the number of level $j-1$ candidates $\candidate{i}{I' \times J'}$ which is
at most $\widetilde{O}(\frac{n^2}{\theta w_{j-1}^2})$.  Since $w_h \geq \pwrround{\sqrt{n}}$ for all $h$ by assumption,
the overall time for $\Preprocess(j)$ is $\widetilde{O}(n/\theta)$.  We observe that this term is dominated
by the $h=j$ term in the inner sum of (\ref{eqn:time bound}) which is $\frac{n}{\theta^2}w_j^{1/T'}\frac{d_{j-1}}{d_j} \geq \frac{n}{\theta}$.
The asymptotics of the running time does not depend on the choice of random bits $\BSS^j$. 

We now analyze the time of $\ProcessDense(j)$. 
We will condition the analysis on fixing the random bits $(B^{\leq j-1},\BSS^j)=\beta^{*j}$ so that $\SG \wedge \SSE(\leq j) \wedge \PD(\leq j-1)$ holds.
The multiplicative cost of the outer iteration on $i$ is absorbed in the $\widetilde{O}$ term.
The main part is the while loop (lines 8-22) on $I \in \cT$.  This cost is divided into two parts, the
call to $\Enumerate$ within line (11), and the cost of  (lines 14-20) which is only executed within the "else".  

To bound the cost of the call to $\Enumerate$ in line (11), we want to apply
Lemma~\ref{lem:enumerate time}.  For the hypothesis of this lemma
we need an upper bound $p'$ on the probability of any particular $w_j$
interval being selected for $\cS$.  According to the code of $\Enumerate$,
every interval is placed in $\cS$ with probability at most 
$p=\min(1,c_0\log n/d_j)$.  However we need to consider the probability
of a given interval being placed in $\cS$ {\em conditioned on the
event $\PD(j)$}, and this can be bounded above by $p/\Pr[\PD(j)]$. 
As noted in Section~\ref{subsec:randomness}, 
$\PD(j)$ occurs with probability at least $1-n^{-9} \geq 1/2$
so we can bound the conditional probability of any interval being placed in 
$\cS$ by $2p$.
Applying Lemma~\ref{lem:enumerate time}, 
the expected time for the call to $\Enumerate$ in line (11) 
is
$\widetilde{O}(\frac{2p}{\theta}\sum_{1 \leq h \leq j} d_{h-1}w_h^{1+1/T'})$ 
which is $\widetilde{O}(\frac{1}{\theta d_j}\sum_{1 \leq h \leq j} d_{h-1}w_h^{1+1/T'})$.
The number of times this is executed is the number of possible $I$, which is
at most $|\Intervals(w_j)|=n/w_j$, so the overall expected cost of calls to $\Enumerate$ in line (11) is 
$\widetilde{O}(\frac{n}{\theta w_jd_j}\sum_{h=1}^ j d_{h-1}w_h^{1+1/T'})$,
as claimed in the theorem.

The time for executing (14-20) is dominated by the time of 
the two calls to $\Enumerate$, which are bounded to be at most
$\widetilde{O}(\frac{1}{\theta}(\sum_{h=1}^j d_{h-1}w_h^{1+1/T'})$ 
using Lemma~\ref{lem:enumerate time}
with
the trivial setting $p=1$.
The number of times this is executed is bounded by the number of times in the loop on $I$ that
$I$ is declared dense and used as a pivot.  We claim that if $\PD(j)$ holds then the number of pivots is upper bounded
by  $O(\frac{n}{\theta w_jd_j})$.   To
see this, first note that if $I$ is chosen as a pivot then  by Section~\ref{subsec:randomness}, conditioning on $\PD(j)$ implies
$|\Enumerate(j,I \times \Intervals(w_j,\epssub{i+3}),i)| \geq d_j/2$.
Furthermore, we claim that if $I$ and $I'$ are both pivots then $\Enumerate(j,I \times \Intervals(w_j,\epssub{i+3}),i)$
is disjoint from $\Enumerate(j,I' \times \Intervals(w_j,\epssub{i+3}),i)$.  Suppose  for contradiction 
that both are pivots and there is a $J$ in both sets, and that $I$ is  selected first as a pivot.
Then by the Soundness of $\Enumerate$, $\ncost(I\times J) \leq \epssub{i-\qual_{j-1}-6}$ and $\ncost(I' \times J)
\leq \epssub{i-\qual_{j-1}-6}$ and so by the triangle inequality $\ncost(I \times I') \leq \epssub{i-\qual_{j-1}-7}=\epssub{h_1}$
(where $h_1$ is defined in the pseudocode of $\Enumerate$.)  But, in that case, the pseudocode of $\Enumerate$ ensures that $I'$
is placed in $\cX$ in line (16) and therefore removed from $\cT$ in line (20), making it impossible for $I'$ to be chosen as a pivot.

Since the sets $\Enumerate(j,I \times \Intervals(w_j,\epssub{i+3}),i)$ corresponding to pivots are pairwise disjoint
subsets of $\Intervals(w_j,\epssub{i+3})$ each have size at least $d_j/2$, and $|\Intervals(w_j,\epssub{i+3})| = O(\frac{n}{\theta w_j})$,
the number of pivots is at $O(\frac{n}{\theta w_j d_j})$. Multiplying this by the cost of a single loop as bounded above,
the result is bounded above as claimed in the theorem.
\end{proof}

\subsection{Choosing the parameters}
\label{subsec:parameter choice}

The time analysis is expressed in terms of the parameters $w_1,\ldots,w_k$ and $d_0,\ldots,d_k$.
In this section we determine values of the parameters that achieve the claimed time bound.  
It is convenient to introduce parameters $\gamma_1,\ldots,\gamma_k$, $\delta_0,\ldots,\delta_k$ and $\tau$, with
$w_i=\pwrround{n^{\gamma_i}}$ and $d_i = \pwrround{n^{\delta_i}}$ and $\theta = \pwrround{n^{-\tau}}$.

Recall that the parameters of $\gapcondition$ include $\zeta>0$ and we only need our gap
algorithm to work for $\tau \leq \zeta$.   In the theorem we are allowed to choose $\zeta$ to be any positive
constant.    In the derivation below, we will see that we will need an upper bound on $\tau$
as a function of $T'$ which will be used to determine $\zeta$ in 
 the final proof of Theorem~\ref{thm:speedup} in the next section.

We impose the following conditions.
\begin{itemize}
\item
$d_0=w_1=\pwrround{\sqrt{n}}$, so $\delta_0=\gamma_1=1/2$
\item
$d_k=1$, so $\delta_k=0$.
\end{itemize}

The time for iteration $j$ is:

\[
\chi_j=\widetilde{O}\left(\frac{n}{\theta^2 w_jd_j} ( \sum_{i=1}^j d_{i-1} w_i^{\frac{T'+1}{T'}})\right).
\]

Define  
\begin{itemize}
\item
$\alpha_j = (1-\gamma_j-\delta_j+2\tau)$
\item
$\nu_i= \delta_{i-1}+(\frac{T'+1}{T'})\gamma_i$

\end{itemize}

Then the cost of processing level $j$ can be rewritten as:

\[
\chi_j=\widetilde{O}(\sum_{i=1}^j n^{\alpha_j+\nu_i}).
\]

We now choose $\gamma_i$ and $\delta_i$ subject to
the following conditions:

\begin{itemize}
\item $\gamma_1=\delta_0=1/2$
\item $\alpha_j$ is the same for all $j$
\item $\nu_i$ is the same for all $i$.
\item $\delta_k=0$.
\end{itemize}

It is easy to check that for any $B\ge 0$, the first three conditions are satisfied by:

\begin{eqnarray*}
\gamma_i & = & 1/2 +  B\frac{T'}{T'+1} - B \left(\frac{T'}{T'+1}\right)^i\\
\delta_i & = & 1/2 -B +B\left(\frac{T'}{T'+1}\right)^i
\end{eqnarray*}

The condition $\delta_k=0$  implies:

\[
B=B_k(T')= \frac{(T'+1)^k}{2((T'+1)^k-T'^k)}
\]
Then $\alpha_j=1-\gamma_j-\delta_j +2\tau= \frac{B}{T'+1}+2\tau$ and
$\nu_i = 1+\frac{1}{2T'}$.
So the time for all iterations is:

\begin{eqnarray*}
\sum_{j=1}^k \chi_j& = &\widetilde{O}\left( \sum_{j=1}^k j n^{1+\frac{1}{2T'}+\frac{B}{T'+1}+2\tau}\right)\\
& = & \frac{k(k+1)}{2}\widetilde{O}\left(n^{1+\frac{1}{2T'}+\frac{B}{T'+1}+2\tau}\right).
\end{eqnarray*}

As indicated earlier, we will impose the condition $\tau \leq \frac{3T'-2}{6(6(T')^3+7(T')^2+T')}$

For fixed $T' \geq 1$, $B_k(T')$  is a decreasing function of $k$ whose limiting value is $1/2$.   So we choose 
$k=k(T')$ to be large enough so that
$B \leq \frac{3T'+1}{6T'}$.   While the value $k(T')$ is not important, it is straightforward to verify that we can choose $k(T')=\lceil (T'+1)(1+\ln(T'+1)) \rceil$.

Using the above choice for $B$, the exponent of $n$ is at most 
$1+\frac{1}{2T'}+\frac{3T'+1}{6T'(T'+1)}+2\tau$ and a computation shows that  setting $T=T'+1/6$ 
and imposing $\tau \leq \frac{3T'-2}{6(6(T')^3+7(T')^2+T')}$  (which we can do since $T' \geq 1$) results in an upper bound
on the exponent of  $1+1/T$ as required.

Finally, we need to verify the assumptions (\ref{eqn:tech assump2}) and (\ref{eqn:tech assump}) that  $\frac{n}{w_j} \ge d_j$ and $\frac{w_j}{w_{j+1}}\leq \theta/2$.
The former is immediate as $\gamma_j+\delta_j \le 1$. For the latter, 
letting $M'=-\frac{1}{\log(n)} \log(\max_j 2w_j/w_{j+1})$, we  require that $\theta \geq n^{-M'}$, which we can
ensure for $n$ large enough by choosing $\zeta < M$, where $M=\min_j \gamma_{j+1} - \gamma_j$.

\subsection{Tying up the proof of Theorem~\ref{thm:speedup}}
\label{subsec:tie up}
We have that $\ED$ is a gap algorithm for edit distance satisfying $\gapcondition(T',\zeta',Q')$ where $T' \geq 1$,
$\zeta'>0$ and $Q' \geq 1$. We have shown  $\MAIN$
(using $\ED$ as a subroutine) that satisfies $\gapcondition(T,\zeta,Q)$ with $T=T'+1/6$ and 
 $\zeta>0$ and $Q\geq 1$ are suitably chosen (depending only on $T'$,$\zeta'$ and $Q'$.
In Section~\ref{subsec:main proof} we proved that $\MAIN$ has quality $Q=2^{\qual_k+6}$.
In section~\ref{subsec:parameter choice} we adjusted the parameters so that the running time
computed in Section~\ref{subsec:time analysis} is $\widetilde{O}(n^{1+1/T})$ provided that
$\theta \geq n^{-\frac{3T'-2}{6(6(T')^3+7(T')^2+T')}}$, $\theta \geq n^{-M/2}$ (where $M$ is
defined in Section~\ref{subsec:parameter choice}) and also $\theta \geq \zeta'$.
So we set $\zeta=\min(\zeta',M/2,\frac{3T'-2}{6(6(T')^3+7(T')^2+T')})$.

\section{Proof of Theorem~\ref{thm:main}}
\label{sec:main proof}
Here we present the (routine) construction of the algorithm $\FAED^T$ promised by Theorem~\ref{thm:main} 
Given $T$, let $\zeta(T)$ and $Q(T)$ be given by Theorem~\ref{thm:GUB}. 
 
On input $x,y$, $\FAED^T$ defines  $i_{\max}=\lfloor \zeta \log n \rfloor$ and for $i$ from 1 to $i_{\max}$, 
runs  $\MAIN$ on  input $(x,y,\theta=2^{-i},\delta=1/\zeta n\log(n))$. Define $i^*=0$ if none of the runs accepts, and otherwise define $i^*$ to be the largest index for which
run $i^*$ accepts.  $\FAED^T$ outputs $Q2^{-i^*}n$.
This is  an upper bound on $\editd(x,y)$ since  if $i^*=0$ then the output is $Qn \geq n$, and otherwise
the first requirement of $\gapcondition$ ensures that 
$\editd(x,y) \leq Q2^{-i^*}n$.
 
 We claim that for $R=2Q$, the probability that  the output exceeds $R(\editd(x,y)+n^{1-\zeta})$ 
is at most $1/n$.   If $i^*=i_{\max}$ then the output is $2Qn^{1-\zeta} \leq R (\editd(x,y)+n^{1-\zeta})$.
So  assume $i^*<i_{\max}$.
Say that the $i$th run of $\FAED^T$ {\em fails} if $\editd(x,y) \leq 2^{-i}n$ and the algorithm rejects.  The probability that some iteration fails is at most $\delta \zeta \log n \leq 1/n$
so the probability that no iteration fails is at least $1-1/n$.  If no iteration fails then in particular iteration $i^*+1$
does not fail, and since it rejects (by the choice of $i^*$) we conclude that $\editd(x,y) > 2^{-1-i^*}n$
and so $Q2^{-i^*}n \leq R \editd(x,y)\leq R(\editd(x,y)+n^{1-\zeta})$, and so $\FAED^T$ has all of the required properties.

\section{Approximate Pattern Matching}
\label{sec:apm}

In this section we descibe the implementation of the function
$\APM(I \times \cJ, \epsilon, \mathcal{R})$ from Section~\ref{subsec:primitives}.
This is a synthesis of algorithms from \cite{CDK-arxiv,CDGKS-arxiv}.

We assume that $\cR$ contains certified boxes and all $J\in \cJ$ are of the same width $\mu(I)$.

Let $\max(\cJ)=\{max(J):J \in \cJ\}$ and $\min(\cJ)=\{min(J):J \in \cJ\}$.
Let $\mathcal{R}^+$ be $\mathcal{R}$ augmented by auxiliary shortcut edges of cost 0
from $(\min(I),0)$ to $(\min(I),m)$  for all $m \in \min(\cJ)$.  
Also for $J \in \cJ$ let $J^0$ denote the interval $\{0,\ldots,\max(J)\}$.  The following was observed in~\cite{CDK-arxiv}:

\begin{proposition}
\label{prop:R+}
For all $J \in \cJ$, $\cost_{\cR}(I \times J)$ satisfies $\cost_{\cR^+}(I \times J^0) \leq \cost_{\cR}(I \times J)$
and $\cost(I \times J) \leq 2\cost_{\cR^+}(I \times J^0)$.  
\end{proposition}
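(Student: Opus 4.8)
The plan is to prove the two inequalities separately, in each case reducing to Proposition~\ref{prop:shortcut} (which turns a certified shortcut-graph traversal into an edit-distance upper bound) together with the elementary facts that edit distance obeys the triangle inequality and that $\editd(a,b)\ge \big||a|-|b|\big|$. Throughout I use that every $J\in\cJ$ has width $\mu(J)=\mu(I)$, and that $\min(J)\in\min(\cJ)$ for $J\in\cJ$, so that the auxiliary edge into $(\min(I),\min(J))$ is present in $\cR^{+}$.

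For the inequality $\cost_{\cR^{+}}(I\times J^{0})\le \cost_{\cR}(I\times J)$, I would take a minimum-cost traversal $\tau$ of $I\times J$ in $\widetilde{G}(\cR)$; it runs from $(\min(I),\min(J))$ to $(\max(I),\max(J))$ and lies inside $I\times J\subseteq I\times J^{0}$. Prepending to $\tau$ the auxiliary edge $(\min(I),0)\to(\min(I),\min(J))$, which has cost $0$, yields a path in $\widetilde{G}(\cR^{+})$ from $(\min(I),0)$ to $(\max(I),\max(J))$ that is contained in $I\times J^{0}$, hence a traversal of $I\times J^{0}$, of total cost $\cost(\tau)=\cost_{\cR}(I\times J)$.

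For the inequality $\cost(I\times J)\le 2\cost_{\cR^{+}}(I\times J^{0})$, let $\sigma$ be a minimum-cost traversal of $I\times J^{0}$ in $\widetilde{G}(\cR^{+})$, so $\sigma$ starts at the vertex $(\min(I),0)$. Since every auxiliary edge of $\cR^{+}$ leaves $(\min(I),0)$ and a path visits that vertex only once (it is the unique start), $\sigma$ uses at most one auxiliary edge, and if it does that edge is the first one and leads to some $(\min(I),m)$ with $m\in\min(\cJ)$; set $m^{*}=m$ in that case and $m^{*}=0$ otherwise. Let $\sigma'$ be the suffix of $\sigma$ after this (possibly absent) initial edge: it uses only $H$-, $V$- and $\cR$-shortcut edges, has cost equal to $\cost(\sigma)$, and its spanned box is $I\times\{m^{*},\dots,\max(J)\}$, so it is a traversal of that box in $\widetilde{G}(\cR)$. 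By Proposition~\ref{prop:shortcut}, $\editd\big(z_I,\,z_{\{m^{*},\dots,\max(J)\}}\big)\le\cost(\sigma)$. Now $z_I$ has length $\mu(I)=\mu(J)$ while $z_{\{m^{*},\dots,\max(J)\}}$ has length $\max(J)-m^{*}$, so the length-difference bound gives $|\min(J)-m^{*}|=\big|\mu(J)-(\max(J)-m^{*})\big|\le\editd\big(z_I,z_{\{m^{*},\dots,\max(J)\}}\big)\le\cost(\sigma)$. Since $z_J$ and $z_{\{m^{*},\dots,\max(J)\}}$ share the right endpoint $\max(J)$, one is obtained from the other by $|\min(J)-m^{*}|$ insertions or deletions at the front, so the triangle inequality yields
\[
\cost(I\times J)=\editd(z_I,z_J)\le \editd\big(z_I,z_{\{m^{*},\dots,\max(J)\}}\big)+|\min(J)-m^{*}|\le 2\cost(\sigma),
\]
using Proposition~\ref{prop:def} for the first equality.

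The only real subtlety, and the step I would be most careful with, is the structural claim that $\sigma$ uses at most one auxiliary edge and that it must be first; this is what guarantees that $\sigma'$ genuinely lies in $\widetilde{G}(\cR)$ so that Proposition~\ref{prop:shortcut} applies. The second place to be careful is the use of the hypothesis $\mu(J)=\mu(I)$: it is exactly this ``square box'' assumption that forces the vertical offset $|\min(J)-m^{*}|$ to be controlled by $\cost(\sigma)$; without it the offset could be as large as $\mu(I)$ and the factor $2$ would be lost. Everything else (monotonicity of $\widetilde{G}$, that cost-$0$ edges do not affect costs, and the elementary edit-distance facts) is routine.
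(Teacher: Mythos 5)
Your proof is correct and follows essentially the same route as the paper's: the first inequality by using the cost-$0$ auxiliary edge $(\min(I),0)\to(\min(I),\min(J))$ to turn a traversal of $I\times J$ into one of $I\times J^{0}$, and the second by splitting off the (necessarily first and unique) auxiliary edge, applying Proposition~\ref{prop:shortcut} to the remaining traversal of $I\times\hat J$ with $\hat J=\{m^{*},\dots,\max(J)\}$, bounding the offset $|\min(J)-m^{*}|$ by the cost via the length-difference bound, and finishing with the triangle inequality (Proposition~\ref{prop:sym diff}). The only difference is cosmetic: for the first inequality you prepend the free edge directly rather than rewriting the prefix of the path as the paper does, which is if anything a slightly cleaner presentation of the same idea.
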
 

\begin{proof}
For the first inequality consider a min-cost traversal $\tau$ of $I \times J$ in the shortcut graph $\widetilde{G}(\cR)$.
We construct a traversal $\tau'$ of $I \times J^0$ of cost at most $\cost_{\cR}(\tau)$.  Consider the first shortcut edge $e=(i,j)\rightarrow (i',j')$ of $\tau$.
We may assume that prior to $e$, the path consists of a (possibly empty) sequence of horizontal edges followed by a (possibly empty) 
sequence of vertical edges.  The final such horizontal edge ends at $(\min(I),j)$ and $j \in \min(\cJ)$ so in $\widetilde{G}(\cR^+)$ we can replace
the horizontal path by the shortcut edge $(\min(I),0) \rightarrow \min(I),j)$ of cost 0 to get a path that is no more costly.

For the second inequality, consider a min-cost traversal $\rho$ of $I \times J^{0}$ in $\widetilde{G}(\cR^+)$. Let $j=0$
if the path does not use one of the auxiliary shortcut edges, and otherwise let $j$ be such that the path
starts with auxiliary shortcut edge $(\min(I),0) \rightarrow (\min(I),j)$.  Let $\hat{J}=\{j,\ldots,\max{J}\}$.
So the remaining portion of $\rho$ is a min-cost traversal  $\hat{\rho}$ of $I \times \hat{J}$.   
Since $\widetilde{G}(\cR)$ is certified, $|\mu(\hat{J})-\mu(I)| \le \cost(I \times \hat{J}) \le \cost_{\cR}(\hat{\rho}) 
= \cost_{\cR^+}(\hat{\rho})=\cost_{\cR^+}(I \times J^0)$.
Also $|J \Delta \hat{J}|=|\mu(\hat{J})-\mu(I)| = |\min(J)-j| \le \cost_{\cR^+}(I \times J^0)$.
So $\cost(I \times J) \leq \cost(I \times \hat{J})+|J \Delta \hat{J}|\le 2\cost_{\cR^+}(I \times J^0)$.
\end{proof}

So if we compute $\cost_{\cR^+}(I \times J^0)$ for every $J \in \cJ$, and output the set of all $J$ for which this cost
is less than $\kappa \mu(I)$, we will satisfy the requirements of $\APM$.  We now describe a slightly modified version
of an algorithm from~\cite{CDGKS-arxiv} that accomplishes this in time $\widetilde{O}(|\cR^+|)$.

Let $\Ht$ be the graph $\widetilde{G}(\cR^+)$ with each
cost $c_e$ of $e=(i,j)\to (i',j')$ replaced by {\em benefit} $b_e=(i'-i)+(j'-j)-c_e$, (so H and V edges have
benefit 0).
For any interval $B$, the min-cost traversal of $I \times B$ in  $\widetilde{G}(\cR^+)$ is $\mu(I)+\mu(B)$ minus the max-benefit traversal of $I \times B$ 
in $\Ht$.   So it suffices to compute  the max-benefit traversal of $I \times J^0$ in $\Ht$ for all $J \in \cJ$.

To do this, let $j_1<\cdots<j_r$ be the distinct second coordinates of the heads and tails of shortcut edges in $\widetilde{G}(\cR^+)$.  
We use a binary tree data structure with leaves corresponding to the indices of $I$,
where each tree node $v$ stores a number $a_v$, and a collection of lists $L_1$,\ldots,$L_r$, where $L_h$ stores pairs $(e,q(e))$ where the head of $e$ has $y$-coordinate $j_h$ and $q(e)$ is the max benefit of a path from $(\min(I),0)$ that ends with $e$.

We proceed in rounds $h=1,\dots,r$.  In round $h$, let $A_h$ consist of all the shortcuts whose tail has vertical coordinate $j_h$. The preconditions for round $h$ are: (1)  for each leaf $i$, the stored value $a_i$ is the max benefit path to $(i,j_h)$ that includes a shortcut 
whose head has horizontal coordinate $i$ (or 0 if there is no such path), (2) for each internal node $v$, $a_v=\max\{a_i:i \text{ is a leaf in the subtree of $v$}\}$,
and (3) for every shortcut edge $e=(i',j_{h'})\to (i'',j_{h''})$ with $h'<h$, the value $q(e)$
has been computed and $(e,q(e))$ is in list $L_{h''}$.

During round $h$,
for each shortcut $e=(i,j_h)\to(i',j_{h'})$ in $A_h$, $q(e)$ equals
the max of $a_{\ell}+b_e$ over tree leaves $\ell$ with  $\ell \leq i$.  This can be computed in $O(\log n)$ time as max $a_v+b_e$, 
where $v$ ranges over the union of $\{i\}$ with
the set of left children of vertices on the root-to-$i$ path
that are not themselves on the path. 
Add $(e,q(e))$ to list $L_{h'}$. After processing $A_h$,
update the binary tree: for each $(e,q(e)) \in L_{h+1}$, let $i$ be the horizontal coordinate
of the head of $e$ and for all vertices $v$ on the root-to-$i$ path, replace $a_v$ by $\max(a_v,q(e))$.
The tree then satisfies the precondition for round $h+1$.

To obtain the output to $\APM$,
for each $J \in \cJ$, let $h(J)$ be the index of the last iteration for which $j_{h(J)} \leq \max(J)$.
The benefit of $I \times J^0$ is the value, at the end of iteration of $h(J)$  of $a_{v_0}$ where $v_0$ is the root.

For the runtime analysis:
It would take $\widetilde{O}(\mu(I))$ time to set up the full tree data structure so we will build it incrementally by expanding only the parts of the data 
structure that contain non-zero values. Hence, the set up cost of the data structure is $O(1)$. It takes $O(|\cR^+|\log |\cR^+|)$ time to sort the shortcuts, and
$O(\log \mu(I))$ processing time per shortcut (computing $q(e)$ and later updating the data structure), overall
giving runtime $\widetilde{O}(|\cR^+|+|\cJ|)$.

\bibliography{editDistance}

\end{document}